\title{Efficient Algorithms for \ESMT on Near-Convex Terminal Sets} 
\titlerunning{ESMT on Near-Convex Terminal Sets} 
\author{Anubhav Dhar}{Indian Institute of Technology Kharagpur, India}{anubhavdhar@kgpian.iitkgp.ac.in}{}{}
\author{Soumita Hait}{Indian Institute of Technology Kharagpur, India}{soumitahait7321@gmail.com}{}{}
\author{Sudeshna Kolay}{Indian Institute of Technology Kharagpur, India}{skolay@cse.iitkgp.ac.in}{}{}
\authorrunning{A.~Dhar, S.~Hait, S.~Kolay} 
\keywords{Steiner minimal tree, Euclidean Geometry, Almost Convex point sets, FPTAS, strong NP-completeness}
\newcommand{\defproblem}[3]{
  \vspace{1mm}
\noindent\fbox{
  \begin{minipage}{0.96\textwidth}
  \begin{tabular*}{\textwidth}{@{\extracolsep{\fill}}lr} #1 \\ \end{tabular*}
  {\bf{Input:}} #2  \\
  {\bf{Question:}} #3
  \end{minipage}
  }
  \vspace{1mm}
}
\newcommand{\C}[1]{\mathcal{#1}}
\newcommand{\ESMT}{\textsc{Euclidean Steiner Minimal Tree}\xspace}
\newcommand{\pnp}{$\mbox{P} = \mbox{NP}$\xspace}
\newcommand{\OO}{\mathcal{O}}
\newcommand{\smtpoly}{SMT for $\{A_i\} \cup \{B_i\}$ \xspace}
 \definecolor{babyblue}{rgb}{0.54, 0.81, 0.94}
 \definecolor{b1}{rgb}{0.63, 0.79, 0.95}
 \definecolor{b2}{rgb}{0.74, 0.83, 0.9}
 \definecolor{b3}{rgb}{0.67, 0.9, 0.93}
 \definecolor{gentlegreen}{rgb}{0.00, 0.51, 0.00}
\begin{document}

\maketitle

\begin{abstract}
    The \ESMT problem takes as input a set $\mathcal P$ of points in the Euclidean plane and finds the minimum length network interconnecting all the points of $\mathcal P$. In this paper, in continuation to the works of~\cite{du1987steiner} and ~\cite{weng1995steiner}, we study \ESMT when $\mathcal P$ is formed by the vertices of a pair of regular, concentric and parallel $n$-gons. 
    We restrict our attention to the cases where the two polygons are not very close to each other. In such cases, we show that \ESMT is polynomial-time solvable, and we describe an explicit structure of a Euclidean Steiner minimal tree for $\mathcal P$.
    
    We also consider point sets $\mathcal P$ of size $n$ where the number of input points not on the convex hull of $\mathcal P$ is $f(n) \leq n$. We give an exact algorithm with running time $2^{\OO(f(n)\log n)}$ for such input point sets $\mathcal P$. Note that when $f(n) = \OO(\frac{n}{\log n})$, our algorithm runs in single-exponential time, and when $f(n) = o(n)$ the running time is $2^{o(n\log n)}$ which is better than the known algorithm in~\cite{hwang1992steiner}. 
    
    We know that no FPTAS exists for \ESMT unless \pnp~\cite{garey1977complexity}. On the other hand FPTASes exist for \ESMT on convex point sets~\cite{scott1988convexity}. In this paper, we show that if the number of input points in $\mathcal P$ not belonging to the convex hull of $\mathcal P$ is $\OO(\log n)$, then an FPTAS exists for \ESMT. In contrast, we show that for any $\epsilon \in (0,1]$, when there are $\Omega(n^{\epsilon})$ points not belonging to the convex hull of the input set, then no FPTAS can exist for \ESMT unless \pnp. 
\end{abstract}

\section{Introduction}\label{sec:intro}

The \ESMT problem asks for a network of minimum total length interconnecting a given finite set $\mathcal P$ of $n$ points in the Euclidean plane. Formally, we define the problem as follows, taken from~\cite{brazil2014history}:

\defproblem{\ESMT}{A set $\mathcal P$ of $n$ points in the Euclidean plane}{Find a connected plane graph $\mathcal T$ such that $\mathcal P$ is a subset of the vertex set $V(\mathcal T)$, and for the edge set $E(\mathcal T)$, $\Sigma_{e \in E(\mathcal T)} \overline e$ is minimized over all connected plane graphs with $\mathcal P$ as a vertex subset.}

Note that the metric being considered is the Euclidean metric, and for any edge $e \in E(\mathcal T)$, $\overline e$ denotes the Euclidean length of the edge. Here, the input set $\C{P}$ of points is often called a set of {\it terminals}, the points in $\C{S} = V(\mathcal T) \setminus \C{P}$ are called {\it Steiner points}. A solution graph $\mathcal T$ is referred to as a {\it Euclidean Steiner minimal tree}, or simply an SMT.

 The \ESMT problem is a classic problem in the field of Computational Geometry. The origin of the problem dates back to Fermat (1601-1665) who proposed the problem of finding a point in the plane such that the sum of its distance from three given points is minimized. This is equivalent to finding the location of the Steiner point when given three terminals as input. Torricelli proposed a geometric solution to this special case of 3 terminal points. The idea was to construct equilateral triangles outside on all three sides of the triangle formed by the terminals, and draw their circumcircles. The three circles meet at a single point, which is our required Steiner point. This point came to be known as the {\it Torricelli point}. When one of the angles in the triangle is at least $120^\circ$, the minimizing point coincides with the obtuse angle vertex of the triangle. In this case, the Torricelli point lies outside the triangle and no longer minimizes the sum of distances from the vertices. However, when vertices of polygons with more than 3 sides are considered as a set of terminals, a solution to the Fermat problem does not in general lead to a solution to the \ESMT problem. For a more detailed survey on the history of the problem, please refer to~\cite{brazil2014history,hwang1992steiner}. For convenience, we refer to the \ESMT problem as ESMT.
 

 ESMT is NP-hard. In~\cite{garey1977complexity}, Garey et al.~prove a discrete version of the problem (Discrete ESMT) to be strongly NP-complete via a reduction from the \textsc{Exact Cover by 3-Sets} (X3C) problem. Although it is not known if the ESMT problem is in NP, it is at least as hard as any NP-complete problem. So, we do not expect a polynomial time algorithm for it. A recursive method using only Euclidean constructions was given by Melzak in~\cite{melzak1961problem} for constructing all the Steiner minimal trees for any set of $n$ points in the plane by constructing full Steiner trees of subsets of the points. Full Steiner trees are interconnecting trees having the maximum number of newly introduced points (Steiner points) where all internal junctions are of degree $3$. Hwang improved the running time of Melzak’s original exponential algorithm for full Steiner tree construction to linear time in~\cite{hwang1986linear}. Using this, we can construct an Euclidean Steiner minimal tree in $2^{\OO(n\log n)}$ time for any set of $n$ points. This was the first algorithm for \ESMT. The problem is known to be NP-hard even if all the terminals lie on two parallel straight lines, or on a bent line segment where the bend has an angle of less than $120^\circ$~\cite{rubinstein1997steiner}. Since the above sets of terminals all lie on the boundary of a convex polygon (or, are in convex position), this shows that ESMT is NP-hard when restricted to a set of points that are in weakly convex position.

 Although the ESMT problem is NP-hard, there are certain arrangements of points in the plane for which the Euclidean Steiner minimal tree can be computed efficiently, say in polynomial time. One such arrangement is placing the points on the vertices of a regular polygon. This case was solved by Du et al.~\cite{du1987steiner}. Their work gives exact topologies of the Euclidean Steiner minimal trees. Weng et al.~\cite{weng1995steiner} generalized the problem by incorporating the centre point of the regular polygon as part of the terminal set, along with the vertices. This case was also found to be polynomial time solvable.

Tractability in the form of approximation algorithms for ESMT has been extensively studied. It was proved in~\cite{garey1977complexity} that a fully polynomial time approximation scheme (FPTAS) cannot exist for this problem unless \pnp. However, we do have an FPTAS when the terminals are in convex position~\cite{scott1988convexity}. Arora's celebrated polynomial time approximation scheme (PTAS) for the ESMT and other related problems is described in~\cite{arora1998polynomial}. Around the same time, Rao and Smith gave an efficient polynomial time approximation scheme (EPTAS) in~\cite{rao1998approximating}. In recent years, an EPTAS with an improved running time was designed by Kisfaludi-Bak et al.~\cite{kisfaludi2022gap}. 
\paragraph*{Our Results.}

 In this paper, we first extend the work of~\cite{du1987steiner}  and~\cite{weng1995steiner}.
We state this problem as ESMT on $k$-Concentric Parallel Regular $n$-gons.

 \begin{definition}[$k$-Concentric Parallel Regular $n$-gons] \label{def:regular conc parallel polygons toh}
     $k$-Concentric Parallel Regular $n$-gons are $k$ regular $n$-gons that are concentric and where the corresponding sides of polygons are parallel to each other. 
 \end{definition}

 Please refer to Figure~\ref{fig:examples}(a) for an example of a $2$-Concentric Parallel Regular $12$-gon. We call $k$-Concentric Parallel Regular $n$-gons as $k$-CPR $n$-gons for short.

 We consider terminal sets where the terminals are placed on the vertices of $2$-CPR $n$-gons. In the case of $k=2$, the $n$-gon with the smaller side length will be called the inner $n$-gon and the other $n$-gon will be called the outer $n$-gon. Also, let $a$ be the side length of the inner $n$-gon, and $b$ be the side length of the outer $n$-gon. We define $\lambda = \frac{b}{a}$ and refer to it as the \emph{aspect ratio} of the two regular polygons. In~\Cref{polycase}, we derive the exact structures of the SMTs for $2$-CPR $n$-gons when the aspect ratio $\lambda$ of the two polygons is greater than $\frac{1}{1-4\sin{(\pi/n)}}$ and $n \geq 13$.

 Next, we consider ESMT on an $f(n)$-Almost Convex Point Set.
 \begin{definition}[$f(n)$-Almost Convex Point Set] \label{def:almost_conv}
     An $f(n)$-Almost Convex Point Set $\mathcal P$ is a set of $n$ points in the Euclidean plane such that there is a partition $\mathcal P = \mathcal P_1 \uplus \mathcal P_2$ where $\mathcal P_1$ forms the convex hull of $\mathcal P$ and $|\mathcal P_2| = f(n)$.
 \end{definition}
Please refer to Figure~\ref{fig:examples}(b) for an example of a $5$-Almost Convex Set of $13$ points. In~\Cref{sec:exact_algo}, we give an exact algorithm for ESMT on $f(n)$-Almost Convex Sets of $n$ terminals. The running time of this algorithm is $2^{\OO(f(n) \log n)}$. Thus, when $f(n) = \OO(\frac{n}{\log n})$, then our algorithm runs in $2^{\OO(n)}$ time, and when $f(n) = o(n)$ then the running time is $2^{o(n\log n)}$. This is an  improvement on the best known algorithm for the general case~\cite{hwang1992steiner}.

Next, for $f(n) = \OO(\log n)$, we give an FPTAS in~\Cref{subsec:fptas}. On the other hand in~\Cref{subsec:apx_hardness} we show that, for all $\epsilon \in (0,1]$, when $f(n) \in \Omega(n^\epsilon)$, there cannot exist any FPTAS unless \pnp.

\begin{figure}[h]
\centering
\subfloat[\centering $2$-CPR $12$-gons] {\includegraphics[width=6cm]{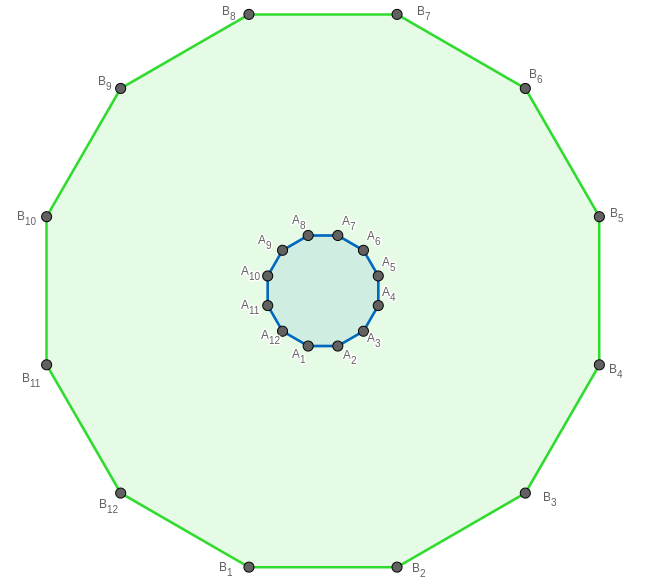}}
    \qquad
\subfloat[\centering \centering $f(n)$-Almost Convex Point Set for $n = 13$, $f(n) = 5$] {\includegraphics[width=6cm]{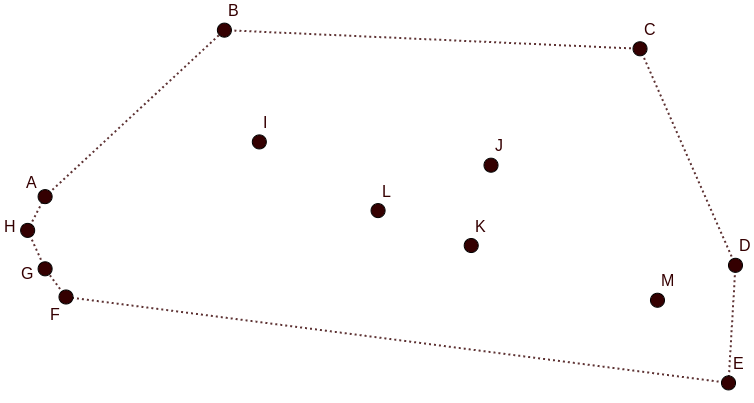}}
\caption{Examples for~\Cref{def:regular conc parallel polygons toh} and~\Cref{def:almost_conv}}
\end{figure}\label{fig:examples}

\section{Preliminaries}\label{sec:prelims}

\subparagraph{Notations.}
For a given positive integer $k \in \mathbb{N}$, the set of integers $\{1,2,\ldots,k\}$ is denoted for short as $[k]$. Given a graph $G$, the vertex set is denoted as $V(G)$ and the edge set as $E(G)$. Given two graphs $G_1$ and $G_2$, $G_1 \cup G_2$ denotes the graph $G$ where $V(G) = V(G_1) \cup V(G_2)$ and $E(G) = E(G_1) \cup E(G_2)$. 

In this paper, a regular $n$-gon is denoted by $A_1A_2A_3...A_n$ or $B_1B_2B_3...B_n$. For convenience, we define $A_{n + 1} := A_1$, $B_{n + 1} := B_1$, $A_0 := A_n$ and $B_0 := B_n$. We use the notation $\{A_i\}$ to denote the polygon $A_1A_2A_3 \ldots A_n$ and $\{B_i\}$ to denote the polygon $B_1B_2B_3 \ldots B_n$. For any regular polygon $A_1A_2A_3...A_n$, the circumcircle of the polygon is denoted as $(A_1A_2A_3...A_n)$. Given any $n$-vertex polygon in the Euclidean plane with vertices $\mathcal P = P_1P_2P_3\ldots P_n$, and interval in $\mathcal{K}$ is a subset of consecutive vertices $P_iP_{i+1\ldots P_j}$, $i,j\in [n]$, also denoted as $[P_i,P_j]$. Here $P_i$ is considered the starting vertex of the interval and $P_j$ the ending vertex. For any $P_k$, $i \leq k\leq j$ in the interval we will also use the notation $P_i \leq P_k \leq P_j$.

Given two points $P$, $Q$ in the Euclidean plane, we denote by ${\sf dist}(P,Q)$ the Euclidean distance between $P$ and $Q$. Given a line segment $AB$ in the Euclidean plane, $\overline{AB} = {\sf dist}(A,B)$. For two distinct points $A$ and $B$, $L_{AB}$ denotes the line containing $A$ and $B$; and $\overrightarrow{AB}$ denotes the ray originating from $A$ and containing $B$. 

When we refer to a graph $\mathcal{G}$ in the Euclidean plane then $V(\mathcal{G})$ is a set of points in the Euclidean plane, and $E(\mathcal{G})$ is a subset of the family of line segments $\{P_1P_2 | P_1,P_2 \in V(\mathcal{G})\}$. For any tree $\mathcal T$ in the Euclidean plane, we denote by the notation $|\mathcal T|$ the value of $\Sigma_{e \in E(\mathcal T)} \overline{e}$. A path in a tree $\mathcal T$ is uniquely specified by the sequence of vertices on the path; therefore, $P_1$, $P_2$, $P_3$, \ldots, $P_k$ (where $P_i \in V(\mathcal T), \forall i \in [k]$ and $P_iP_{i+1} \in E(\mathcal T), \forall i \in [k-1]$) denotes the path starting from the vertex $P_1$, going through the vertices $P_2$, $P_3$, \ldots, $P_{k-1}$ and finally ending at $P_k$. Equivalently, we can specify the same path as \emph{the path from $P_1$ to $P_k$}, since $\mathcal T$ is a tree. Consider the graph $T$ such that $V(T) = \{v_P| P \in V(\mathcal{T})\}$, $E(T) = \{v_{P_1}v_{P_2}| P_1P_2 \mbox{ is a line segment in } E(\mathcal{T})\}$. Then $T$ is said to be the topology of $\mathcal{T}$ while $\mathcal{T}$ is said to realize the topology $T$. Given two trees $\mathcal{T}_1$, $\mathcal{T}_2$ in the Euclidean plane, $\mathcal{T}' = \mathcal{T}_1\cup \mathcal{T}_2$ is the graph where $V(\mathcal{T}')= V(\mathcal{T}_1) \cup V(\mathcal{T}_2)$ and $E(\mathcal{T}')= E(\mathcal{T}_1) \cup E(\mathcal{T}_2)$. 

Given any graph $G$, a Steiner minimal tree or SMT for a terminal set $\mathcal{P} \subseteq V(G)$ is the minimum length connected subgraph $G'$ of $G$ such that $\mathcal{P} \subseteq V(G')$. The {\sc Steiner Minimal Tree} problem on graphs takes as input a set $\mathcal{P}$ of terminals and aims to find a minimum length SMT for $\mathcal{P}$. For the rest of the paper, we also refer to a Euclidean Steiner minimal tree as an SMT. Given a set of points $\mathcal{P}$ in the Euclidean plane, the convex hull of $\mathcal{P}$ is denoted as $\mathrm{CH(\mathcal P)}$.

\subparagraph{Euclidean Minimum Spanning Tree (MST).}
Given a set $\mathcal P$ of $n$ points in the Euclidean plane, let $G$ be a graph where $V(G) = \{v_P| P \in \mathcal P\}$ and $E(G) = \{v_{P_i}v_{P_j} | P_i,P_j \in \mathcal P\}$. Also, a weight function $w_{G}: E(T) \rightarrow \mathbb{R}$ is defined such that for each edge $v_{P_1}v_{P_2} \in E(T)$, $w_{G}(v_{P_1}v_{P_2}) = \overline{P_1P_2}$. The Euclidean minimum spanning tree of a set $\mathcal P$ is the minimum spanning tree of the graph $G$ with edge weights $w_G$. Note that a Steiner tree may have shorter length than a minimum spanning tree of the point set $\mathcal P$. 

In the plane, the Euclidean minimum spanning tree is a subgraph of the Delaunay triangulation. Using this fact, the Euclidean minimum spanning tree for a given set of points in the Euclidean plane can be found in $\OO(n\log n)$ time as discussed in \cite{Shamos1975ClosestpointP}. 

\subparagraph{Properties of a Euclidean Steiner minimal tree.}
A Euclidean Steiner minimal tree (SMT) has certain structural properties as given in~\cite{cockayne1967steiner}. We state them in the following Proposition.

\begin{proposition}\label{smt-prop}
Consider an SMT on $n$ terminals.
 \begin{enumerate}
   \item No two edges of the SMT intersect with each other.
 
   \item Each Steiner point has degree exactly $3$ and the incident edges meet at $120^\circ$ angles. The terminals have degree at most $3$ and the incident edges form angles that are at least $120^\circ$.
  
   \item The number of Steiner points is at most $n-2$, where $n$ is the number of terminals.

\end{enumerate}
\end{proposition}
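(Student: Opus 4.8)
The plan is to derive all three properties from two ingredients: an SMT exists for finitely many terminals in the plane (by the standard compactness/Melzak argument, which we take for granted here) and it has \emph{minimum} total edge length, so exhibiting any strictly shorter connected network that still spans $\mathcal P$ is a contradiction. Every item is then a local-exchange argument. Throughout, let $\mathcal T$ be the SMT, with $s$ Steiner points.

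First I would prove Property~1. Suppose two edges $PQ$ and $RS$ of $\mathcal T$ cross at an interior point $X$. Deleting both edges from the tree $\mathcal T$ leaves three components, and adding the two segments $PR$ and $QS$ (matched to the appropriate components) restores connectivity. By the triangle inequality, $\overline{PR}+\overline{QS}\le (\overline{PX}+\overline{XR})+(\overline{QX}+\overline{XS})=\overline{PQ}+\overline{RS}$, and equality forces $P,X,R$ and $Q,X,S$ collinear, which together with $P,X,Q$ and $R,X,S$ collinear puts all four points on one line, i.e.\ the two edges overlap rather than cross --- a case excluded since an overlapping pair of edges can be shortened outright. Hence the exchange is strict and $\mathcal T$ was not optimal, a contradiction.

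Next I would establish the angle condition by the Fermat--Torricelli (Melzak) move. If two edges $uv$ and $uw$ incident to a common vertex $u$ form an angle strictly less than $120^\circ$, then the triangle $uvw$ has its Fermat point $t$ strictly in its interior, and $\overline{tu}+\overline{tv}+\overline{tw}<\overline{uv}+\overline{uw}$; replacing $uv,uw$ by the three segments from $t$ (a new Steiner point) keeps the network a tree and strictly shortens it, contradicting optimality. So every pair of edges at every vertex subtends an angle $\ge 120^\circ$. Since the incident angles around any vertex sum to $360^\circ$, this forces degree $\le 3$ at every vertex, with all three angles equal to $120^\circ$ whenever the degree is exactly $3$; in particular every terminal has degree at most $3$. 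For a Steiner point $z$: degree $1$ is impossible (delete $z$ and its edge to shorten $\mathcal T$); and if $\deg z=2$ with edges $xz,zy$, replacing them by $xy$ shortens $\mathcal T$ unless $x,z,y$ are collinear, in which case $z$ is a superfluous subdivision point that may be dropped from the Steiner set. Thus every Steiner point has degree exactly $3$ with incident edges meeting at $120^\circ$, giving Property~2.

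Finally, Property~3 follows by counting: $\mathcal T$ has $n+s$ vertices and hence $n+s-1$ edges, so $\sum_{v}\deg(v)=2(n+s-1)$. Steiner points contribute exactly $3s$ and terminals contribute at least $n$ (degree $\ge 1$ each), so $2(n+s-1)\ge 3s+n$, which rearranges to $s\le n-2$. The main delicacy of the whole argument is the Fermat-point insertion step and the accompanying degeneracy bookkeeping --- ensuring the inserted point does not coincide with an existing vertex, that each exchange preserves a spanning tree (rather than merely a connected graph), and that the collinear/overlapping boundary cases are handled --- but each individual check is routine, and the conclusions coincide with the classical statement of \cite{cockayne1967steiner}.
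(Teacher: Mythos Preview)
The paper does not prove this proposition at all: it is stated as a classical fact with a citation to \cite{cockayne1967steiner}, and the three items are used as black boxes thereafter. Your local-exchange arguments are exactly the standard ones that reference contains, and they are correct in substance.

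Two small points of bookkeeping, both of which you flag in your closing caveat but do not quite resolve. In Property~1, once you delete $PQ$ and $RS$ the three resulting components need not be reconnected by the specific pair $\{PR,QS\}$: depending on which component $P$ lands in after the second deletion you may instead need $\{PS,QR\}$. Both candidate pairs satisfy the same strict triangle-inequality bound against $\overline{PQ}+\overline{RS}$, so the argument goes through after a one-line case split; the phrase ``matched to the appropriate components'' hides this. In Property~2, the Fermat point of $\triangle uvw$ is strictly interior only when \emph{all three} angles are below $120^\circ$, whereas your hypothesis constrains only the angle at $u$. The easy fix is to note that if the Fermat point coincides with $v$ (say), the replacement still strictly shortens because the side opposite the obtuse angle is longest; alternatively, use the infinitesimal version and slide $t$ from $u$ along the angle bisector, where the first-order change in total length is $1-2\cos(\theta/2)<0$ for $\theta<120^\circ$. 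Neither issue affects your conclusions.
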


 A full Steiner tree (FST) is a Steiner tree (need not be minimal, but may include Steiner points) having exactly $n-2$ Steiner points, where $n$ is the number of terminals. In an FST, all terminals are leaves and Steiner points are interior nodes. When the length of an FST is minimized, it is called a minimum FST.

All SMTs can be decomposed into FST components such that, in each component a terminal is always a  leaf. This decomposition is unique for a given SMT~\cite{hwang1992steiner}. A topology for an FST is called a full Steiner topology and that of a Steiner tree is called a Steiner topology.


\subparagraph{Steiner Hulls.}
A Steiner hull for a given set of points is defined to be a region which is known to contain an SMT. We get the following propositions from~\cite{hwang1992steiner}.

\begin{proposition}\label{convex-steiner}
    For a given set of terminals, every SMT is always contained inside the convex hull of those points. Thus, the convex hull is also a Steiner hull.
\end{proposition}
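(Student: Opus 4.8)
The plan is to rely on a single elementary fact: the nearest‑point projection $\pi$ onto a closed convex set $C \subseteq \mathbb{R}^2$ is well defined and satisfies $\overline{\pi(X)\pi(Y)} \le \overline{XY}$ for all points $X,Y$ (it is $1$‑Lipschitz), and it fixes every point of $C$. I would apply this with $C = \mathrm{CH}(\mathcal P)$. Given any SMT $\mathcal T$ for $\mathcal P$, form the plane graph $\mathcal T'$ with vertex set $\{\pi(v) : v \in V(\mathcal T)\}$ and an edge $\pi(u)\pi(v)$ for every $uv \in E(\mathcal T)$. Since $\pi$ is $1$‑Lipschitz, $|\mathcal T'| \le |\mathcal T|$; since $\mathcal T$ is connected, so is $\mathcal T'$; and since every terminal lies in $\mathrm{CH}(\mathcal P)$ it is fixed by $\pi$, so $\mathcal P \subseteq V(\mathcal T')$. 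Thus any spanning subtree of $\mathcal T'$ is a Steiner tree for $\mathcal P$ of length at most $|\mathcal T'| \le |\mathcal T|$. This already shows $\mathrm{CH}(\mathcal P)$ contains an SMT, which gives the ``thus'' clause. Moreover, by minimality of $\mathcal T$ we get $|\mathcal T| \le |\mathcal T'| \le |\mathcal T|$, so $\pi$ preserves the length of \emph{every} edge of $\mathcal T$: $\overline{\pi(u)\pi(v)} = \overline{uv}$ for all $uv \in E(\mathcal T)$.

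To get the stronger statement that \emph{every} SMT lies entirely inside $\mathrm{CH}(\mathcal P)$, I would argue by contradiction. If some point of $\mathcal T$ lies outside $\mathrm{CH}(\mathcal P)$, then, because the segment joining two points of $\mathrm{CH}(\mathcal P)$ stays inside it, some vertex of $\mathcal T$ lies outside $\mathrm{CH}(\mathcal P)$; it cannot be a terminal, so it is a Steiner point $S$. By the separating hyperplane theorem there is a unit vector $u$ with $\langle S, u\rangle > \max_{p\in\mathcal P}\langle p,u\rangle$; set $h := \max_{v\in V(\mathcal T)}\langle v,u\rangle$ and $h_0 := \max_{p\in\mathcal P}\langle p,u\rangle$, so $h \ge \langle S,u\rangle > h_0$. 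Now repeat the contraction argument with the half‑plane $H = \{x : \langle x,u\rangle \le h_0\} \supseteq \mathrm{CH}(\mathcal P)$ in place of $\mathrm{CH}(\mathcal P)$, i.e.\ ``fold'' $\mathcal T$ onto the line $\langle x,u\rangle = h_0$ via the projection $\phi$ onto $H$; as before, minimality forces $\phi$ to preserve the length of every edge of $\mathcal T$.

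The key computation, carried out in the coordinate $\langle\cdot,u\rangle$, is that for an edge $vw$ the projection $\phi$ shortens it strictly unless either both endpoints already satisfy $\langle \cdot,u\rangle \le h_0$, or $\langle v,u\rangle = \langle w,u\rangle$ (the edge is parallel to $\partial H$); this is a one‑line expansion of $|\phi(v)-\phi(w)|^2$. Apply this to a vertex $v^\star$ attaining the maximum, $\langle v^\star,u\rangle = h > h_0$, and any neighbour $w$ of $v^\star$ in $\mathcal T$: the first alternative fails, so we must have $\langle w,u\rangle = h$ as well. Hence the nonempty set $V^\star = \{v\in V(\mathcal T) : \langle v,u\rangle = h\}$ is closed under taking neighbours in $\mathcal T$; since $\mathcal T$ is connected this forces $V^\star = V(\mathcal T)$, contradicting the fact that the terminals lie in $V(\mathcal T)$ but satisfy $\langle p,u\rangle \le h_0 < h$. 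Therefore no point of $\mathcal T$ lies outside $\mathrm{CH}(\mathcal P)$.

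The routine part is the $1$‑Lipschitz property and the algebra in the last paragraph. The one genuine subtlety — and the step I expect to be the main obstacle — is that the bare contraction argument only shows that \emph{some} SMT can be taken inside $\mathrm{CH}(\mathcal P)$: to rule out an SMT that sticks out, one must handle edges that ``fold flat'' (parallel to a supporting line) without getting a strict gain. The device that resolves this is to push to an \emph{extremal} vertex in a direction that separates the outside part from the hull, where parallel‑to‑the‑supporting‑line forces a neighbour to be equally extremal, and then invoke connectedness of $\mathcal T$ together with the fact that no terminal is extremal in that direction.
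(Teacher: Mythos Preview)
Your argument is correct. Note, however, that the paper does not supply its own proof of this proposition: it is quoted in the preliminaries as a known fact from \cite{hwang1992steiner}, so there is no in-paper proof to compare against.

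For context, a shorter route is available once one is willing to use Proposition~\ref{smt-prop} (also quoted without proof). Take, as you do, a vertex $v^\star\in V(\mathcal T)$ maximising $\langle\cdot,u\rangle$, with $h=\langle v^\star,u\rangle>h_0$. Then $v^\star$ is a Steiner point, so the three unit direction vectors of its incident edges are at mutual angles $120^\circ$ and hence sum to $0$. Their $u$-components are all $\le 0$ by extremality of $v^\star$, so they are all $0$; but three distinct directions at $120^\circ$ cannot all lie in the line $u^\perp$. This bypasses the projection device and the ``edges that fold flat'' subtlety you flag. Your approach buys something in return: it never invokes the $120^\circ$ structure, so it actually shows that \emph{any} length-minimising connected plane graph through $\mathcal P$ must lie in $\mathrm{CH}(\mathcal P)$, and the first half already yields the Steiner-hull clause (``some SMT lies in $\mathrm{CH}(\mathcal P)$'') without the extremal step.
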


The next two propositions are useful in restricting the structure of SMTs and the location of Steiner points.

\begin{proposition} [The Lune property]\label{lune}
    Let $\rm UV$ be any edge of an SMT. Let $L(\rm{U},\rm{V})$ be the lune-shaped intersection of circles of radius $|\rm UV|$ centered on $\rm U$ and $\rm V$. No other vertex of the SMT can lie in $L(\rm{U},\rm{V})$, except $U$ and $V$ themselves.
\end{proposition}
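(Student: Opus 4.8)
The plan is to prove the statement by a short exchange (rerouting) argument. Suppose, towards a contradiction, that some vertex $W$ of the SMT $\mathcal T$ with $W \neq U$ and $W \neq V$ lies in $L(U,V)$. Since $L(U,V)$ is the intersection of the open disks of radius $\overline{UV}$ centred at $U$ and at $V$, this gives $\overline{UW} < \overline{UV}$ and $\overline{VW} < \overline{UV}$. From $\mathcal T$ I will build a connected plane graph whose vertex set contains $\mathcal P$ and whose total length is strictly less than $|\mathcal T|$, contradicting the minimality of $\mathcal T$.

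First I would delete the edge $UV$ from $\mathcal T$. As $\mathcal T$ is a tree, $\mathcal T - UV$ has exactly two connected components, both trees; let $T_U$ be the one containing $U$ and $T_V$ the one containing $V$. Every vertex of $\mathcal T$, in particular $W$, lies in exactly one of $T_U, T_V$. I then split into two symmetric cases. If $W \in T_U$, let $\mathcal T'$ be the graph obtained from $T_U \cup T_V$ by adding the edge $WV$. Since $W$ and $V$ lie in different components, $\mathcal T'$ is connected; its vertex set equals $V(\mathcal T)$ and hence contains $\mathcal P$; and $|\mathcal T'| = |\mathcal T| - \overline{UV} + \overline{WV} < |\mathcal T|$. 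If instead $W \in T_V$, do the symmetric thing, adding the edge $WU$ and using $\overline{UW} < \overline{UV}$. In either case we have produced a strictly shorter network spanning $\mathcal P$, contradicting that $\mathcal T$ is an SMT; hence no such $W$ exists.

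The only delicate point --- and hence the main, if mild, obstacle --- is to check that the rerouted object is admissible for the minimisation defining \ESMT. The newly added segment ($WV$ or $WU$) is not already an edge of $\mathcal T$, because its endpoints lie in different components of $\mathcal T - UV$, so $\mathcal T'$ is a genuine simple graph. It may fail to be a \emph{plane} graph if the new segment crosses an old edge; in that case one planarises $\mathcal T'$ by inserting a new vertex at each crossing point, which does not change the total length and yields a bona fide plane graph still strictly shorter than $\mathcal T$, so the contradiction stands. Finally, I would remark that the argument uses the open-lune reading of $L(U,V)$; the closed-lune version follows by a routine limiting argument (equivalently, one adopts the open lune by convention), so nothing essential changes.
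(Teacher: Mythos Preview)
Your exchange argument is correct and is exactly the standard proof of the Lune property. Note, however, that the paper does not supply its own proof of this proposition: it is quoted as a known structural fact from \cite{hwang1992steiner} and used as a black box. So there is no paper proof to compare against; your argument simply fills in the (classical) justification that the paper omits.

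One small remark on your handling of the ``delicate point'': the planarisation step you describe is fine, but in fact you can avoid it entirely. The definition of \ESMT in the paper asks for a minimum-length connected plane graph containing $\mathcal P$; any connected Euclidean graph on a point set containing $\mathcal P$ with total length strictly less than $|\mathcal T|$ already yields a contradiction, because its minimum spanning tree (which is a plane graph, after subdividing at crossings as you note, or directly since an MST of points in the plane is planar) is no longer and still spans $\mathcal P$. Either way the contradiction goes through, so your proof is complete as written.
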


\begin{proposition} [The Wedge property]\label{wedge}
    Let $W$ be any open wedge-shaped region having angle $120^\circ$ or more and containing none of the points from the input terminal set $\mathcal P$. Then $W$ contains no Steiner points from an SMT of $\mathcal P$.
\end{proposition}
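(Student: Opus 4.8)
The plan is to argue by contradiction. Suppose some Euclidean Steiner minimal tree $\mathcal T$ of $\mathcal P$ has a Steiner point $s_0$ inside the open wedge $W$, which has apex $O$, angle at least $120^\circ$, and satisfies $W \cap \mathcal P = \emptyset$. If the angle of $W$ is strictly larger than $120^\circ$, I first pass to a $120^\circ$ open sub-wedge with the same apex $O$ that still contains $s_0$ and still avoids $\mathcal P$ — the angular coordinate of $s_0$ about $O$ can be surrounded by a closed $120^\circ$ angular interval lying strictly inside the longer angular span of $W$. So assume $W$ has angle exactly $120^\circ$, with bounding rays $r_1$ and $r_2$.

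Let $\bar N$ be the closure of the connected component of $\mathcal T \cap W$ containing $s_0$. Since $W$ is convex and contains no terminal, $\bar N$ is a finite topological tree lying in the closed cone $\bar W$ whose interior vertices are Steiner points of $\mathcal T$ in $W$ — hence of degree $3$ with incident edges meeting at $120^\circ$, by~\Cref{smt-prop} — and whose leaves are the finitely many points $q_1, \dots, q_m$ at which edges of $\mathcal T$ cross out of $W$; each $q_i$ lies on $r_1 \cup r_2$, and a degree count on $\bar N$ shows that $m$ equals two plus the number of interior Steiner points of $\bar N$, so $m \ge 3$. Moreover $\bar N$ is a minimum-length tree spanning $\{q_1, \dots, q_m\}$: a shorter tree $\bar N'$ on this set would let us cut the $m$ edges of $\mathcal T$ at the $q_i$, discard $\bar N$, and reattach through $\bar N'$, producing a shorter connected network spanning $\mathcal P$ (no terminal is lost, since $W \cap \mathcal P = \emptyset$), which after removing any crossings — an operation that only shortens — contradicts optimality of $\mathcal T$. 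Hence the whole statement reduces to the claim that \emph{a Steiner minimal tree of a finite point set lying on two rays from a common point, the angle between the rays being at least $120^\circ$, has no Steiner points} — and this I will contradict, since $\bar N$ is such a tree while, by assumption, containing a Steiner point.

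To treat the reduced claim, write $Q = \{q_1, \dots, q_m\}$ and use that the $\ge 120^\circ$ angle at $O$ forces obtuseness: in the triangle formed by $O$ and one point on each ray, the two non-apex angles sum to at most $60^\circ$, so each is at most $60^\circ$. First, consider the ``bent path'' $\pi$ that runs along $r_1$ from the point of $Q$ farthest from $O$ inward to the nearest point of $Q$ on $r_1$, then the bridging segment to the nearest point of $Q$ on $r_2$, then along $r_2$ outward to the farthest; $\pi$ is a minimum spanning tree of $Q$, every angle of which is $180^\circ$ except at the two bridge vertices, where it is $180^\circ$ minus one of the $\le 60^\circ$ angles above, hence at least $120^\circ$. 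By the classical fact that a minimum spanning tree all of whose angles are at least $120^\circ$ is a Steiner minimal tree (see~\cite{hwang1992steiner}), $\pi$ is a Steiner-point-free Steiner minimal tree for $Q$. Second, the same computation shows that every three points of $Q$ span a (possibly degenerate) triangle with an angle of at least $120^\circ$; hence no full Steiner component on three or more points of $Q$ can be nondegenerate — its Fermat point would collapse onto a terminal — so \emph{every} Steiner minimal tree of $Q$, in particular $\bar N$, is Steiner-point-free, contradicting the assumption. As a sanity check one also sees directly that no interior Steiner point of $\bar N$ can maximize the linear coordinate along the bisector of the cone, since the three $120^\circ$-separated edges there sum to zero and so cannot all have nonpositive component in that direction.

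The step I expect to cost the most work is making the last paragraph airtight: the full-component analysis behind ``every Steiner minimal tree of $Q$ is Steiner-point-free'' must be carried through for components on four or more terminals, not only for triangles, since the obtuseness fact does not obviously survive a Melzak-type reduction; and in the surgery on $\mathcal T$ one must verify that swapping $\bar N$ for a shorter network, followed by uncrossing, really yields a shorter \emph{plane} graph. A minor technical point is the genericity assumption that $\partial W$ passes through no vertex of $\mathcal T$ and crosses its edges transversally, the degenerate configurations being recovered by a limiting argument.
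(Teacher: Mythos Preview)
The paper does not prove this proposition; it is quoted in the preliminaries as a known structural fact from~\cite{hwang1992steiner}, so there is no proof in the paper to compare against. I therefore assess your argument on its own.

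Your main route has a real gap at step~4, which you partly acknowledge. Showing that every triple from $Q$ has an angle $\ge 120^\circ$ rules out nondegenerate full components on three terminals, but says nothing about full components on four or more, and there is no general theorem ``all triples obtuse $\Rightarrow$ every SMT is Steiner-free'' to invoke. Worse, the statement you actually need --- that no SMT of $Q$ has a Steiner point in the open wedge --- is precisely the wedge property for the subinstance $Q$, so the reduction threatens to be circular unless you prove it by independent means. Your ``sanity check'' is the right independent means, but as written it is also incomplete: knowing that one edge from the bisector-maximal Steiner point has positive bisector-component is no contradiction, because its other endpoint could simply be a leaf $q_i\in\partial W$ with still larger bisector-coordinate. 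The missing geometric fact is stronger: at any Steiner point the three edge directions are $120^\circ$ apart, so one of them lies within $60^\circ$ of the outward bisector direction $d$; and a ray issued from any point of the open $120^\circ$ wedge in a direction within $60^\circ$ of $d$ never leaves the open wedge. Hence that edge terminates at a vertex still in $W$, necessarily a Steiner point (no terminals in $W$), and the step repeats without backtracking since the reverse edge makes angle $\ge 120^\circ$ with $d$. This yields an infinite simple path in a finite tree --- the contradiction. With this fix you can apply the walk directly in $\mathcal T$, and the $\bar N$ surgery together with the planarity and genericity worries become unnecessary.
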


\subparagraph{Approximation Algorithms.}
We define all the necessary terminology required in terms of a minimization problem, as ESMT is a minimization problem.


\begin{definition} [Efficient Polynomial Time Approximation Scheme (EPTAS)]
    An algorithm is called an efficient polynomial time approximation scheme (EPTAS) for a problem if it takes an input instance and a parameter $\epsilon > 0$, and outputs a solution with approximation factor $(1+\epsilon)$ for a minimization problem in time $f(1/\epsilon)n^{\OO(1)}$ where $n$ is the input size and $f(1/\epsilon)$ is any computable function.
\end{definition}

\begin{definition} [Fully Polynomial Time Approximation Scheme (FPTAS)]
    An algorithm is called a fully polynomial time approximation scheme (FPTAS) for a problem if it takes an input instance and a parameter $\epsilon > 0$, and outputs a solution with approximation factor $(1+\epsilon)$ for a minimization problem in time $(1/\epsilon)^{\OO(1)}n^{\OO(1)}$ where $n$ is the input size.
\end{definition}


\section{Polynomial cases for \ESMT}\label{polycase}
In this section, we consider the \ESMT problem for $2$-CPR $n$-gons. Throughout the section, we denote the inner $n$-gon as $\{A_i\}$ and the outer $n$-gon as $\{B_i\}$. First, we consider the configuration of an Euclidean Steiner minimal tree in a subsection of the annular area between $\{A_i\}$ and $\{B_i\}$, which will form an isosceles trapezoid. Next, we consider the simple but illustrative case of $n = 3$. Finally we prove our result for all $n$.

\subsection{Isosceles Trapezoids and Vertical Forks} \label{trapezoids}

In this section, we discuss one particular Steiner topology when the terminal set is  formed by the four corners of a given isosceles trapezoid. However, we will limit the discussion to only the isosceles trapezoids such that the angle between the non-parallel sides is of the form $\frac{2 \pi}n$ where $n \in \mathbb{N}$, $n \ge 4$. The reason is that given $2$-CPR $n$-gons $\{A_i\}$, $\{B_i\}$, for $n \geq 4$ and for any $j \in \{1,\ldots, n-1\}$, the region $A_jA_{j + 1}B_jB_{j + 1}$ is an isosceles trapezoid such that the angle between the non-parallel sides is of the form $\frac{2 \pi}n$.

\begin{figure}[h]
\centering
\includegraphics[width=5.5cm]{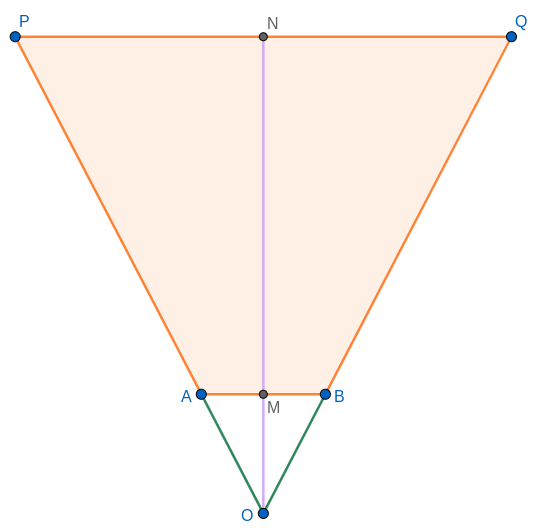}
\caption{Isosceles trapezoid with $\angle AOB = \frac {2 \pi} 8$}
\label{trap_def}
\end{figure}

Let $ABQP$ be an isosceles trapezoid with $AB$, $PQ$ as the parallel sides, and $AP$, $BQ$ as the non-parallel sides. Assume without loss of generality that $AB$ is shorter in length than $PQ$. Let $\overline{AB} = 1$ and $\overline{PQ} = \lambda$, where $\lambda \geq \frac {\sqrt 3 + \tan{ \frac \pi n }} {\sqrt 3 - \tan{ \frac \pi n}} $. For brevity, we say $\lambda_v = \frac {\sqrt 3 + \tan{ \frac \pi n }} {\sqrt 3 - \tan{ \frac \pi n}}$. Let $L_{PA}$ and $L_{QB}$ be the lines containing the line segments $PA$ and $QB$ respectively. Also let $O$ be the point of intersection of $L_{PA}$ and $L_{QB}$. Further, let $M$ and $N$ be the midpoints of $AB$ and $PQ$ respectively (as in Figure \ref{trap_def}). As mentioned earlier, $\angle AOB = \frac{2 \pi}n$ where $n \in \mathbb{N}$, $n \ge 4$.

Now, we explore the following Steiner topology of the terminal set $\{A, B, P, Q\}$:
\begin{enumerate}
\item $A$ and $B$ are connected to a Steiner point $S_1$.
\item $P$ and $Q$ are connected to another Steiner point $S_2$. 
\item $S_1$ and $S_2$ are directly connected (Please see Figure \ref{T_vf}). 
\end{enumerate}
We call such a topology a \emph{vertical fork topology} and the Steiner tree realising such a topology, the \emph{vertical fork}. Note that in a vertical fork topology the only unknowns are the locations of the two Steiner points $S_1,S_2$. Therefore, we have the vertical fork topology as $T_{vf}$, with $E(T_{vf}) = \{AS_1, BS_1, S_1S_2, S_2P, S_2Q\}$.

\begin{figure}[h]
\centering
\includegraphics[width=5.5cm]{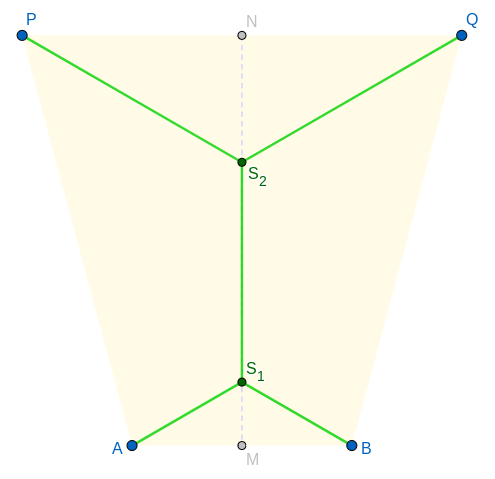}
\caption{The Vertical Fork, $\mathcal T_{vf}$}
\label{T_vf}
\end{figure}



We show the existence of a vertical fork and calculate its total length in the following lemma.

\begin{lemma} \label{lambda_v}
A vertical fork $\mathcal T_{vf}$ can be constructed for any $n \ge 4$ and for any $\lambda \ge \lambda_v$, where

$$\lambda_v = \frac {\sqrt 3 + \tan{ \frac \pi n }} {\sqrt 3 - \tan{ \frac \pi n}}$$    
such that the length of the vertical fork   
$$|\mathcal T_{vf}| = \dfrac{(\lambda - 1)}{2 \tan \frac \pi n} + \dfrac {\sqrt 3 (\lambda + 1)} {2}$$

\end{lemma}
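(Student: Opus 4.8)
The plan is to exploit the mirror symmetry of the isosceles trapezoid $ABQP$ about the line $L_{MN}$, which also passes through the apex $O$. In any realization of the topology $T_{vf}$ respecting this symmetry, both Steiner points $S_1$ and $S_2$ must lie on $L_{MN}$, with $S_1$ on the segment $MS_2$ and $S_2$ on the segment $S_1N$. First I would record $\overline{MN}$: since $L_{MN}$ bisects $\angle AOB = \frac{2\pi}{n}$ and meets $AB$, $PQ$ perpendicularly at their midpoints $M$, $N$, the right triangles $OMA$ and $ONP$ give $\overline{OM} = \frac{1}{2\tan(\pi/n)}$ and $\overline{ON} = \frac{\lambda}{2\tan(\pi/n)}$, so that $\overline{MN} = \overline{ON} - \overline{OM} = \frac{\lambda - 1}{2\tan(\pi/n)}$.

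Next I would locate $S_1$ and $S_2$ from the $120^\circ$ condition of \Cref{smt-prop}. By symmetry the edge $S_1S_2$ runs along $L_{MN}$, pointing from $S_1$ away from $M$; hence the meeting condition at $S_1$ amounts to $\angle AS_1S_2 = \angle BS_1S_2 = 120^\circ$ (which in turn forces $\angle AS_1B = 120^\circ$), i.e. $\angle MS_1A = 60^\circ$. In the right triangle $MS_1A$ with $\overline{MA} = \frac12$ this yields $\overline{MS_1} = \frac{1}{2\sqrt3}$ and $\overline{AS_1} = \overline{BS_1} = \frac{1}{\sqrt3}$. The identical computation at $S_2$, now with $\overline{NP} = \frac{\lambda}{2}$, gives $\angle NS_2P = 60^\circ$, $\overline{NS_2} = \frac{\lambda}{2\sqrt3}$ and $\overline{PS_2} = \overline{QS_2} = \frac{\lambda}{\sqrt3}$. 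One then checks that $S_1$, $S_2$ and all five segments lie inside the trapezoid and no two segments cross, so the network is a legitimate plane tree realizing $T_{vf}$.

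The construction is feasible precisely when $S_1$ and $S_2$ can be placed on $MN$ in the order $M, S_1, S_2, N$, i.e. when $\overline{MS_1} + \overline{NS_2} \le \overline{MN}$. Substituting the lengths found above, this is $\frac{1}{2\sqrt3} + \frac{\lambda}{2\sqrt3} \le \frac{\lambda - 1}{2\tan(\pi/n)}$, which rearranges to $\lambda\left(\sqrt3 - \tan\frac\pi n\right) \ge \sqrt3 + \tan\frac\pi n$; since $\tan\frac\pi n < \tan\frac\pi 3 = \sqrt3$ for $n \ge 4$, the denominator is positive and this is exactly $\lambda \ge \lambda_v$. Under this hypothesis $\overline{S_1S_2} = \overline{MN} - \overline{MS_1} - \overline{NS_2} = \frac{\lambda - 1}{2\tan(\pi/n)} - \frac{\lambda + 1}{2\sqrt3}$, and summing the five edge lengths,
\[
|\mathcal T_{vf}| = \frac{2}{\sqrt3} + \frac{2\lambda}{\sqrt3} + \overline{S_1S_2} = \frac{\lambda - 1}{2\tan(\pi/n)} + \frac{3(\lambda + 1)}{2\sqrt3} = \frac{\lambda - 1}{2\tan(\pi/n)} + \frac{\sqrt3(\lambda + 1)}{2},
\]
as claimed. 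The only genuinely delicate point is verifying that the feasibility inequality collapses exactly to the threshold $\lambda_v$ (and that $\lambda_v$ is finite and exceeds $1$ for every $n \ge 4$, so the statement is nonvacuous); the rest is a direct trigonometric computation, and optimality of $\mathcal T_{vf}$ is not asserted here but addressed later.
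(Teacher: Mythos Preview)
Your proof is correct and follows essentially the same approach as the paper: both locate $S_1$, $S_2$ on the axis $L_{MN}$, compute $\overline{MS_1}=\frac{1}{2\sqrt3}$, $\overline{NS_2}=\frac{\lambda}{2\sqrt3}$, $\overline{MN}=\frac{\lambda-1}{2\tan(\pi/n)}$, reduce the feasibility condition $\overline{MS_1}+\overline{NS_2}\le\overline{MN}$ to $\lambda\ge\lambda_v$, and sum the edge lengths. The only cosmetic difference is that the paper constructs $S_1,S_2$ via the classical equilateral-triangle/circumcircle (Torricelli) construction, whereas you read off their positions directly from the $120^\circ$ angle condition and elementary right-triangle trigonometry; the resulting computations are identical.
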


\begin{proof}
First, we construct the Steiner points $S_1$, $S_2$ and then prove that the construction works.

In the following construction, we describe how to find the locations of $S_1$ and $S_2$ for  the vertical fork:
\begin{itemize}
    \item We construct equilateral triangles $ABE$ and $PQF$ where both points $E$ and $F$ lie outside the trapezoid $ABQP$.
    \item We construct the circumcircles $(ABE)$ and $(PQF)$ of $ABE$ and $PQF$, respectively. 
    \item Recall that $L_{MN}$ is the line segment containing $M$ and $N$. Define $S_1$ to be the point of intersection of $L_{MN}$ and the circle $(ABE)$ distinct from $E$; similarly, $S_2$ is the point of intersection of the $L_{MN}$ and $(PQF)$ distinct from $F$. Therefore, by construction, $S_2$, $M$ must lie on the same side of $N$ on $L_{MN}$, and $S_1$, $N$ must lie on the same side of $M$ on $L_{MN}$. Further, $\angle AS_1B = \angle PS_2Q = \frac{2 \pi}{3}$ by construction.
\end{itemize}

We now show that the points $S_1$ and $S_2$ indeed lie inside the line segment $MN$ and the points appear in the order: $M$, $S_1$, $S_2$, $N$. We prove the following claim to serve this purpose.

\begin{claim} \label{S1_S2_in_MN}
$\overline{S_1M} + \overline{S_2N} \le \overline{MN}$

\end{claim}
\begin{proof}
We have $\overline{MN} = \overline {ON} - \overline{OM} = \dfrac{\lambda}{2 \tan \frac{\pi}{n}} - \dfrac{1}{2 \tan \frac{\pi}{n}} = \dfrac{(\lambda - 1)}{2 \tan \frac \pi n}$\\
Again $\overline{S_1M} = \dfrac{1}{2\sqrt{3}}$ and $\overline{S_2N} = \dfrac{\lambda}{2\sqrt{3}}$.\\
Therefore we have 
\begin{align*}
   & \overline{MN} - (\overline{S_1M} + \overline{S_2M}) \\
 = & \dfrac{(\lambda - 1)}{2 \tan \frac \pi n} - \dfrac 1 {\sqrt{3}} - \dfrac \lambda {2\sqrt 3}\\
 = & \dfrac{(\lambda - 1)}{2 \tan \frac \pi n} - \dfrac {(\lambda + 1)} {2\sqrt{3}}\\
 = & \frac{\lambda(\sqrt{3} - \tan \frac{\pi}{n}) - (\sqrt{3} + \tan \frac{\pi}{n})}{2 \sqrt{3} \tan \frac \pi n}\\
 = & \frac{\sqrt{3} - \tan \frac{\pi}{n}}{2 \sqrt{3} \tan \frac \pi n} \cdot (\lambda - \lambda_v)
\end{align*}

Therefore $\lambda \ge \lambda_v$ implies $\overline{MN} \ge (\overline{S_1M} + \overline{S_2M})$. This proves~\Cref{S1_S2_in_MN}.
\end{proof}

From~\Cref{S1_S2_in_MN}, we get $\overline{S_1M}, \overline{S_2N} \le \overline{MN}$. As $S_2$, $M$ lie on the same side of $N$ on $L_{MN}$ and $S_1$, $N$ lie on the same side of $M$ on $L_{MN}$, this implies that $S_1$, $S_2$ lie on the line segment $MN$. Further, $\overline{S_1M} + \overline{S_2N} \le \overline{MN}$ implies $\overline{S_1M} \le \overline {S_2M} \le \overline{NM}$, which in turn implies that the points appear in the order: $M$, $S_1$, $S_2$. $N$.\\

Now, we calculate the total length of the vertical fork, $|\mathcal T_vf|$:
\begin{align*}
  & |\mathcal T_{vf}|\\
= & \overline{AS_1} + \overline{BS_1} + \overline{S_1S_2} + \overline{PS_2} + \overline{QS_2}\\
= & 2 \overline{PS_2} + 2 \overline{AS_1} + \overline{S_1S_2}\\
= & \dfrac{2 \lambda} {\sqrt 3} + \dfrac 2 {\sqrt 3} + \bigg( \dfrac{(\lambda - 1)}{2 \tan \frac \pi n} - \dfrac {(\lambda + 1)} {2\sqrt{3}} \bigg)\\
= & \dfrac{(\lambda - 1)}{2 \tan \frac \pi n} + \dfrac {\sqrt 3 (\lambda + 1)} {2}
\end{align*}













This completes the proof of~\Cref{lambda_v}.
\end{proof}

\subsection{\ESMT for $2$-Concentric Parallel Regular $3$-gons} \label{triangles}

Note that a regular $3$-gon is an equilateral triangle and therefore, for the rest of this section we call a regular $3$-gon as an equilateral triangle. We describe a minimal solution for \ESMT for $2$-CPR equilateral triangles.

\begin{lemma}\label{3-gon}
Consider two concentric and parallel equilateral triangles $A_1A_2A_3$ and $B_1B_2B_3$, where $B_1B_2B_3$ has side length $\lambda > 1$ and $A_1A_2A_3$ has side length $1$. An SMT of $\{A_1, A_2, A_3, B_1, B_2, B_3\}$ is an SMT of $\{B_1,B_2,B_3\}$, and has length $\sqrt 3 \cdot \lambda$. 
\end{lemma}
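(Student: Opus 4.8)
\textbf{Proof plan for Lemma~\ref{3-gon}.}

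The plan is to show two things: first, that $\sqrt{3}\cdot\lambda$ is an achievable length for interconnecting all six terminals, and second, that no Steiner tree on $\{A_1,A_2,A_3,B_1,B_2,B_3\}$ can be shorter. For the upper bound, I would invoke the known result of Du et al.~\cite{du1987steiner} (or the classical Torricelli construction) for a single equilateral triangle: an SMT of the three vertices of a regular triangle of side length $\lambda$ consists of three segments joining the vertices to the common center, with total length $\sqrt{3}\cdot\lambda$. Since $\{A_i\}$ is concentric with $\{B_i\}$ and has smaller side length $1<\lambda$, all three inner vertices $A_1,A_2,A_3$ lie strictly inside the triangle $B_1B_2B_3$; in fact, because the triangles are parallel and concentric, each $A_j$ lies on the segment from the common center $O$ to $B_j$. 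Hence the ``star'' tree $\{OB_1,OB_2,OB_3\}$ already passes through each $A_j$ (after subdividing the edge $OB_j$ at $A_j$), so it is a valid connected plane graph containing all six terminals, of total length exactly $\sqrt{3}\cdot\lambda$. This gives $|{\rm SMT}| \le \sqrt{3}\cdot\lambda$.

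For the lower bound, the key observation is that any connected plane graph spanning $\{B_1,B_2,B_3\}$ has length at least that of an SMT of $\{B_1,B_2,B_3\}$, which is $\sqrt{3}\cdot\lambda$ by the single-triangle result. An SMT $\mathcal{T}$ of the six-point set $\{A_1,A_2,A_3,B_1,B_2,B_3\}$ is in particular a connected plane graph containing $B_1,B_2,B_3$ (along with the other three terminals and possibly Steiner points); therefore $|\mathcal{T}| \ge \sqrt{3}\cdot\lambda$ as well. Combining the two bounds yields $|\mathcal{T}| = \sqrt{3}\cdot\lambda$, and the star tree described above is an SMT. Moreover, since the star tree restricted to $\{B_1,B_2,B_3\}$ (i.e. the same three segments $OB_j$) is exactly the SMT of the outer triangle, the statement ``an SMT of $\{A_1,A_2,A_3,B_1,B_2,B_3\}$ is an SMT of $\{B_1,B_2,B_3\}$'' follows: the optimal network for the outer triangle automatically serves all six terminals without any extra length.

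The main subtlety — and what I expect to be the only real obstacle — is the claim that $A_j$ lies on the segment $OB_j$, i.e. that the center, the inner vertex, and the corresponding outer vertex are collinear. This is where the ``parallel and concentric'' hypothesis is used: for two regular triangles with a common center whose corresponding sides are parallel, corresponding vertices lie on a common ray from the center (the two triangles differ only by a positive scaling about $O$, with no rotation, since a rotation would break the parallelism of corresponding sides unless it is by a multiple of $2\pi/3$, which permutes vertices consistently). I would state this collinearity explicitly, perhaps with a one-line justification via the homothety centered at $O$ taking $\{A_i\}$ to $\{B_i\}$. A secondary point worth a remark is that although the problem allows arbitrary Steiner topologies, we never need to analyze them for the lower bound: the monotonicity ``more terminals cannot decrease SMT length'' plus the single-triangle value is enough. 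One could optionally note that $\lambda>1$ guarantees $\lambda_v$-type degeneracies do not arise here, but for the $3$-gon the argument is purely this containment-plus-monotonicity argument and needs nothing from Section~\ref{trapezoids}.
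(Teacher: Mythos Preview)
Your proposal is correct and follows essentially the same approach as the paper: both arguments identify the common centre $O$ as the Torricelli point of $B_1B_2B_3$, observe (via the concentric-and-parallel hypothesis) that each $A_j$ lies on the segment $OB_j$, and conclude that the star $\{OB_1,OB_2,OB_3\}$ of length $\sqrt{3}\,\lambda$ already interconnects all six terminals. Your write-up is in fact slightly more explicit than the paper's, since you spell out the lower bound via the monotonicity ``an SMT of a superset cannot be shorter than an SMT of a subset'' and justify the collinearity $O,A_j,B_j$ through the homothety, whereas the paper simply asserts these points.
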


\begin{proof}
It is to be noted that the centre $O$ of both $A_1A_2A_3$ and $B_1B_2B_3$ is also the Torricelli point of both $A_1A_2A_3$ and $B_1B_2B_3$~\cite{du1987steiner}. On taking $O$ as the only Steiner point, the SMT for $\{B_1,B_2,B_3\}$ is $ \mathcal T_3 = \{OB_1, OB_2, OB_3\}$~\cite{du1987steiner}. However, the edges of $\mathcal T_3$ already pass through $A_1$, $A_2$ and $A_3$. Therefore, $ \mathcal T_3$ with $E(\mathcal T_3) = \{OA_1, OA_2, OA_3, A_1B_1, A_2B_2, A_3B_3\}$ is also the SMT for $\{A_1, A_2, A_3, B_1, B_2, B_3\}$ as shown in Figure \ref{conc_eq_tri}.

 From the definition of $\mathcal T_3$, we have the length of the \smtpoly, for $n = 3$ as 
$$ | \mathcal T_3 | = \sqrt 3 \cdot \lambda $$ 
\end{proof}

\begin{figure}[h]
\centering
\includegraphics[width=4cm]{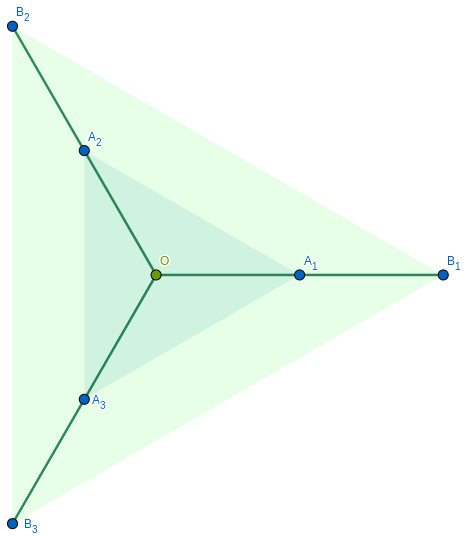}
\caption{SMT for $2$-CPR $3$-gons }
\label{conc_eq_tri}
\end{figure}

\subsection{\ESMT and Large Polygons with Large Aspect Ratios} \label{final_proofs}

In this section, we consider the \ESMT problem when the terminal set is formed by the vertices of $2$-CPR $n$-gons, namely $\{A_i\}$ and $\{B_i\}$. As mentioned earlier, $\{A_i\}$ is the inner polygon and $\{B_i\}$ is the outer polygon of this set of $2$-CPR $n$-gons. In particular, we consider the case when $n \geq 13$; for $n \leq 12$ these are constant sized input instances and can be solved using any brute-force technique. We also require that the aspect ratio 
$\lambda$ has a lower bound $\lambda_1$, i.e. we do not want the two polygons to have sides of very similar length. The exact value of  $\lambda_1$ will be clear during the description of the algorithm. Intuitively, when $\lambda$ is \emph{very large}, the SMT should look similar to what was derived in~\cite{weng1995steiner}. In other words, (please refer to Figure \ref{sing_con_top_fig}):
\begin{enumerate}
    \item for some $j \in [n]$, there is a  vertical fork connecting the two consecutive inner polygon points $A_j, A_{j + 1}$  with the two consecutive outer polygon points $B_j, B_{j + 1}$ - we refer to this vertical fork as the \emph{vertical gadget} for the SMT,
    \item the other points in $\{B_i\}$ are connected directly via $(n - 2)$ outer polygon edges,
    \item the other points in $\{A_i\}$ are connected via $(n - 2)$ inner polygon edges. 
\end{enumerate}

We call such a topology, a \emph{singly connected topology} as in Figure \ref{sing_con_top_fig}. 
For the rest of this section, we consider the SMTs for a large enough aspect ratio, $\lambda$ and show that there is an SMT that must be a realisation of a \emph{singly connected topology}. We refer to an SMT for the terminal set defined by the vertices of $\{A_i\}$ and $\{B_i\}$ as the \smtpoly. 

Without loss of generality, we consider the edge length of any side $A_iA_{i+1}$ in $\{A_i\}$ to be $1$. As we defined the aspect ratio to be $\lambda$, any side $B_iB_{i+1}$ of $\{B_i\}$ must have a side length of $\lambda$. Further, we observe that for any SMT $\mathcal T$, specifying $E(\mathcal T)$ sufficiently determines the entire tree, as $V(\mathcal T) = \{P~|~\exists Q \text{ such that } PQ \in E(\mathcal T)\}$.

We start with the following formal definitions:

\begin{definition}\label{def:singly-conn}
A Steiner topology of $\{A_i\} \cup \{B_i\}$ is a \textbf{singly connected topology}, if it has the following structure:

\begin{itemize}
    \item A vertical gadget i.e.  five edges $\{A_jS_a, A_{j + 1}S_a, S_aS_b, S_bB_j, S_bB_{j + 1}\}$ for some $1 \le j \le n$, where $S_a$ and $S_b$ are newly introduced Steiner points contained in the isosceles trapezoid $\{A_j, A_{j + 1}, B_j, B_{j + 1}\}$. 
    \item All $(n - 2)$ polygon edges of $\{A_i\}$ excluding the edge $A_jA_{j + 1}$
    \item All $(n - 2)$ polygon edges of $\{B_i\}$ excluding the edge $B_jB_{j + 1}$
\end{itemize} 
\end{definition}

\begin{figure}[h]
\centering
\subfloat[\centering SMT of $2$-CPR $13$-gons]{\includegraphics[width=4cm]{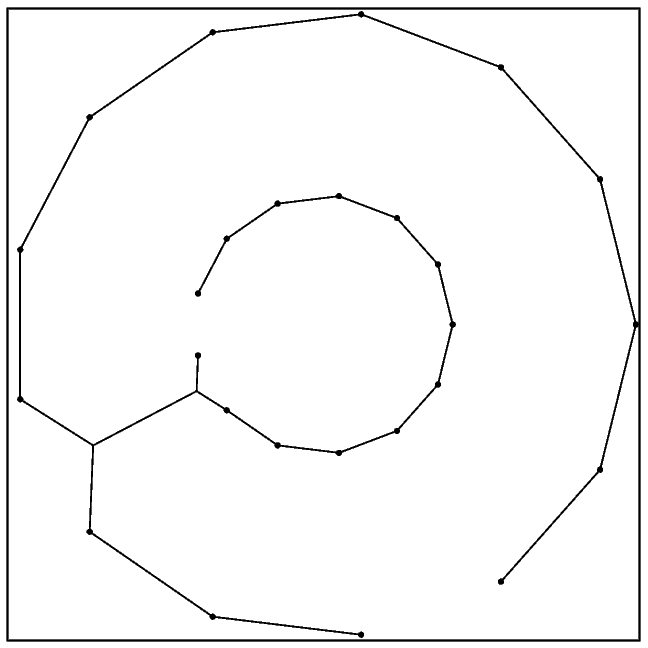}}
    \qquad
\subfloat[\centering SMT of $2$-CPR $20$-gons] {\includegraphics[width=4cm]{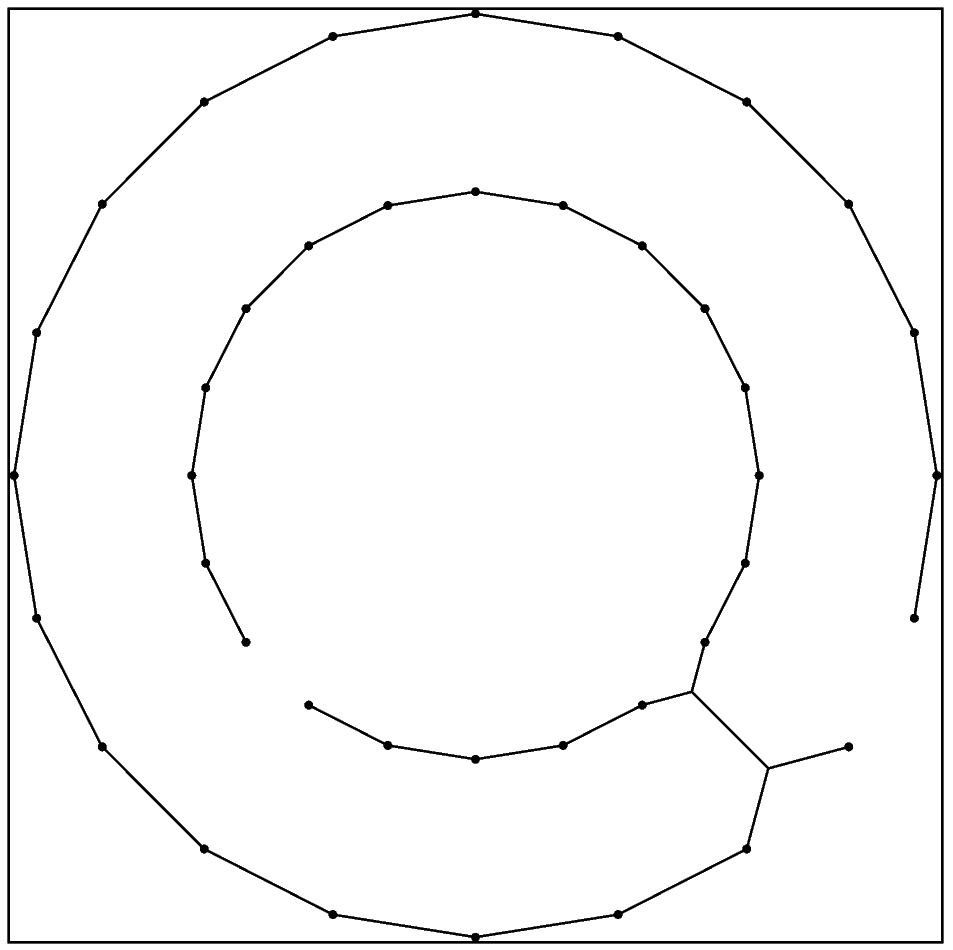} }
\caption{ \emph{Singly connected topology} of $2$-CPR $n$-gons $n = 13$ and $n = 20$ }
\label{sing_con_top_fig}
\end{figure}

We define the notion of a path in an SMT for the vertices of $\{A_i\}$ and $\{B_i\}$ where the starting point is in $\{A_i\}$ and the ending point is in $\{B_i\}$.
\begin{definition}
    An \textbf{A-B path} is a path in a Steiner tree of $\{A_i\} \cup \{B_i\}$ which starts from a vertex in $\{A_i\}$ and ends at a vertex in $\{B_i\}$ with all intermediate nodes (if any) being Steiner points. 
\end{definition}

The following Definition and Figure~\ref{fig:counterpath} is useful for the design of our algorithm.
\begin{definition}\label{def:cw_ccw_path}
    A \textbf{counter-clockwise path} is a path $P_1, P_2, ... P_m$ in a Steiner tree such that for all $i \in \{2, \ldots, m - 1\}, \angle P_{i - 1}P_{i}P_{i + 1} = \frac{2 \pi}{3}$ in the counter-clockwise direction. Similarly, a \textbf{clockwise path} is a path $P_1, P_2, ... P_m$ in a Steiner tree such that for all $i \in \{2, \ldots, m - 1\}, \angle P_{i - 1}P_{i}P_{i + 1} = \frac{4 \pi}{3}$ in the counter-clockwise direction.
\end{definition}

\begin{figure}[h]
\centering
\includegraphics[width=7cm]{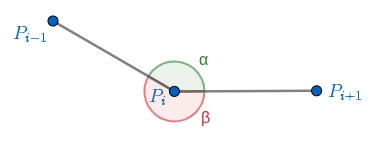}
\caption{$\angle P_{i-1}P_iP_{i+1} = \alpha$ in the counter-clockwise direction and $\angle P_{i-1}P_iP_{i+1} = \beta$ in the clockwise direction, as in~\Cref{def:cw_ccw_path}}
\label{fig:counterpath}
\end{figure}

Now, we consider any Steiner point $S$ in any SMT. Let $P$ and $Q$ be two neighbours of $S$. We now prove that there is no point of the SMT inside the triangle $PSQ$.

\begin{observation} \label{no_pt_in_PSQ}
    Let $S$ be a Steiner point in any SMT for $\{A_i\} \cup \{B_i\}$, with neighbours $P$ and $Q$. Then, no point of the SMT lies inside the triangle $PSQ$.
\end{observation}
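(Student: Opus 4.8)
The plan is to derive the statement purely from the Lune property (Proposition~\ref{lune}), applied to the two edges $\overline{SP}$ and $\overline{SQ}$ of the SMT. The target is the containment
\[
\triangle PSQ \;\subseteq\; L(S,P)\cup L(S,Q),
\]
where $\triangle PSQ$ denotes the closed triangle. Once this is established, the Lune property — for the edge $\overline{SP}$ it forbids every vertex of the SMT other than $S,P$ from lying in $L(S,P)$, and likewise for $\overline{SQ}$ — shows that the only vertices of the SMT that can lie in $\triangle PSQ$ are $S,P,Q$ themselves, which are its corners and hence not interior points; this is exactly the assertion.

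To prove the containment I would first isolate the elementary observation that a point $Y$ with $\angle PSY\le 60^\circ$ and $\overline{SY}\le \overline{SP}$ lies in $L(S,P)$: by the law of cosines $\overline{PY}^{\,2}=\overline{SP}^{\,2}+\overline{SY}^{\,2}-2\,\overline{SP}\cdot\overline{SY}\cos\angle PSY\le \overline{SP}^{\,2}+\overline{SY}(\overline{SY}-\overline{SP})\le \overline{SP}^{\,2}$, using $\cos\angle PSY\ge\frac12$ and $\overline{SY}\le\overline{SP}$, so both defining inequalities of $L(S,P)$ hold. Since $S$ is a Steiner point, $\angle PSQ=120^\circ$ (Proposition~\ref{smt-prop}); I would then let the bisector of $\angle PSQ$ meet $\overline{PQ}$ at a point $M$, which cuts $\triangle PSQ$ into the two triangles $\triangle PSM$ and $\triangle QSM$, each having a $60^\circ$ angle at $S$. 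Because the angles of $\triangle PSQ$ sum to $180^\circ$ with $\angle PSQ=120^\circ$, we have $\angle SPQ<60^\circ$, hence in $\triangle PSM$ the angle at $M$ equals $120^\circ-\angle SPQ>60^\circ>\angle SPM$, so $\overline{SM}<\overline{SP}$; consequently every point of the convex triangle $\triangle PSM$ is within distance $\max(\overline{SP},\overline{SM})=\overline{SP}$ of $S$ and makes an angle at most $60^\circ$ with $\overline{SP}$ at $S$, so it lies in $L(S,P)$ by the observation. Symmetrically $\triangle QSM\subseteq L(S,Q)$, and the containment follows.

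I do not expect a real obstacle here; the only slightly delicate point is the inequality $\overline{SM}\le\overline{SP}$ (and symmetrically $\overline{SM}\le\overline{SQ}$), which is what lets the hypothesis $\overline{SY}\le\overline{SP}$ of the elementary observation hold throughout the half-triangle $\triangle PSM$, and which I would justify by the angle comparison in $\triangle PSM$ above. A purely combinatorial alternative — arguing that a vertex strictly inside $\triangle PSQ$ would, by connectivity of the SMT together with the non-crossing property (Proposition~\ref{smt-prop}), force some edge of the SMT to cross the open segment $\overline{PQ}$ — runs into the difficulty that ruling out such a crossing seems to need a length-exchange argument or the lune covering anyway, so I would prefer the lune-based proof above.
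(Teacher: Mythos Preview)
Your proof is correct and follows essentially the same approach as the paper: both apply the Lune property to the edges $SP$ and $SQ$, split $\triangle PSQ$ along the internal bisector of $\angle PSQ$, and show each half is contained in the corresponding lune. The only cosmetic difference is that the paper constructs equilateral triangles $PSE$ and $QSF$ (with $E,F$ on the bisector) and argues $\triangle PSQ\subseteq\triangle PSE\cup\triangle QSF\subseteq L(S,P)\cup L(S,Q)$, whereas you take the single point $M$ where the bisector meets $\overline{PQ}$ and verify the lune inequalities directly via the law of cosines.
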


\begin{proof}
    By the \emph{lune property} (Proposition \ref{lune}), for any edge $P_1Q_1$ in an SMT, for the two circles centred at $P_1$ and at $Q_1$, respectively and both having a radius of $\overline{P_1Q_1}$, the intersection region does not contain any point of the SMT.

    \begin{figure}[h]
    \centering
    \includegraphics[width=10cm]{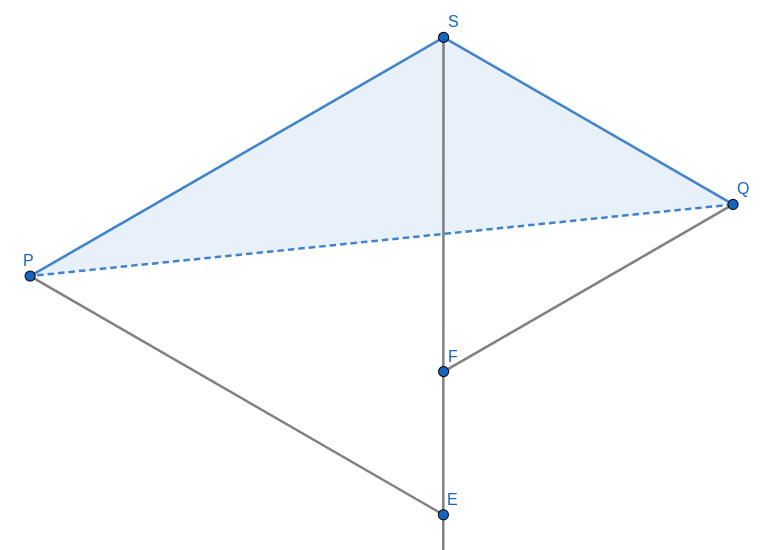}
    \caption{Observation \ref{no_pt_in_PSQ}: Given triangle $PSQ$, equilateral triangles $PSE$ and $QSF$ are constructed}
    \label{no_pt_in_PSQ_fig}
    \end{figure}

    Let $E$ and $F$ be points on the internal angle bisector of $\angle PSQ$, such that $\angle SPE = \angle SQF = \frac{\pi}{3}$ as shown in Figure \ref{no_pt_in_PSQ_fig}. Since $E$ and $F$ are points on the angle bisector of $\angle PSQ$, $\angle PSE = \angle QSF = \frac{\pi}{3}$. Hence, triangles $PSE$ and $QSF$ are equilateral triangles.

    Since $PS$ is an edge in the SMT, by the lune property, the intersection of the circles centred at $P$ and $S$, both with radius $\overline{PS}$ contain no point inside which is a part of the SMT. Since the lune contains the entire equilateral triangle $PSE$, no point of the SMT lies inside triangle $PSE$. Similarly, no point of the SMT lies inside the triangle $QSF$.

    Further, as $\angle PSQ = \frac{2 \pi}{3}$, $\angle SPQ + \angle SQP = \frac{\pi}{3}$. This means $\angle SPQ, \angle SQP < \frac{\pi}{3}$. Therefore, as $\angle SPE = \angle SQF = \frac{\pi}{3}$, $E$ and $F$ must lie outside the triangle $PSQ$. This implies that the triangle $PSQ$ is covered by the union of the triangles $PSE$ and $QSF$. As no point of the SMT lies in triangles $PSE$ and $QSF$, triangle $PSQ$ must contain no points of the SMT as well.
\end{proof}

Next, we show that in an \smtpoly there cannot be any Steiner point, in the interior of the polygon $\{A_i\}$, that is a direct neighbour of some point $B_k$ in the polygon $\{B_i\}$.

\begin{observation} \label{no_in_to_B}
    For any \smtpoly, there cannot exist a Steiner point $S$ lying in the interior of the polygon $\{A_i\}$ such that $SB_k$ is an edge in an SMT for some $B_k \in \{B_i\}$. 
\end{observation}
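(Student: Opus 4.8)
The plan is to argue by contradiction using the wedge property (Proposition~\ref{wedge}) together with Observation~\ref{no_pt_in_PSQ}. Suppose there is an SMT $\mathcal{T}$ of $\{A_i\}\cup\{B_i\}$ containing a Steiner point $S$ strictly inside the inner polygon $\{A_i\}$ such that $SB_k \in E(\mathcal{T})$ for some $k$. The edge $SB_k$ must cross the boundary of $\{A_i\}$; let it cross a side $A_\ell A_{\ell+1}$ (it cannot pass through a vertex $A_\ell$, since then $A_\ell$ would lie on edge $SB_k$, contradicting that edges do not pass through other terminals — more precisely it would force $A_\ell$ to be a degree-2 interior point of an edge, which is impossible in an SMT). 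So the segment $SB_k$ enters the trapezoidal strip $A_\ell A_{\ell+1} B_{\ell+1} B_\ell$ (or an adjacent strip) from the inside.

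First I would pin down the local geometry at $S$. Since $S$ is a Steiner point, it has three neighbours, and the three edges at $S$ make $120^\circ$ angles; let $P$ and $Q$ be the other two neighbours besides $B_k$. By Observation~\ref{no_pt_in_PSQ}, the triangles $PSB_k$ and $QSB_k$ contain no point of the SMT in their interior. The key step is to exhibit an open $120^\circ$ wedge, apex at some convenient point, that is free of terminals of $\{A_i\}\cup\{B_i\}$ but that must contain $S$ (or another Steiner point), contradicting Proposition~\ref{wedge}. The natural candidate is a wedge with apex on the side $A_\ell A_{\ell+1}$ (at the crossing point, or at the foot of $S$), opening away from the inner polygon toward $S$; since $S$ lies strictly inside $\{A_i\}$ and the only terminals are the $2n$ polygon vertices, a wedge pointing into the interior of $\{A_i\}$ of half-angle $\geq 60^\circ$ and suitably placed will avoid all terminals. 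One then needs the edge $SB_k$, which crosses $A_\ell A_{\ell+1}$, to force $S$ into this wedge: because the edge goes from the interior point $S$ outward across the near side to the far vertex $B_k$, the direction $\overrightarrow{SB_k}$ points roughly "outward," so $S$ sits on the inward side. I would make this precise by taking the wedge to be bounded by the line $L_{A_\ell A_{\ell+1}}$ on one side and a ray making $60^\circ$ with it on the other, chosen so that the half-plane on the inner side of $A_\ell A_{\ell+1}$ minus a $60^\circ$ sliver is terminal-free (true because all $A$-vertices lie on the convex hull of $\{A_i\}$ and all $B$-vertices lie outside $\{A_i\}$, hence outside that inner half-plane).

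The main obstacle I anticipate is handling the boundary cases of the wedge argument cleanly: namely ruling out that $S$ lies exactly on some line through terminals, or that the crossing of $SB_k$ with $\partial\{A_i\}$ happens right at a vertex, and making sure the $120^\circ$ wedge I construct genuinely contains no terminal — in particular that $A_\ell$ and $A_{\ell+1}$ themselves lie on the boundary of the wedge rather than in its (open) interior, which is why the "open wedge" phrasing in Proposition~\ref{wedge} matters. A secondary subtlety is that $S$ being inside $\{A_i\}$ does not by itself place $S$ inside the constructed wedge; one must use the angle condition at $S$ (the $120^\circ$ between $SB_k$ and its neighbours, and $S$'s other two edges staying inside the convex hull $\mathrm{CH}(\{B_i\})$ by Proposition~\ref{convex-steiner}) to conclude that $S$ cannot avoid the wedge. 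I would close the argument by noting that the existence of such a terminal-free $120^\circ$ wedge containing the Steiner point $S$ directly contradicts Proposition~\ref{wedge}, so no such $S$ can exist.
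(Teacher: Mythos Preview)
Your wedge approach has a genuine gap. You claim that ``the half-plane on the inner side of $A_\ell A_{\ell+1}$ minus a $60^\circ$ sliver is terminal-free,'' justifying this by saying the $A$-vertices lie on $\mathrm{CH}(\{A_i\})$ and the $B$-vertices lie outside $\{A_i\}$. But this conflates the \emph{interior of the polygon} $\{A_i\}$ with the \emph{half-plane} bounded by the line $L_{A_\ell A_{\ell+1}}$. The half-plane on the polygon-interior side of $L_{A_\ell A_{\ell+1}}$ contains \emph{all} of the other $n-2$ vertices $A_j$ (by convexity of $\{A_i\}$) and, since $\{B_i\}$ is concentric and larger, also most of the $B$-vertices. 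Any $120^\circ$ wedge with apex on $A_\ell A_{\ell+1}$ opening toward the interior of $\{A_i\}$ is an unbounded region whose rays exit the polygon through other sides, sweeping over many terminals of both polygons; it cannot be terminal-free, so Proposition~\ref{wedge} does not apply. The terminal-free set you actually have at your disposal is the bounded open interior of $\{A_i\}$, which is not a wedge.

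The paper's argument is far more elementary and avoids the wedge property altogether. Let $A_m A_{m+1}$ be the side of $\{A_i\}$ crossed by $SB_k$, with $A_m$ the nearer endpoint to $B_k$. Since $S$ is strictly interior to $\{A_i\}$, the angle $\angle B_k A_m S$ strictly exceeds the angle $\angle B_k A_m A_{m+1}$, and the latter is at least $\tfrac{\pi}{2}+\tfrac{\pi}{n} > \tfrac{\pi}{2}$ (the angle between a side of the inner polygon and the segment to the corresponding outer vertex). Hence $\angle B_k A_m S$ is obtuse, so $B_kS$ is the longest side of triangle $B_kA_mS$. Deleting the edge $SB_k$ splits the SMT into two components; reconnecting them with whichever of $B_kA_m$ or $SA_m$ bridges the two components yields a strictly shorter spanning tree of the terminals, contradicting minimality. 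This edge-swap argument is one line and needs none of the angular bookkeeping at $S$ that your plan invokes.
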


\begin{proof}
    For the sake of contradiction, we assume that for some \smtpoly there exists a Steiner point $S$ lying in the interior of the polygon $\{A_i\}$ such that $SB_k$ is an edge in the SMT for some $B_k \in \{B_i\}$. Let $A_mA_{m+1}$ be the edge such that $SB_k$ intersects $A_mA_{m+1}$. Without loss of generality, assume that $A_m$ is closer to $B_k$ than $A_{m+1}$. Therefore $\angle B_kA_mS > \angle B_kA_mA_{m+1} \ge \frac{\pi}{2} + \frac{\pi}{n} > \frac{\pi}{2}$. This means that $B_kS$ is the longest edge in the triangle $B_kSA_m$. Therefore we can remove the edge $B_kS$ from the SMT and replace it with either $B_kA_m$ or $SA_m$ to get another tree connecting the terminal set with a shorter total length than what we started with, which is a contradiction. 
\end{proof}

We further analyze SMTs for $\{A_i\}\cup \{B_i\}$.

\begin{observation} \label{subchain-smt}
    Let $\mathcal V = \{A_j, A_{j + 1}, \ldots, A_k\}$ be the interval of consecutive vertices of $\{A_i\}$ lying between $A_j$ and $A_k$ (which includes $A_{j + 1}$) such that $A_j$ is distinct from $A_{k + 1}$. Let $U$ be any point on the line segment $A_kA_{k+1}$. Then an SMT of $\mathcal V \cup \{U\}$ is $\mathcal T$, with $E(\mathcal T) = \{A_jA_{j+1}, A_{j+1}A_{j+2}, \ldots, A_{k-1}A_k\} \cup \{A_kU\}$. 
\end{observation}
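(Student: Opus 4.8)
The plan is to show that the claimed tree $\mathcal T$ with $E(\mathcal T) = \{A_jA_{j+1}, \ldots, A_{k-1}A_k\} \cup \{A_kU\}$ has length no larger than any Steiner tree of $\mathcal V \cup \{U\}$, where $\mathcal V$ is the interval of consecutive outer... wait, inner polygon vertices $\{A_j, \ldots, A_k\}$ and $U$ lies on $A_kA_{k+1}$. Since $|\mathcal T| = (k-j) + \overline{A_kU}$ (the polygon edges have unit length and $\overline{A_kU} \le \overline{A_kA_{k+1}} = 1$), it suffices to prove this is optimal. The first step is to invoke \Cref{convex-steiner}: any SMT of $\mathcal V \cup \{U\}$ lies inside $\mathrm{CH}(\mathcal V \cup \{U\})$. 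Because the $A_i$'s are consecutive vertices of a convex polygon and $U$ lies on the side $A_kA_{k+1}$, this convex hull is the convex polygon with vertices $A_j, A_{j+1}, \ldots, A_k, U$; in particular all these points are in convex position and appear on the hull in this cyclic order.

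The key structural step is to argue that an SMT of points in convex position whose hull boundary, traversed from $A_j$ to $U$ along the "$A$-chain", has every interior angle $\ge 120^\circ$, cannot do better than simply taking the hull path $A_j - A_{j+1} - \cdots - A_k - U$. I would proceed by induction on $|\mathcal V|$. For the base case $\mathcal V = \{A_j, A_k\}$ with $k = j+1$: the SMT of two points is the single segment $A_jU$, but we need $\overline{A_jA_{j+1}} + \overline{A_{j+1}U}$... hmm, here $\mathcal V = \{A_j, A_{j+1}\}$ means $k = j+1$, and the claimed edge set is $\{A_jA_{j+1}\} \cup \{A_{j+1}U\}$, whereas the true SMT of the two-point set $\{A_j, A_{j+1}, U\}$ — three points — is governed by the angle at $A_{j+1}$. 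Since $U$ is on the ray extending $A_jA_{j+1}$ beyond... no: $U \in A_{j+1}A_{j+2}$, and the interior angle $\angle A_j A_{j+1} A_{j+2}$ of the regular $n$-gon is $\pi - 2\pi/n \ge 120^\circ$ for $n \ge 6$ (and for $n=3,4,5$ these are constant-size instances, or one checks directly). The angle $\angle A_j A_{j+1} U \ge \angle A_j A_{j+1} A_{j+2}$... actually $U$ between $A_{j+1}$ and $A_{j+2}$ gives $\angle A_j A_{j+1} U = \angle A_j A_{j+1} A_{j+2} \ge 120^\circ$, so the Torricelli/Fermat point of $\{A_j, A_{j+1}, U\}$ is $A_{j+1}$ itself and the SMT is exactly $\{A_jA_{j+1}, A_{j+1}U\}$. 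For the inductive step, take an SMT $\mathcal T'$ of $\mathcal V \cup \{U\}$; consider the terminal $A_j$ (a hull vertex with hull angle $\ge 120^\circ$). A standard fact about SMTs on points in convex position (derivable from \Cref{smt-prop} and the wedge property, \Cref{wedge}) is that a hull vertex whose two hull-neighbours subtend an angle $\ge 120^\circ$ has degree $1$ in the SMT and is joined directly to the next hull vertex along the shorter hull edge; here that forces edge $A_jA_{j+1} \in E(\mathcal T')$. Deleting $A_j$ and this edge, what remains is a Steiner tree of $\{A_{j+1}, \ldots, A_k\} \cup \{U\}$ of length $|\mathcal T'| - 1$, and by induction its length is $\ge (k - j - 1) + \overline{A_kU}$, giving $|\mathcal T'| \ge (k-j) + \overline{A_kU} = |\mathcal T|$.

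I expect the main obstacle to be rigorously justifying the claim that a convex-hull vertex with hull angle $\ge 120^\circ$ must be a leaf attached to its neighbouring hull vertex along the incident hull edge. The clean way is via the wedge property: place a $120^\circ$ wedge with apex just outside $A_j$ covering the region "past" $A_j$ away from the hull; it contains no terminals (all other terminals are on the far side of the hull edges at $A_j$), so it contains no Steiner points, forcing any neighbour of $A_j$ in the SMT to lie inside the cone at $A_j$ of angle $\pi - 2\pi/n$. Combined with the lune property and the fact that $A_{j+1}$ and the segment $A_jA_{j+1}$ lie along one edge of this cone, one argues that introducing a Steiner point adjacent to $A_j$ and reaching the rest of the configuration cannot beat the direct edge $A_jA_{j+1}$ — this is essentially the same local exchange argument used in \Cref{no_in_to_B}, comparing the edge $A_jA_{j+1}$ against a putative longer detour through an interior Steiner point or a non-adjacent terminal. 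Alternatively, one may cite the known result that for terminals in convex position the SMT restricted to an "interval" with sufficiently obtuse hull angles is exactly the hull path (this is implicit in the convex-position analyses of \cite{du1987steiner, scott1988convexity}). Once this leaf-peeling lemma is in hand, the induction closes immediately and gives the stated SMT together with its length $(k-j) + \overline{A_kU}$.
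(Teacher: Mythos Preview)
Your approach differs substantially from the paper's, and it has a genuine gap.

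The paper's argument is a two-line exchange: by \cite{du1987steiner} the path $\mathcal T_A$ consisting of all polygon edges except $A_{j-1}A_j$ is an SMT of $\{A_i\}$, and since $U$ lies on the edge $A_kA_{k+1}\in E(\mathcal T_A)$ it is also an SMT of $\{A_i\}\cup\{U\}$. This path decomposes as $\mathcal T\uplus\mathcal T_1$ with $E(\mathcal T_1)=\{UA_{k+1},A_{k+1}A_{k+2},\ldots,A_{j-2}A_{j-1}\}$. If some tree $\mathcal T'$ on $\mathcal V\cup\{U\}$ were strictly shorter than $\mathcal T$, then $\mathcal T'\cup\mathcal T_1$ would span $\{A_i\}$ and beat $\mathcal T_A$, contradicting optimality. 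No induction, no leaf-peeling, and no new structural lemma is needed.

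Your inductive step rests on the assertion that $A_j$ is ``a hull vertex with hull angle $\ge 120^\circ$''. That is false: in $\mathrm{CH}(\mathcal V\cup\{U\})$ the neighbours of $A_j$ are $A_{j+1}$ and $U$, so the relevant interior angle is $\angle UA_jA_{j+1}$, not the regular-polygon angle $\angle A_{j-1}A_jA_{j+1}$. When the interval $\mathcal V$ is short this angle is small (for a single-edge interval it is well below $60^\circ$), so your hypothesis does not apply to $A_j$ at all. Moreover, the ``standard fact'' you invoke --- that a convex-hull vertex with interior angle $\ge 120^\circ$ has degree $1$ and is joined to its nearer hull neighbour --- is not standard and is not correct as stated: a hull vertex with angle in $[120^\circ,180^\circ)$ can perfectly well have degree $2$ in an SMT (indeed, in the boundary-path SMT of a regular $n$-gon every non-endpoint terminal has degree $2$), and nothing you cite rules out $A_j$ being adjacent to a Steiner point rather than to $A_{j+1}$. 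The wedge argument you sketch only bounds where Steiner points can lie; it does not by itself force the edge $A_jA_{j+1}$ into the tree. So the induction does not close, and fixing it would require proving a structural lemma that is essentially equivalent to the observation itself --- whereas the paper gets it for free from the already-known SMT of the full polygon.
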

\begin{proof}
    For the sake of contradiction, assume that there exists an SMT $\mathcal T'$ of $\mathcal V \cup \{U\}$ such that $|\mathcal T'| < |\mathcal T|$.  
    
    From \cite{du1987steiner}, we know that $\mathcal T_A$, with $E(\mathcal T_A) = \{A_jA_{j+1}, A_{j+1}A_{j+2}, \ldots, A_{j-2}A_{j-1}\}$ (\textit{i.e.} all edges of polygon $\{A_i\}$ except $A_{j-1}A_j$) is an SMT of $\{A_i\}$. Since $U \in A_kA_{k+1} \in E(\mathcal T_A)$, $\mathcal T_A$ must also be an SMT of $\{A_i\} \cup \{U\}$. However, $\mathcal T_A$ can be partitioned as $\mathcal T_A = \mathcal T \uplus \mathcal T_1$, where $E(\mathcal T_1) = \{UA_{k+1}\} \cup \{A_{k+1}A_{k+2}, \ldots, A_{j-2}A_{j-1}\}$.
    However, as $\mathcal T'$ is assumed to be of shorter total length than $\mathcal T$, $\mathcal T' \cup \mathcal T_1$ is a tree, containing $\{A_i\}$ as a vertex subset, which has a shorter total length than $\mathcal T_A$, contradicting the optimality of $\mathcal T_A$.
\end{proof}

We proceed by showing that in any \smtpoly there exists at least one A-B path which is also a counter-clockwise path. Symmetrically, we also show that for any \smtpoly there exists another clockwise A-B path which consists of only clockwise turns. We can intuitively see that this is true because, if all clockwise paths starting at a vertex in $\{A_i\}$ also ended in a vertex in $\{A_i\}$, there would be enough paths to form a cycle, which is not possible in a tree.

\begin{lemma}\label{left_right_turn_path}
In any \smtpoly, there exists an A-B path which is also a clockwise path and there exists an A-B path which is also a counter-clockwise path.
\end{lemma}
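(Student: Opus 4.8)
The plan is to prove the existence of a clockwise A-B path; the counter-clockwise case follows by a mirror-symmetric argument. I would argue by contradiction: suppose every clockwise path in the \smtpoly that starts at a vertex of $\{A_i\}$ terminates at a vertex of $\{A_i\}$ (rather than at some $B_k$). The idea is to show that, under this hypothesis, one can trace out a closed walk in the SMT, contradicting the fact that an SMT is a tree (and hence acyclic).

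First I would make precise the local extension rule. Starting at a vertex $A_j$, there is a unique edge of the SMT leaving $A_j$ that initiates a clockwise path — namely, among the (at most three, by \Cref{smt-prop}) edges incident to $A_j$, pick the one consistent with a $\frac{4\pi}{3}$ counter-clockwise turn into the interior side. At every subsequent Steiner point $S$ encountered, since $S$ has degree exactly $3$ with all incident edges at $120^\circ$ (\Cref{smt-prop}), exactly one of the two edges other than the one we arrived on continues the clockwise path, so the clockwise path is uniquely determined once its first edge is fixed. The path must end at a terminal (it cannot cycle, and it cannot continue forever in a finite tree), and by hypothesis that terminal is some $A_{j'}$. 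Now I would invoke \Cref{no_in_to_B} and \Cref{convex-steiner}: the path stays inside $\mathrm{CH}(\{B_i\})$, and the clockwise turning condition forces it to ``hug'' the inner polygon — each segment of the path, together with the arc of $\{A_i\}$ it subtends, bounds a region on one fixed side. The key geometric claim to establish is that a clockwise A-A path from $A_j$ to $A_{j'}$, together with the boundary arc of $\{A_i\}$ from $A_{j'}$ back to $A_j$ (going the short way, consistent with orientation), encloses a bounded region $R_j$, and that the vertex $A_{j'}$ is strictly ``clockwise-advanced'' past $A_j$ so these regions are nested/chained around the annulus.

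The main obstacle — and where I would spend the most care — is turning ``enough clockwise paths to form a cycle'' into a genuine contradiction. The cleanest route: consider the clockwise path $\pi_j$ starting at each $A_j$, ending (by hypothesis) at $A_{\sigma(j)}$; this defines a map $\sigma$ on the $n$ starting vertices. Using the lune property (\Cref{lune}), the wedge property (\Cref{wedge}), and \Cref{no_pt_in_PSQ}, I would argue that distinct clockwise paths cannot cross each other, and that $\sigma(j) \neq j$ while $\sigma$ is ``monotone'' with respect to the cyclic order on $\{A_1,\dots,A_n\}$ — so $\sigma$ is a cyclic rotation by some nonzero amount. But then following $\pi_j$, then $\pi_{\sigma(j)}$, then $\pi_{\sigma^2(j)}$, and so on, and splicing in the boundary arcs of $\{A_i\}$ between consecutive pieces, produces a closed curve that uses SMT edges and hull-boundary segments; winding once around the center, it encloses the center $O$. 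Since the SMT must connect the outer terminals $\{B_i\}$ which lie outside this closed region to the inner terminals which (partly) lie on or inside it, and the SMT is connected, some SMT edge must cross this closed curve — but the curve is built from SMT edges and boundary arcs, and two SMT edges cannot cross (\Cref{smt-prop}, item 1), while crossing a boundary arc of $\{A_i\}$ at a non-vertex point contradicts \Cref{no_in_to_B} combined with \Cref{convex-steiner}. This yields the contradiction, so some clockwise path must in fact be an A-B path. The symmetric argument with $\frac{2\pi}{3}$ counter-clockwise turns gives the counter-clockwise A-B path, completing the proof of \Cref{left_right_turn_path}.
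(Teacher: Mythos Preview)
Your contradiction hypothesis and the basic idea---follow clockwise paths among the $A$-vertices until something closes up---match the paper. But your execution is considerably more elaborate than needed, and it has real gaps.

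First, the paper does \emph{not} launch a clockwise path from every $A_j$. It picks a single $A_{i_1}$ adjacent to some Steiner point $S_{i_1}$, follows the counter-clockwise path from $A_{i_1}$ through $S_{i_1}$ to its terminal endpoint $A_{i_2}$, takes the \emph{penultimate} Steiner point $S_{i_2}$ of that path as the first step of the next counter-clockwise path out of $A_{i_2}$, and iterates. This sidesteps the issue you run into: at a starting vertex there is no previous edge, so your rule ``pick the edge consistent with a $\tfrac{4\pi}{3}$ counter-clockwise turn into the interior side'' is not well-defined, and in fact some $A_j$ may have no Steiner neighbour at all. The paper's trick is that after the first step the restart edge is handed to you by the previous path.

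Second, your map $\sigma$ and the assertion that it is a cyclic rotation are not justified (and not needed). The paper simply applies pigeonhole to the sequence $A_{i_1},A_{i_2},\dots,A_{i_{n+1}}$ to force a repeat, and argues that the concatenated walk then contains a cycle in the SMT itself---contradicting acyclicity directly. There is no need for the lune property, the wedge property, \Cref{no_pt_in_PSQ}, monotonicity of $\sigma$, or any Jordan-curve separation argument.

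Third, your final contradiction via a closed curve built from SMT paths \emph{plus polygon arcs} is fragile: an SMT edge reaching the $B$-side could emanate from an $A$-vertex lying on your curve, which is not a crossing, so planarity of the SMT does not immediately yield the contradiction you want. The paper avoids this entirely because its closed walk lives in the SMT, so a repeat of an $A$-vertex is already a cycle in a tree.

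In short: keep your opening move, but drop the $\sigma$-map and the topological separation; instead iterate from a single starting point using the penultimate Steiner vertex to restart, and finish with pigeonhole.
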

\begin{proof}
    For the sake of contradiction, assume that for some \smtpoly there is no A-B path which is a counter-clockwise path. We pick an arbitrary vertex $A_{i_1} \in \{A_i\}$ such that it is connected to at least one Steiner point (say $S_{i_1}$). We consider the counter-clockwise path, $\mathcal C_1$ starting from $A_{i_1}S_{i_1}$ and ending at the first terminal point in the counter-clockwise path. By assumption, there can be no vertex of $\{B_i\}$ in $\mathcal C_1$, hence the endpoint must be a vertex in $\{A_i\}$. Let $A_{i_2}$ be the other endpoint of $\mathcal C_1$. By definition, the penultimate vertex in this counter-clockwise path must be a Steiner point, we call it $S_{i_2}$. We again consider the counter-clockwise path starting from $A_{i_2}S_{i_2}$, and similarly, let $A_{i_3}$ be the first terminal that is encountered in this path. The penultimate vertex in this counter-clockwise path must be a Steiner point, we call it $S_{i_3}$. We can repeat this procedure indefinitely to obtain $A_{i_4}, A_{i_5}, A_{i_6}, \ldots $ as there are no counter-clockwise A-B paths. However, as $\{A_i\}$ has $n$ vertices, there must be a repetition of vertices among   $A_{i_1}, A_{i_2}, A_{i_3}, \ldots, A_{i_{n+1}}$, implying the existence of a cycle in the SMT, which is a contradiction.

    This symmetrically implies that there must also be a clockwise A-B path. 
\end{proof}

Our next step is to bound the number of `connections' that connect the inner polygon $\{A_i\}$ and the outer polygon $\{B_i\}$ for a large aspect ratio, $\lambda$. As $\lambda$ increases, the area of the annular region between the two polygons increases as well. Therefore, an increase in the number of connections would lead to a longer total length of the SMT considered. Consequently, we will prove that after a certain positive constant $\lambda_1$, for $\lambda > \lambda_1$ any \smtpoly will have a single `connection' between the two polygons. Moreover, \cite{weng1995steiner} gives us an evidence that as $\lambda \rightarrow \infty$, there will indeed be a single connection connecting the outer polygon and the inner polygon for $n \ge 12$. We can formalize this notion of existence of a single `connection' with the following lemma.\\

\begin{lemma} \label{mincut_1}
    For any \smtpoly with $n \ge 13$ and $\lambda > \lambda_{1}$, the number of edges needed to be removed in order to disconnect $\{A_i\}$ and $\{B_i\}$ is 1, where 
    $$
    \lambda_{1} = \frac{1}{1 - 4 \sin \frac{\pi}{n}}
    $$
\end{lemma}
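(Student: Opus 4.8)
The plan is to prove the lemma by a length-comparison argument: take any SMT $\mathcal{T}$ of $\{A_i\}\cup\{B_i\}$ and show that if more than one edge is needed to separate the inner polygon from the outer polygon, then $|\mathcal{T}|$ exceeds the length of the vertical-fork-based singly connected topology constructed in \Cref{lambda_v} and \Cref{3-gon}, contradicting optimality. Concretely, by \Cref{left_right_turn_path} there is at least one A-B path in $\mathcal{T}$, so at least one edge must always be removed; the task is to rule out two or more. Suppose the edge-connectivity between $\{A_i\}$ and $\{B_i\}$ in $\mathcal{T}$ is at least $2$. Since $\mathcal{T}$ is a tree, this means there are at least two edge-disjoint A-B paths in $\mathcal T$ — equivalently, deleting the (unique) minimal separating edge set of size $\ge 2$ leaves $\{A_i\}$-side and $\{B_i\}$-side components, and each deleted edge crosses the annular region between the two polygons. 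The key geometric observation is that each such crossing edge, or more precisely each A-B path, must traverse the full annular gap, whose radial width is bounded below by a quantity growing linearly in $\lambda$ (the difference of circumradii is $\tfrac{\lambda-1}{2\sin(\pi/n)}$).

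First I would set up the accounting. Fix the unique minimum edge cut $F$ separating $\{A_i\}$ from $\{B_i\}$ in $\mathcal T$, with $|F|=c\ge 2$. Removing $F$ splits $\mathcal T$ into components; exactly one contains all of $\{B_i\}$ — call the union of the others the "inner side" $\mathcal T_{\mathrm{in}}$, which contains all of $\{A_i\}$, since each $A_i$ lies strictly inside the circumcircle of $\{B_i\}$ and $\mathcal T$ is connected and planar. Because $\mathcal T$ restricted to $\mathcal T_{\mathrm{in}}$ must connect all $n$ points of $\{A_i\}$, by \cite{du1987steiner} its length is at least that of the SMT of $\{A_i\}$, namely $n-1$ (the inner polygon has unit side). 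Similarly, the outer component must connect all of $\{B_i\}$, so its length is at least $\lambda(n-1)$... but this is too lossy. The refinement I would use: by \Cref{subchain-smt}, in $\mathcal T$ the portions of the inner polygon not incident to cut edges are realized by polygon edges of $\{A_i\}$; so each cut edge "costs" at least one inner polygon edge of length $1$ that is \emph{removed} relative to the full $\{A_i\}$-perimeter-minus-one SMT, but forces a crossing of the gap. More carefully, I would lower-bound $|\mathcal T|$ by: (perimeter-type contribution of $\{B_i\}$ outside the cut region) $+$ (perimeter-type contribution of $\{A_i\}$ outside the cut region) $+$ ($c$ crossings of radial width $\ge \tfrac{\lambda-1}{2\sin(\pi/n)}$ each, measured via projection onto the radial direction), and compare against the singly connected topology whose length from \Cref{lambda_v} is $\tfrac{\lambda-1}{2\tan(\pi/n)} + \tfrac{\sqrt3(\lambda+1)}{2} + (n-2) + \lambda(n-2)$. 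The difference, for $c\ge 2$, should be forced positive exactly when $\lambda > \tfrac{1}{1-4\sin(\pi/n)}$; tracking where the constant $4\sin(\pi/n)$ emerges (roughly: one extra crossing contributes $\gtrsim \tfrac{\lambda-1}{2\sin(\pi/n)}$ while it can save at most a couple of outer polygon edges of length $\lambda$ and a bounded number of inner edges, and $\tfrac{1}{2\sin(\pi/n)} > 2$ forces the win for $n\ge 13$ once $\lambda$ clears the threshold) is the crux of the bookkeeping.

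The main obstacle I anticipate is making the "each cut edge costs a full radial traversal, and this traversal cannot be shared/amortized" step rigorous while simultaneously getting the constant $\lambda_1 = \tfrac{1}{1-4\sin(\pi/n)}$ exactly rather than some weaker bound. Two cut edges could both pass through the same narrow angular sector $A_jA_{j+1}B_jB_{j+1}$, so I cannot simply say "two disjoint trapezoids are each crossed"; instead I would argue via projections onto the radial line through the center, using \Cref{convex-steiner} to keep everything inside $\mathrm{CH}(\{B_i\})$, and use the wedge and lune properties (\Cref{wedge}, \Cref{lune}) plus \Cref{no_in_to_B} to show the two A-B paths cannot hug each other too closely — in particular their inner endpoints are distinct points of $\{A_i\}$ (or Steiner points forced by \Cref{no_in_to_B} to be non-interior, hence near the inner polygon boundary) and their outer endpoints are distinct, so together they "use up" enough of both polygons that the removed polygon edges cannot compensate. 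Once the inequality is assembled, the final step is the routine verification that $\tfrac{\lambda-1}{2\sin(\pi/n)} \cdot (c-1)$ dominates the savings precisely for $\lambda > \lambda_1$ and $n \ge 13$, which I would present as a short computation rather than expand in full here.
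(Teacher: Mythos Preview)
Your approach is fundamentally different from the paper's and has a genuine gap. The paper does \emph{not} compare against the singly connected topology at all; that would be premature, since the point of this lemma is precisely to enable the later structural analysis that leads to the singly connected topology. Instead, the paper uses a short local edge-swap argument. The key tool you are missing is Lemma~2.4(1) of \cite{weng1995steiner}: a counter-clockwise path all of whose edges have length at most $r$ is entirely contained in the disc of radius $2r$ about its starting point. Since any point of $\{A_i\}$ is at distance at least $\tfrac{\lambda-1}{2\sin(\pi/n)}$ from any point of $\{B_i\}$, the counter-clockwise A-B path guaranteed by \Cref{left_right_turn_path} must contain an edge of length at least $\tfrac{\lambda-1}{4\sin(\pi/n)}$, and this exceeds $\lambda$ exactly when $\lambda>\lambda_1=\tfrac{1}{1-4\sin(\pi/n)}$. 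Removing that long edge splits the SMT into two subtrees; if two edges were needed to separate $\{A_i\}$ from $\{B_i\}$, one subtree still contains both an $A$-vertex and a $B$-vertex, so one of the two polygons has vertices in both subtrees and hence has two consecutive vertices in different subtrees. Reconnecting them by the corresponding polygon side (length $1$ or $\lambda$) produces a strictly shorter spanning tree, contradicting minimality. This is where the constant $4\sin(\pi/n)$ comes from, and why it drops out so cleanly.

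Your global accounting scheme, by contrast, has structural errors before the bookkeeping even begins. In a tree there is no unique ``minimum edge cut'' $F$ of size $c\ge 2$ with the property you want: removing $c\ge 2$ edges from a tree produces $c+1$ components, and it is simply false that ``exactly one contains all of $\{B_i\}$''. (Take a tree with two disjoint A-B paths meeting at a Steiner vertex; cutting the two $B$-end edges leaves each $B_j$ isolated.) Consequently your definitions of $\mathcal T_{\mathrm{in}}$ and the ``outer component'' do not make sense, and the subsequent lower bounds on their lengths via \cite{du1987steiner} cannot be applied. Even setting this aside, the radial-projection idea you sketch would have to contend with A-B paths that zig-zag and with Steiner points shared between paths; you correctly flag this as the main obstacle, but there is no indication the exact threshold $\lambda_1$ would emerge, and the paper's argument shows it is unnecessary to attempt any such global comparison.
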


\begin{proof}
        For the sake of contradiction, assume that for some \smtpoly, there are at least two distinct edges in that SMT, which are needed to be removed in order to disconnect $\{A_i\}$ and $\{B_i\}$. We start with a claim.
        
        \begin{claim} \label{more_than_lambda}
        A counter-clockwise A-B path in any SMT of $\{A_i\} \cup \{B_i\}$ must have an edge of length greater than $\lambda$.
        \end{claim}
        
\begin{proof}
        We consider a generic setting, where $\mathcal T$ is an SMT of some set of terminal points $\mathcal P$. Let $H \in V(\mathcal T)$ be vertex of $\mathcal T$. If $\mathcal C$ be a counter-clockwise path starting from $H$ such that no edge in the counter clockwise path has a length of more than $r$, for some $r \in \mathbb{R}^{+}$. Due to Lemma 2.4 (1) of \cite{weng1995steiner},  we know that $\mathcal C$ is contained entirely in the circle centred at $H$ with radius $2r$.
        
        In our case, any vertex in $\{A_i\}$ and any vertex in $\{B_i\}$ are separated by the distance of at least $\dfrac{\lambda - 1}{2 \sin{\frac{\pi}{n}}}$. Therefore, by the above fact, the maximum edge length in a counter-clockwise A-B path of any SMT of $\{A_i\} \cup \{B_i\}$ must be at least $\dfrac{\lambda - 1}{4 \sin{\frac{\pi}{n}}}$. Moreover, we have $\lambda > \lambda_1$. Therefore $\lambda > \dfrac{1}{1 - 4 \sin{\frac{\pi}{n}}} \implies \dfrac{\lambda - 1}{4 \sin{\frac{\pi}{n}}} > \lambda$. Hence, a counter-clockwise A-B path in any SMT of $\{A_i\} \cup \{B_i\}$ must have one edge greater than $\lambda$. This proves the claim.
\end{proof}
        
        Now, for any SMT of $\{A_i\} \cup \{B_i\}$, let $\mathcal C$ be a counter-clockwise A-B path (this exists due to Lemma \ref{left_right_turn_path}). From Claim \ref{more_than_lambda}, we know that there is an edge $e$ in $\mathcal C$, with a length greater than $\lambda$. On removing the edge $e$, the SMT splits into a forest of two trees. Let the trees be $\mathcal T_x$ and $\mathcal T_y$. As we assumed that there are two edges required to disconnect $\{A_i\}$ and $\{B_i\}$, there must exist an A-B path in either $\mathcal T_x$ or $\mathcal T_y$. Without loss of generality, let $\mathcal T_x$ contain an A-B path, and hence $\mathcal T_x$ contains at least one point from $\{A_i\}$ and at least one point from $\{B_i\}$. Further, $\mathcal T_y$ must contain at least one terminal point (as it must contain all terminal points in one of the sides of the removed edge $e$). If $\mathcal T_y$ contains a point from $\{A_i\}$, then the polygon $\{A_i\}$ has vertices both from $\mathcal T_x$ and $\mathcal T_y$; otherwise, if $\mathcal T_y$ contains a point from $\{B_i\}$, then the polygon $\{B_i\}$ has vertices both from $\mathcal T_x$ and $\mathcal T_y$. 
        
        This means that either the polygon $\{A_i\}$ or the polygon $\{B_i\}$ will contain at least one node from each of $\mathcal T_x$ and $\mathcal T_y$. Further, as any given vertex must be either in $\mathcal T_x$ or in $\mathcal T_y$, either $\{A_i\}$ or $\{B_i\}$ must contain two consecutive vertices $U_i$ and $U_{i + 1}$ such that one of them is in $\mathcal T_x$ and the other is in $\mathcal T_y$. We simply connect $U_i$ and $U_{i + 1}$ by the polygon edge which is of length $1$ (if $U_i, U_{i + 1} \in \{A_i\}$) or of length $\lambda$ (if $U_i, U_{i + 1} \in \{B_i\}$), giving us back a tree $\mathcal{T}'$containing all the terminals. However we discarded an edge of length greater than $\lambda$ and added back an edge of length at most $\lambda$ in this process, which means that the total length of $\mathcal{T}'$ is strictly less than the SMT we started with. This is a contradiction.
\end{proof}


We now proceed to further investigate the connectivity of $\{A_i\}$ and $\{B_i\}$. 

\begin{lemma} \label{steiner_path_from_A_to_A}
    Consider an \smtpoly for $n \ge 13$ and  $\lambda \ge \lambda_{1}$. There must exist $j \in [n]$ and a Steiner point $S_1$, such that terminals $A_j, A_{j + 1}$ form a path $A_j$, $S_1$, $A_{j+1}$ in the SMT and each A-B path passes through $S_1$; where 
    
    $$\lambda_{1} = \frac{1}{1 - 4 \sin \frac{\pi}{n}}$$
\end{lemma}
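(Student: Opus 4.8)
\textbf{Proof plan for Lemma~\ref{steiner_path_from_A_to_A}.}
The plan is to combine \Cref{mincut_1} with the structural observations already established. By \Cref{mincut_1}, for $n \ge 13$ and $\lambda > \lambda_1$ there is a single edge $e^\star$ whose removal disconnects $\{A_i\}$ from $\{B_i\}$; equivalently, every A-B path in the \smtpoly passes through $e^\star$. Write $e^\star = XY$ and let $\mathcal T_A$, $\mathcal T_B$ be the two components obtained by deleting $e^\star$, with $\{A_i\} \subseteq V(\mathcal T_A)$ and $\{B_i\} \subseteq V(\mathcal T_B)$ (this labelling is forced since all of $\{A_i\}$ lies on one side and all of $\{B_i\}$ on the other). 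So $X \in V(\mathcal T_A)$ and $Y \in V(\mathcal T_B)$. The first step is to argue that $X$ must be a Steiner point: if $X$ were some terminal $A_k$, then the edge $A_kY$ leaves a vertex of $\{A_i\}$ and heads toward $\{B_i\}$, and I would derive a contradiction with \Cref{no_in_to_B} together with \Cref{convex-steiner} — since the whole SMT lies in $\mathrm{CH}(\{B_i\})$ and $Y$ is on the $\{B_i\}$-side of the cut, the edge $A_kY$ crosses the annular region; I would show it must cross some edge $A_mA_{m+1}$ of the inner polygon (as $\{A_i\}$ itself forms a closed polygon $n-2$ of whose edges, plus the path through the vertical gadget, are in the tree — actually more directly: $A_k$ is on the boundary of $\mathrm{CH}(\{A_i\})$ and $Y$ is outside $\mathrm{CH}(\{A_i\})$, so segment $A_kY$ exits the convex polygon $\{A_i\}$ through one of its sides), and then the argument of \Cref{no_in_to_B} (the angle $\angle YA_mA_k$ is obtuse, so $A_kY$ is the longest side of triangle $A_kYA_m$ and can be swapped out) gives a shorter tree — contradiction. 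Hence $X$ is a Steiner point; call it $S_1$.

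The second step is to identify the two inner neighbours of $S_1$. Since $S_1$ is a Steiner point it has degree exactly $3$ (\Cref{smt-prop}); one incident edge is $e^\star = S_1Y$ going to the $\{B_i\}$-side, so the other two incident edges, say $S_1Z_1$ and $S_1Z_2$, stay inside $\mathcal T_A$. I would next show $Z_1, Z_2 \in \{A_i\}$, i.e.\ both are terminals of the inner polygon. Suppose not, say $Z_1$ is a Steiner point. Then follow the counter-clockwise (or clockwise) path out of $S_1$ through $Z_1$: by \Cref{mincut_1} it cannot reach $\{B_i\}$ (that would need a second edge across the cut), so by the argument in \Cref{left_right_turn_path} it terminates at a vertex of $\{A_i\}$, and iterating produces a repetition among $n+1$ inner vertices hence a cycle — contradiction. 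Actually I only need the local statement: both neighbours of $S_1$ inside $\mathcal T_A$, when we chase monotone-turn paths, must hit $\{A_i\}$; combined with the lune/wedge properties restricting how short these chases can be, and with \Cref{subchain-smt} controlling how subintervals of $\{A_i\}$ connect, I conclude $Z_1$ and $Z_2$ are themselves terminals $A_j, A_{j+1}$. The key point that forces them to be \emph{consecutive} ($A_{j+1}$ right after $A_j$) is that $S_1$ together with its two terminal neighbours and the rest of the tree realizes a topology where $\{A_i\}$ is covered; if $A_j$ and $A_{j'}$ with $j' > j+1$ were the two neighbours of $S_1$ with no path between them avoiding $S_1$, then the intermediate inner vertices $A_{j+1}, \dots, A_{j'-1}$ would have to attach through $S_1$ as well, violating $\deg(S_1) = 3$ or forcing these to attach to $\{B_i\}$-side (contradicting the mincut) or creating a cycle.

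Finally I record the conclusion: $A_j, S_1, A_{j+1}$ is a path in the SMT and, because $e^\star = S_1Y$ lies on every A-B path (\Cref{mincut_1}) and $S_1$ is the endpoint of $e^\star$ on the $\{A_i\}$-side, every A-B path passes through $S_1$. The main obstacle I anticipate is the second step — pinning down that the two non-$e^\star$ neighbours of $S_1$ are genuinely the two \emph{terminal} endpoints of a \emph{single} inner polygon edge, rather than, say, Steiner points leading into longer chains or non-adjacent terminals; this is where I expect to lean hardest on \Cref{subchain-smt}, the wedge/lune properties, and the acyclicity of the tree, possibly also re-using the edge-swap trick from \Cref{no_in_to_B} to rule out configurations where chasing a monotone path from $S_1$ stays inside the inner polygon but wanders in a way that would make the tree non-optimal. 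The rest (degree count, convex-hull containment, the final "every A-B path through $S_1$" statement) should be routine given the earlier lemmas.
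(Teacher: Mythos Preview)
Your first step does not work as written. If $X = A_k$ is a vertex of the inner polygon and $Y$ lies outside $\mathrm{CH}(\{A_i\})$, the segment $A_kY$ exits the convex polygon \emph{at the vertex} $A_k$, not through a side $A_mA_{m+1}$; so there is no crossing to exploit, and the swap argument of \Cref{no_in_to_B} never gets off the ground. (You also have not justified that $Y$ lies outside $\mathrm{CH}(\{A_i\})$; $Y$ being on the $\{B_i\}$-side of the cut only means $Y \in \mathcal T_B$, not that $Y$ is geometrically outside the inner polygon.) The paper handles this point by a completely different device: assuming $S_1 = A_k$, it deletes $S_1S_2$, replaces the component $\mathcal T_x \supseteq \{A_i\}$ by an MST of $\{A_i\}$ in which $A_k$ is \emph{not} a leaf (so $A_{k-1}A_k$ and $A_kA_{k+1}$ are both present), and then reattaches $A_kS_2$. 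Now $A_k$ has degree three, but $\angle A_{k-1}A_kA_{k+1} > \tfrac{2\pi}{3}$, forcing one of the other two angles at $A_k$ below $\tfrac{2\pi}{3}$, contradicting \Cref{smt-prop}.

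Your second step is, as you say, the hard part, and the sketch you give (chase monotone-turn paths from $Z_1$, $Z_2$ and produce a cycle among inner vertices) is not the mechanism the paper uses and seems unlikely to close on its own: a monotone-turn path from $Z_1$ into $\{A_i\}$ does terminate at some $A_m$, but that tells you nothing about $Z_1$ itself being a terminal, and the iteration-to-a-cycle idea from \Cref{left_right_turn_path} needs the paths to revisit vertices of $\{A_i\}$ in a way your setup does not force. The paper instead proceeds in two sub-claims. First (\Cref{PQ_outside_A}) it shows the two non-$S_2$ neighbours $P,Q$ of $S_1$ cannot both lie in the \emph{interior} of $\{A_i\}$, via a three-case analysis on the positions of $S_1,S_2$ relative to $\{A_i\}$ (each case produces a shorter tree by either intersecting $S_1S_2$ with a polygon edge, counting edges of $\mathcal T_2$ that cross $\{A_i\}$ against the bound on Steiner points, or using triangle inequality after replacing subtrees by the chains of \Cref{subchain-smt}). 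Second (\Cref{PQ_cons}), taking $P$ to be the one not in the interior, it assumes $P$ is a Steiner point and draws the two tangents from $P$ to $\{A_i\}$; a short angle computation shows one neighbour $P_1 \neq S_1$ of $P$ lies outside the tangent-cone, and iterating this ``away from $\{A_i\}$'' step produces a second A-B path edge-disjoint from the one through $S_1S_2$, contradicting \Cref{mincut_1}. Hence $P \in \{A_i\}$, then by \Cref{PQ_outside_A} $Q$ is not interior and the same tangent argument gives $Q \in \{A_i\}$. Finally, consecutiveness of $P,Q$ comes from \Cref{no_pt_in_PSQ}: if $P,Q$ were non-adjacent, the intermediate $A$-vertices would lie inside triangle $PS_1Q$. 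None of these three ingredients (the interior/exterior case split, the tangent-cone argument, and the use of \Cref{no_pt_in_PSQ} for adjacency) appears in your plan.
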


\begin{proof}
    From Lemma \ref{left_right_turn_path}, we know that there exists one clockwise A-B path and one counterclockwise A-B path in any SMT of $\{A_i\} \cup \{B_i\}$. Let a clockwise A-B path start from $A_r$ and a counter-clockwise A-B path start from $A_l$. Further following from Lemma~\ref{mincut_1}, as there is one edge common to all A-B paths, the clockwise A-B path from $A_r$ and the counter-clockwise A-B path from $A_l$ must share a common edge $S_1S_2$. Therefore, each A-B path must pass through $S_1$ and $S_2$. Without loss of generality we assume that point $S_1$ is closer to the polygon $\{A_i\}$ than $S_2$. This means $S_1$ is either a Stiener point or a terminal vertex of $\{A_i\}$.

    \begin{claim} \label{S_1_notin_A}
        $S_1$ is not a vertex in $\{A_i\}$
    \end{claim}
\begin{proof}
    For the sake of contradiction, we assume that $S_1$
     to be a vertex in $\{A_i\}$, let $S_1 = A_k$ in some SMT $\mathcal T_0$. We disconnect the edge $S_1S_2$ from $\mathcal T_0$, which results in the formation of a forest of two trees $\mathcal T_x$ and $\mathcal T_y$ such that $S_1 = A_k \in \mathcal T_x$ and $S_2 \in \mathcal T_y$.

    $\mathcal T_x$ must contain all vertices of $\{A_i\}$ and $\mathcal T_y$ must contain all vertices of $\{B_i\}$, as there would be an A-B path in the graph otherwise (contradicting that $S_1S_2$ disconnects $\{A_i\}$ and $\{B_i\}$). We replace $\mathcal T_x$ with the SMT of $\{A_i\}$, which is also an MST (from \cite{weng1995steiner}). Since all MST's are of the same length, we choose such an MST in which $A_k$ is not a leaf node. This means $A_{k-1}A_k$ and $A_kA_{k+1}$ are edges in the chosen MST of $\{A_i\}$. We now add back the edge $A_kS_2$, resulting in a connected tree $\mathcal T_0'$ of $\{A_i\} \cup \{B_i\}$. Since we replaced the tree $\mathcal T_x$ with an SMT of $\{A_i\}$, the total length of the $\mathcal T_0'$ must not be more than the total length of the $\mathcal T_0$.

    However, we observe that $A_k$ has three neighbours in $\mathcal T_0'$, which are $A_{k+1}, A_{k-1}, S_2$. However $\angle A_{k-1}A_kA_{k+1} > \frac{2 \pi}{3}$. This means either $\angle S_2A_kA_{k+1} < \frac{2 \pi}{3}$ or $\angle A_{k-1}A_kS_2 < \frac{2 \pi}{3}$. But due to Proposition \ref{smt-prop}, this cannot form an SMT. Therefore $\mathcal T_0'$ is not optimal; and hence, $\mathcal T_0$ cannot be optimal as well. This proves  the claim.
    \end{proof}

    Therefore, $S_1$ must be a Steiner point. Let $P$ and $Q$ be the neighbours of $S_1$ other than $S_2$, such that $\angle PS_1S_2$ is a clockwise turn while $\angle QS_1S_2$ is a counter-clockwise turn. This means that the clockwise A-B path from $A_r$ passes through $P$ and the counter-clockwise A-B path from $A_l$ passes through $Q$. We prove that $P$ and $Q$ are consecutive vertices of $\{A_i\}$ in some SMT of $\{A_i\} \cup \{B_i\}$.

    \begin{claim} \label{PQ_outside_A}
         $P$ and $Q$ cannot simultaneously lie in the interior of the polygon $\{A_i\}$. 
    \end{claim}
\begin{proof}
    We assume for the sake of contradiction that both $P$ and $Q$ lie in the interior of the polygon $\{A_i\}$. On deleting the edge $S_1S_2$, the SMT of $\{A_i\} \cup \{B_i\}$ splits into two trees $\mathcal T_1$ (rooted at $S_1$) and $\mathcal T_2$ (rooted at $S_2$). Further, as all A-B paths pass through $S_1S_2$, all vertices of $\{A_i\}$ must be in $\mathcal T_1$ whereas all vertices of $\{B_i\}$ must lie in  $\mathcal T_2$. Further, $\mathcal T_1$ must be the SMT of $\{A_i\} \cup \{S_1\}$ and $\mathcal T_2$ must be the SMT of $\{B_i\} \cup \{S_2\}$.

    \begin{itemize}
    \item \textit{Case I: One point in $\{S_1, S_2\}$ lies in the interior of $\{A_i\}$ and the other point lies in the exterior of $\{A_i\}$:} This means that the edge $S_1S_2$ crosses some polygon edge of $\{A_i\}$, call it $A_mA_{m+1}$. Let $D$ be the intersection of $A_mA_{m+1}$ and $S_1S_2$. We replace $\mathcal T_1$ with an MST of $\{A_i\}$ that contains the edge $A_mA_{m+1}$ (this can never lead to increase in total tree length due to \cite{weng1995steiner}) and remove the line segment $S_1D$ from $\mathcal T_2$. This forms a tree connecting the terminal set $\{A_i\} \cup \{B_i\}$ which has a total length smaller than the SMT we started with, which is a contradiction.
    
    \item \textit{Case II: Both $S_1$ and $S_2$ lie in the interior of polygon $\{A_i\}$:} We further consider two cases for this:

    \begin{itemize}
        \item Consider that there is at least one polygon edge $A_mA_{m+1}$ of $\{A_i\}$ such that it does not intersect with $\mathcal T_2$. Then we can replace $\mathcal T_1$ by the MST of $\{A_i\}$ which does not contain the edge $A_mA_{m+1}$ without reducing the total edge. However, this will be a connecting tree of $\{A_i\} \cup \{B_i\}$ with a smaller total length than the tree we started with (as we had removed the edge $S_1S_2$ previously), which is a contradiction.
        \item Now, consider that all polygon edges of $\{A_i\}$ intersect with some edge in $\mathcal T_2$. Since $\mathcal T_2$ is rooted at $S_2$ which lies in the interior of $\{A_i\}$, there must be $n$ distinct edges crossing the polygon $\{A_i\}$. However, $\mathcal T_2$ must be the SMT of the points $\{S_2\} \cup \{B_i\}$, which means there can be at most $(n - 1)$ Steiner points other than $S_2$ (from~\Cref{smt-prop}). Therefore, one of these $n$ edges must have a point in $\{B_i\}$ as one of its endpoints, contradicting Observation \ref{no_in_to_B}.

    \end{itemize}
    
    \item \textit{Case III: Both $S_1$ and $S_2$ lie in the exterior of polygon $\{A_i\}$:} This means that the edges $S_1P$ and $S_1Q$ intersect the polygon edges of $\{A_i\}$. Further, from Observation \ref{no_pt_in_PSQ}, there cannot be any terminal inside the triangle $S_1PQ$. Hence, $S_1P$ and $S_1Q$ must intersect the same polygon edge of $\{A_i\}$ (otherwise intermediate vertices from $\{A_i\}$ would lie in the triangle $S_1PQ$). Let this edge be $A_tA_{t+1}$. Let $P_1$ and $Q_1$ be the points of intersection of $A_tA_{t+1}$ with $S_1P$ and $S_1Q$ respectively.

    We now remove the line segments $S_1P_1$ and $S_1Q_1$ from $\mathcal T_1$. This results in another split into two connected trees $\mathcal T_P$ (containing $P_1$, $P$ and a subset of $\{A_i\}$) and $\mathcal T_Q$ (containing $Q_1$, $Q$ and the remaining vertices of $\{A_i\}$). 
    
    We observe that the terminals in $\mathcal T_P$ and $\mathcal T_Q$ form consecutive intervals of the edges in $\{A_i\}$. To see why, consider the opposite, \textit{i.e.} there are vertices $A_{i_1}, A_{i_2}, A_{i_3}, A_{i_4}$ appearing in that order in $\{A_i\}$ such that $A_{i_1}, A_{i_3} \in \mathcal T_P$ whereas $A_{i_2}, A_{i_4} \in \mathcal T_Q$. As $\mathcal T_P$ and $\mathcal T_Q$ lie in the interior of $\{A_i\}$, the path from $A_{i_1}$ to $A_{i_3}$ in $\mathcal T_P$ must cross the path from $A_{i_2}$ to $A_{i_4}$ in $\mathcal T_Q$. However, there cannot be crossing paths in the original SMT of $\{A_i\} \cup \{B_i\}$ (due to~\Cref{smt-prop}). 
    
    Let $\{A_{w}, A_{{w}+1}, \ldots, A_t\}$ be the terminals in $\mathcal T_P$ and the remaining terminals in $\{A_i\}$ are in $\mathcal T_Q$. Again, let $\mathcal T_P'$, $\mathcal T_Q'$ be defined as:
    $$E(\mathcal T_P') = \{A_wA_{w+1}, A_{w+1}A_{w+2}, \ldots, A_{t-1}A_{t}\} \cup \{A_tP_1\}$$
    and
    $$E(\mathcal T_Q') = \{A_{t+1}A_{t+2}, \ldots, A_{w-2}A_{w-1}\} \cup \{Q_1A_{t-1}\}$$
    From Observation \ref{subchain-smt}, we know that $\mathcal T_P'$ is the SMT of $\{A_{w}, A_{{w}+1}, \ldots, A_t\} \cup \{P_1\}$ and $\mathcal T_Q'$ is the SMT of $\{A_{t + 1}, A_{{t}+2}, \ldots, A_{w-1}\} \cup \{Q_1\}$. Therefore, $|\mathcal T_P'| \le |\mathcal T_P|$ and $|\mathcal T_Q'| \le |\mathcal T_Q|$. This means that $| \mathcal T_1 | \ge |\mathcal T_1'|$, where $\mathcal T_1' = \{S_1P_1, S_1Q_1\} \cup \mathcal T_P' \cup \mathcal T_Q'$. Further, $\mathcal T_1'$ is also a connecting tree of $\{S_1\} \cup \{A_i\}$ and as $\mathcal T_1$ is an SMT of $\{S_1\} \cup \{A_i\}$, then $\mathcal T_1'$ must also be an SMT with $|
    \mathcal T_1'| = |\mathcal T_1|$.

    However, We can remove $S_1P_1$ and $P_1A_t$ from $\mathcal T_1'$ and add $S_1A_t$ to get another connecting tree of $\{S_1\} \cup \{A_i\}$, but with shorter total length (as $\overline{S_1P_1} + \overline{P_1A_t} > \overline{S_1A_t}$ from triangle inequality). This contradicts the optimality of $\mathcal T_1'$ which was derived to be an SMT of $\{S_1\} \cup \{A_i\}$.

    \end{itemize}
This proves the claim.
\end{proof}

    We proceed to prove a stronger claim regarding $P$ and $Q$.
    
    \begin{claim} \label{PQ_cons}
    $P$ and $Q$ are consecutive vertices of $\{A_i\}$ in any SMT of $\{A_i\} \cup \{B_i\}$.
    
    \end{claim}
    \begin{proof}
    We first prove that $P, Q$ are vertices of $\{A_i\}$. 
    
    From Claim \ref{PQ_outside_A}, we know that at least one among $P$ and $Q$ must not be in the interior of polygon $\{A_i\}$. Without loss of generality, let it be $P$. We now show that $P$ is a vertex of $\{A_i\}$. For the sake of contradiction we assume that $P$ is not a vertex of $\{A_i\}$ \textit{i.e.} $P$ is a Steiner point. Let $\overrightarrow{PF_1}$ and $\overrightarrow{PF_2}$ be tangents from $P$ to $\{A_i\}$ where $F_1, F_2$ are the points of tangency on $\{A_i\}$. As $\overrightarrow{PF_1}$ and $\overrightarrow{PF_2}$ are tangents, $\angle F_1PF_2 < \pi$. We denote the region between the tangents $\overrightarrow{PF_1}$ and $\overrightarrow{PF_2}$ which contains the all the points in $\{A_i\}$ as $\mathcal R$.
    
    From any Steiner point $H$, which lies outside $\mathcal R$, we can choose a neighbour $H_1$ of $H$ such that $\overrightarrow{HH_1}$ is not directed towards $\mathcal R$. Further we now show that there is one neighbour $P_1$ of $P$ such that $P_1$ is not in $\mathcal R$ and $P_1 \neq S_1$. 

    \textit{Case I: $S_1$ lies in $\mathcal R$.} However there must be another neighbour $P_1$ of $P$ not in $\mathcal R$ (as $\angle F_1PF_2 < \pi$) but as $P_1$ is outside $\mathcal R$, we must have $P_1 \neq S_1$.
    
    \textit{Case II: $S_1$ does not lie in $\mathcal R$ (Figure \ref{steiner_path_from_A_to_A_fig}).} As the counter-clockwise A-B path from $A_l$ passes through $Q$, $A_l$ must be to the left of the line $L_{QS_1}$ if the line is given a orientation from $Q$ to $S_1$. This means that one of the tangents from $P$ (Without loss of generality assume it to be $\overrightarrow{PF_1}$) intersects with the line $L_{QS_1}$. Therefore, taking angles in counter-clockwise order, we have:
    
\begin{figure}[h]
\centering
\includegraphics[width=13cm]{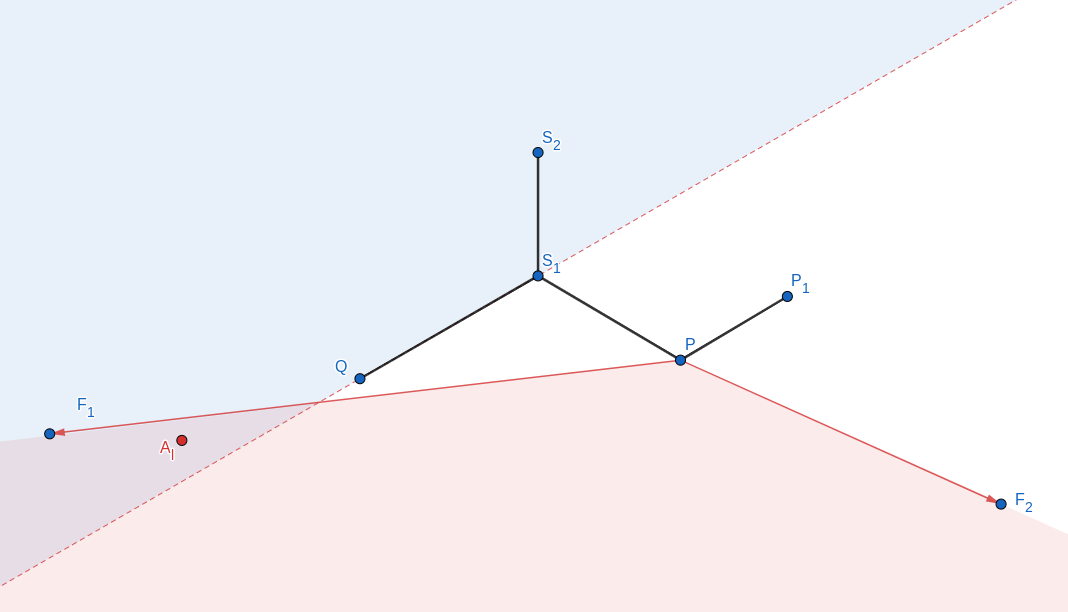}
\caption{Case II of~\Cref{PQ_cons}}
\label{steiner_path_from_A_to_A_fig}
\end{figure}

    \begin{align*}
        & \angle S_1PF_1 < \frac{\pi}{3} & [\text{as } \overrightarrow{PF_1} \text{ intersects } L_{QS_1}] \\
        \implies & \angle S_1PF_2 = \angle S_1PF_1 + \angle F_1PF_2 < \frac{\pi}{3} + \pi = \frac{4 \pi}{3}\\
        \implies & \angle F_2PS_1 = 2 \pi - \angle S_1PF_2 > 2 \pi - \frac{4 \pi}{3} = \frac{2 \pi}{3}
    \end{align*}
    
    Hence there must exist one neighbour $P_1$ of $P$ lying outside the $\mathcal R$, precisely in the region bounded by the rays $\overrightarrow{PF_2}$ and $\overrightarrow{PS_1}$ with $P_1 \ne S_1$. 
    
    Further we can choose a neighbour $P_2$ of $P_1$ such that $\overrightarrow{P_1P_2}$ is directed away from $\mathcal R$. We can continue choosing $P_2, P_3, \ldots $ such that $\overrightarrow{P_iP_{i+1}}$ is directed away from the region $\mathcal R$. Moreover, the path  $P, P_1, P_2, \ldots$ must end at some point $B_k$ as it cannot end in any vertex of $\{A_i\}$ (since all vertices of $\{A_i\}$ are in $\mathcal R$). Now, let $\mathcal C_1$ be the path from $A_r$ to $B_k$ (which passes through $P$ and $P_1$) and let $\mathcal C_2$ be the counter-clockwise A-B path from $A_l$ (passing through $Q$, $S_1$ and $S_2$). We observe that $\mathcal C_1$ and $\mathcal C_2$ are two edge disjoint A-B paths, which is a contradiction to~\Cref{mincut_1}. This proves that $P$ is indeed a vertex in $\{A_i\}$. Therefore by Claim \ref{PQ_outside_A}, $Q$ does not lie inside $\{A_i\}$ and repeating this same argument on $Q$ yields that $Q$ is also a vertex of $\{A_i\}$. 

    Now, to prove that $P$ and $Q$ are consecutive vertices of $\{A_i\}$, we use Observation \ref{no_pt_in_PSQ}. Observation \ref{no_pt_in_PSQ} implies that there must not be any other point of the SMT in the triangle $PS_1Q$. This means that $P$ and $Q$ must be consecutive vertices of $\{A_i\}$, otherwise all polygon vertices of $\{A_i\}$ occurring in between $P$ and $Q$ would be inside the triangle $PS_1Q$ (as $\angle PS_1Q = \dfrac {2 \pi}{3}$ and $n \ge 13$). This proves the claim.
    \end{proof}

    Therefore, $P$ and $Q$ are consecutive vertices $A_j, A_{j+1}$ of the polygon $\{A_i\}$, for some $j \in [n]$ such that $A_j$, $S_1$, $A_{j+1}$ is a path in the SMT, where $S_1$ is a Steiner point lying on all A-B paths.
\end{proof}

Our next step is to investigate some more structural properties of an SMT for $\{A_i\} \cup \{B_i\}$. From \cite{du1987steiner}, we may guess that there would be a lot of polygon edges of both $\{A_i\}$ and $\{B_i\}$ in an SMT. We prove the following Lemma, stating that there is an SMT of $\{A_i\} \cup \{B_i\}$ which contains $(n - 2)$ polygon edges of $\{A_i\}$.

\begin{lemma} \label{n-2_A_poly_edges}
    For an \smtpoly with aspect ratio $\lambda$, 
    $\lambda > \lambda_{1} = \frac{1}{1 - 4 \sin \frac{\pi}{n}}$, let $S_1$ be the Steiner point such that all A-B paths pass through $S_1$. Let $A_j$ and $A_{j+1}$ be vertices of $\{A_i\}$ which are connected to $S_1$. Then, there exists an SMT of $\{A_i\} \cup \{B_i\}$ having $(n - 2)$ polygon edges of $\{A_i\}$ other than $A_jA_{j+1}$.
\end{lemma}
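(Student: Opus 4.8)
The plan is to take the SMT $\mathcal{T}$ supplied by the hypotheses --- it contains a Steiner point $S_1$ with $A_j,S_1,A_{j+1}$ a path and with every A-B path passing through $S_1$ --- and to rebuild its $\{A_i\}$-side entirely out of polygon sides. First I would locate the cut edge: writing $S_2$ for the third neighbour of the Steiner point $S_1$, it follows from Lemma~\ref{mincut_1} together with the proof of Lemma~\ref{steiner_path_from_A_to_A} that $S_1S_2$ is the unique edge of $\mathcal{T}$ whose deletion separates $\{A_i\}$ from $\{B_i\}$. Deleting it splits $\mathcal{T}$ into $\mathcal{T}_1$ (containing $\{A_i\}\cup\{S_1\}$) and $\mathcal{T}_2$ (containing $\{B_i\}\cup\{S_2\}$), and a one-line exchange argument --- recombine any shorter tree of $\{A_i\}\cup\{S_1\}$ with the edge $S_1S_2$ and with $\mathcal{T}_2$ --- shows that $\mathcal{T}_1$ is an SMT of $\{A_i\}\cup\{S_1\}$. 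In $\mathcal{T}_1$ the vertex $S_1$ has degree $2$, with neighbours $A_j$ and $A_{j+1}$; deleting $S_1$ together with its two incident edges breaks $\mathcal{T}_1$ into two subtrees $\mathcal{T}_1^{L}\ni A_j$ and $\mathcal{T}_1^{R}\ni A_{j+1}$ whose vertex sets together contain all of $\{A_i\}$.

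The crux --- and the step I expect to fight hardest with --- is to show that the terminal sets of $\mathcal{T}_1^{L}$ and $\mathcal{T}_1^{R}$ are two complementary consecutive arcs of $\{A_i\}$, say $\{A_{k+1},\ldots,A_j\}$ and $\{A_{j+1},\ldots,A_k\}$ for some index $k\ne j$. I would reuse the interleaving argument from Case III of the proof of Claim~\ref{PQ_outside_A}: were there polygon vertices $A_{i_1},A_{i_2},A_{i_3},A_{i_4}$ occurring in this cyclic order around $\{A_i\}$ with $A_{i_1},A_{i_3}\in\mathcal{T}_1^{R}$ and $A_{i_2},A_{i_4}\in\mathcal{T}_1^{L}$, then the path from $A_{i_1}$ to $A_{i_3}$ and the path from $A_{i_2}$ to $A_{i_4}$ would be vertex-disjoint (they lie in different components of $\mathcal{T}_1-S_1$) yet, being contained in the convex region $\mathrm{CH}(\{A_i\}\cup\{S_1\})$ and joining its boundary points in alternating order, would be forced to cross, contradicting the non-crossing property of an SMT (Proposition~\ref{smt-prop}). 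The only subtlety is to rule out a path evading the crossing by detouring around $S_1$; this is fine because $S_1$ --- the only vertex that can protrude outside the polygon, and only slightly since $\angle A_jS_1A_{j+1}=\frac{2\pi}{3}$ keeps it near the side $A_jA_{j+1}$ --- has already been removed.

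Finally I would observe that $\mathcal{T}_1^{R}$ is a subtree of the SMT $\mathcal{T}_1$ attached to the remainder only at $A_{j+1}$, and $\mathcal{T}_1^{L}$ only at $A_j$, so each must itself be an SMT of its own arc; otherwise splicing a shorter tree in through that single vertex would shorten $\mathcal{T}_1$. By Observation~\ref{subchain-smt}, applied with the extra point chosen to be an endpoint of the arc, an SMT of a consecutive arc of $\{A_i\}$ is exactly the path along the corresponding polygon sides. Replacing $\mathcal{T}_1^{R}$ and $\mathcal{T}_1^{L}$ by these polygon paths leaves the total length unchanged and yields a tree $\mathcal{T}'=\mathcal{T}_2\cup\{S_1S_2,\,S_1A_j,\,S_1A_{j+1}\}\cup(\text{the two polygon paths})$ interconnecting $\{A_i\}\cup\{B_i\}$ with $|\mathcal{T}'|=|\mathcal{T}|$; hence $\mathcal{T}'$ is an SMT, as it must be crossing-free (otherwise uncrossing it would beat the optimum $\mathcal{T}$). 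If $\mathcal{T}_1^{R}$ spans $m$ of the vertices of $\{A_i\}$, then its polygon path uses $m-1$ sides of $\{A_i\}$ while that of $\mathcal{T}_1^{L}$ uses $n-m-1$, for a total of $n-2$ sides of $\{A_i\}$, none of them equal to $A_jA_{j+1}$ (which is the side joining the two arcs). This is exactly the assertion. The genuine difficulty is concentrated in the consecutive-arc claim, where planarity of the SMT must be leveraged carefully; the rest is bookkeeping and standard exchange steps.
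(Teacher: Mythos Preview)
Your approach is correct but takes a longer route than the paper's. The paper sidesteps the consecutive-arc argument entirely by a neat swap: instead of deleting $S_1S_2$ and then excising $S_1$, it deletes the two edges $S_1A_j$ and $S_1A_{j+1}$ and \emph{simultaneously adds} the polygon side $A_jA_{j+1}$. This single move merges your two pieces $\mathcal{T}_1^{L}$ and $\mathcal{T}_1^{R}$ into one connected tree $\mathcal{T}_x$ spanning all of $\{A_i\}$, while the other component $\mathcal{T}_y$ carries $S_1$, $S_2$ and all of $\{B_i\}$. One then replaces $\mathcal{T}_x$ wholesale by an MST of $\{A_i\}$ that contains the edge $A_jA_{j+1}$; by~\cite{du1987steiner} such an MST is already an SMT of $\{A_i\}$, hence no longer than $\mathcal{T}_x$. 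Undoing the swap (remove $A_jA_{j+1}$, restore $S_1A_j$ and $S_1A_{j+1}$) yields a tree of at most the original length that visibly contains the $(n-2)$ polygon sides of $\{A_i\}$ other than $A_jA_{j+1}$. The interleaving/planarity step you flag as the crux simply never arises. Your argument, by contrast, dissects the $A$-side into two arcs, proves each is consecutive via the non-crossing property, and then invokes Observation~\ref{subchain-smt} twice; this is sound, and the consecutive-arc claim does go through as you sketch, but it buys nothing extra here over the paper's one-line exchange.
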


\begin{proof}
    Let $\mathcal T_0$ be any SMT of $\{A_i\} \cup \{B_i\}$. From Lemma \ref{steiner_path_from_A_to_A}, we know that there exists $S_1$, $A_j$ and $A_{j + 1}$ such that $S_1$ is a Steiner point which is a part of all A-B paths, and $A_jS_1A_{j+1}$ is a path $\mathcal T_0$.

    From $\mathcal T_0$, we remove the edges $S_1A_j$ and $S_1A_{j + 1}$ and add the edge $A_jA_{j+1}$. This results in a forest of two disjoint trees $\mathcal T_x$ and $\mathcal T_y$. One of these trees (say $\mathcal T_x$) must contain all terminal points from $\{A_i\}$ and the other tree must contain all terminals from $\{B_i\}$, as no more A-B paths exist after we removed the edges $S_1A_j$ and $S_1A_{j+1}$. Therefore we have $|\mathcal T_0| = |\mathcal T_x| + |\mathcal T_y| - 1 + \overline{S_1A_j} + \overline{S_1A_{j+1}}$.

    We further replace $\mathcal T_x$ with a Euclidean minimum spanning tree $\mathcal T_x'$ of $\{A_i\}$ such that the edge $A_jA_{j+1}$ is present in $\mathcal T_x'$. From \cite{du1987steiner}, we know that $|\mathcal T_x'| \le |\mathcal T_x|$. We now remove the edge $A_jA_{j+1}$ and add back the edges $S_1A_j$ and $S_1A_{j+1}$ which gives a connected tree $\mathcal T_0'$ of $\{A_i\} \cup \{B_i\}$. Therefore we have:
    $$|\mathcal T_0'| = |\mathcal T_x'| + |\mathcal T_y| - 1 + \overline{S_1A_j} + \overline{S_1A_{j+1}} \le |\mathcal T_x| + |\mathcal T_y| - 1 + \overline{S_1A_j} + \overline{S_1A_{j+1}} = |\mathcal T_0|$$

    This means $\mathcal T_0'$ must be an SMT. However, all polygon edges of polygon $\{A_i\}$ appearing in $\mathcal T_x'$ also appear in $\mathcal T_0'$ as well, except $A_jA_{j+1}$. Therefore, the SMT $\mathcal T_0'$ has $(n - 2)$ polygon edges of the polygon $\{A_i\}$.
\end{proof}

With these set of results in hand, we can now show that there exists an SMT of $\{A_i\} \cup \{B_i\}$ following a \emph{singly connected topology}. To show this, we start with any \smtpoly, $\mathcal T_0$, that satisfies all the results derived so far and transform it into a Steiner tree of \emph{singly connected topology} having total length not longer than the initial Steiner tree $\mathcal T_0$.

\begin{theorem} \label{final_proof}
There exists an \smtpoly following a singly connected topology for $n \ge 13$ and $\lambda \ge \lambda_{1}$, where
$$
\lambda_{1} = \frac{1}{1 - 4 \sin \frac{\pi}{n}}
$$
\end{theorem}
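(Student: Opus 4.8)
The plan is to start from an arbitrary SMT $\mathcal T_0$ of $\{A_i\}\cup\{B_i\}$ that already satisfies all the structural results proved so far, and massage it into a singly connected topology without ever increasing the total length. By Lemma~\ref{n-2_A_poly_edges} we may already assume $\mathcal T_0$ contains $(n-2)$ polygon edges of $\{A_i\}$, namely all of them except $A_jA_{j+1}$, where (by Lemma~\ref{steiner_path_from_A_to_A}) $A_j$ and $A_{j+1}$ are both joined to a Steiner point $S_1$ lying on every A-B path, and $A_jS_1A_{j+1}$ is a path in $\mathcal T_0$. So the ``$\{A_i\}$ side'' of the tree is already in the desired shape: an $(n-2)$-edge polygonal chain $A_{j+1}A_{j+2}\cdots A_{j-1}A_j$ feeding into $S_1$ via the two edges $A_jS_1$ and $A_{j+1}S_1$.

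The remaining work is to understand the ``$\{B_i\}$ side''. First I would delete the edges $S_1A_j$ and $S_1A_{j+1}$ (equivalently, cut the unique A-B connection), splitting $\mathcal T_0$ into $\mathcal T_x\supseteq\{A_i\}$ and $\mathcal T_y\supseteq\{B_i\}\cup\{S_1,\dots\}$; since $S_1$ has degree $3$ in $\mathcal T_0$ its third neighbour $S_2$ lies in $\mathcal T_y$, so $\mathcal T_y$ is a Steiner tree of $\{B_i\}\cup\{S_2\}$ (via the edge $S_1S_2$ we will reattach). Now I would argue that $\mathcal T_y$ can be replaced, without increasing length, by the configuration in which $B_j,B_{j+1}$ are joined to a Steiner point $S_b$, $S_b$ is joined to $S_1$, and the other $(n-2)$ polygon edges of $\{B_i\}$ are present. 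The tool here is the vertical-fork construction (Lemma~\ref{lambda_v}) applied inside the trapezoid $A_jA_{j+1}B_{j+1}B_j$: the hypothesis $\lambda\ge\lambda_1$ must be shown to force $\lambda\ge\lambda_v$ (for $n\ge 13$ this is a one-line comparison of $\frac{1}{1-4\sin(\pi/n)}$ with $\frac{\sqrt3+\tan(\pi/n)}{\sqrt3-\tan(\pi/n)}$), which guarantees the fork exists and sits inside the trapezoid. One then compares: replacing $\mathcal T_y\cup\{S_1A_j,S_1A_{j+1}\}$ (restricted appropriately) by the vertical gadget plus the $(n-2)$ outer polygon edges yields a singly connected tree whose length is $|\mathcal T_{vf}|+(n-2)\cdot 1+(n-2)\cdot\lambda$ minus the shared trapezoid contribution, and this must be shown to be $\le |\mathcal T_0|$. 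The cleanest route is an optimality argument mirroring Lemma~\ref{n-2_A_poly_edges}: $\mathcal T_y$ must be an SMT of $\{B_i\}\cup\{S_2\}$, so it cannot be longer than \emph{any} particular tree on that set — in particular not longer than (the relevant part of) the singly connected candidate — and then combine.

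The main obstacle I anticipate is controlling the \emph{location} of $S_2$ and ruling out that $\mathcal T_y$ connects to $\{B_i\}$ in some ``twisted'' way that is incompatible with simply snapping it to the vertical gadget — e.g. an A-B path that enters $\{B_i\}$ far from the $B_jB_{j+1}$ side, so that $S_2$ is not near the trapezoid $A_jA_{j+1}B_{j+1}B_j$. To handle this I would invoke Observation~\ref{no_in_to_B} (no interior Steiner point of $\{A_i\}$ attaches directly to a $B_k$) together with the wedge and lune properties to pin $S_2$ to the exterior of $\{A_i\}$ near the gap, and then argue by a cut-and-reconnect exchange (replace the $\{B_i\}$-subtree by an MST of $\{B_i\}$ containing a prescribed edge, à la \cite{du1987steiner}) that the attachment can be forced to the $B_j,B_{j+1}$ vertices, after which the vertical-fork replacement is length-nonincreasing. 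Once $S_1$ is joined to a $3$-valent Steiner point $S_b$ with $S_bB_j,S_bB_{j+1}$ edges and all $(n-2)$ outer and $(n-2)$ inner polygon edges are present, the resulting topology is by definition singly connected, and by construction its realisation has length $\le|\mathcal T_0|$, hence is itself an \smtpoly; this completes the proof.

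\begin{proof}[Proof sketch to be expanded]
Take any \smtpoly $\mathcal T_0$. By Lemmas~\ref{steiner_path_from_A_to_A} and~\ref{n-2_A_poly_edges} we may assume $\mathcal T_0$ contains a Steiner point $S_1$ on every A-B path, with $A_jS_1A_{j+1}$ a path and all $(n-2)$ polygon edges of $\{A_i\}$ except $A_jA_{j+1}$ present. Cutting $S_1A_j,S_1A_{j+1}$ splits $\mathcal T_0$ into $\mathcal T_x\supseteq\{A_i\}$ and a tree $\mathcal T_y$ containing $\{B_i\}$ and the third neighbour $S_2$ of $S_1$; thus $\mathcal T_y$ is an SMT of $\{B_i\}\cup\{S_2\}$ and $|\mathcal T_0|=|\mathcal T_y|+\overline{S_1A_j}+\overline{S_1A_{j+1}}$. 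Using Observation~\ref{no_in_to_B}, the wedge property (Proposition~\ref{wedge}) and the lune property (Proposition~\ref{lune}), one locates $S_2$ outside $\{A_i\}$ adjacent to the gap, and via an MST-exchange on $\{B_i\}$ (as in \cite{du1987steiner}) one obtains an SMT of the same length in which $B_j$ and $B_{j+1}$ are the attachment vertices. Since $\lambda\ge\lambda_1\ge\lambda_v$ for $n\ge 13$, Lemma~\ref{lambda_v} gives a vertical fork inside the trapezoid $A_jA_{j+1}B_{j+1}B_j$; replacing the relevant subtree of $\mathcal T_0$ by the vertical gadget together with the $(n-2)$ outer polygon edges and the $(n-2)$ inner polygon edges does not increase total length, by optimality of $\mathcal T_y$ as an SMT of $\{B_i\}\cup\{S_2\}$ and of $\mathcal T_x$ as an SMT of $\{A_i\}$. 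The resulting tree realises a singly connected topology and has length $\le|\mathcal T_0|$, hence is an \smtpoly following a singly connected topology.
\end{proof}
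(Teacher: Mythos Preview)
Your starting point is right, and it matches the paper: take $\mathcal T_0$ with the $(n-2)$ inner polygon edges and the path $A_jS_1A_{j+1}$ coming from Lemmas~\ref{steiner_path_from_A_to_A} and~\ref{n-2_A_poly_edges}. But the core of your argument --- handling the $\{B_i\}$-side --- has a genuine gap.

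After you cut $S_1A_j$ and $S_1A_{j+1}$, the tree $\mathcal T_y$ is an SMT of $\{B_i\}\cup\{S_1\}$ (not $\{B_i\}\cup\{S_2\}$: $S_1$ is still in $\mathcal T_y$ as a leaf), and your length identity drops $|\mathcal T_x|=(n-2)$. More importantly, you now want to replace $\mathcal T_y$ by ``$(n-2)$ outer polygon edges plus a fork from $S_1$ to $B_j,B_{j+1}$ via some $S_b$''. For that replacement to be length-nonincreasing you would need this specific tree to be an SMT of $\{B_i\}\cup\{S_1\}$ --- but that depends on \emph{where $S_1$ sits}, and nothing you have established pins $S_1$ (or $S_2$) to the perpendicular bisector of $B_jB_{j+1}$, nor even to the trapezoid $A_jA_{j+1}B_{j+1}B_j$. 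The tools you list (Observation~\ref{no_in_to_B}, the wedge and lune properties, an MST-exchange on $\{B_i\}$) do not give this: an MST-exchange controls the SMT of $\{B_i\}$ alone, not of $\{B_i\}\cup\{S_1\}$ for an unknown $S_1$; and the wedge/lune properties give only local exclusion, not the global ``$S_2$ is adjacent to the gap'' you need. In your own words this is exactly the ``main obstacle'', and the sketch does not actually overcome it.

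The paper sidesteps this obstacle entirely with a Melzak trick. Let $H$ be the equilateral point on $A_jA_{j+1}$ inside $\{A_i\}$; since $\angle A_jS_1A_{j+1}=120^\circ$, one has $\overline{A_jS_1}+\overline{A_{j+1}S_1}=\overline{HS_1}$ with $H,S_1,S_2$ collinear, so swapping the two edges $A_jS_1,A_{j+1}S_1$ for the single edge $HS_1$ preserves length. Now drop the $(n-2)$ inner edges and add $HO$ (the edge to the common centre $O$): the result is a Steiner tree of $\{B_i\}\cup\{O\}$, so it is at least as long as the SMT of $\{B_i\}\cup\{O\}$, whose exact structure is \emph{known} from~\cite{weng1995steiner} (fork at the Torricelli point of $OB_jB_{j+1}$ plus $(n-2)$ outer edges). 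Undoing the extension to $O$ and undoing the Melzak step then yields precisely the singly connected topology, with all inequalities going the right way. The point is that the reduction to the polygon-plus-centre case of~\cite{weng1995steiner} replaces your unproven claims about $S_2$ by a result already in the literature.
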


\begin{proof}

Let $\mathcal{T}_0$ be any SMT of $\{A_i\} \cup \{B_i\}$ which satisfies the properties of Lemma \ref{n-2_A_poly_edges}. Further, from Lemma \ref{steiner_path_from_A_to_A}, there is a Steiner point $S_1$ which lies on all A-B paths, and there are two consecutive vertices $A_j$, $A_{j+1}$ such that $A_j$, $S_1$, $A_{j+1}$ is a path in $\mathcal T_0$. As $\mathcal T_0$ satisfies the property of Lemma \ref{n-2_A_poly_edges}, $\mathcal T_0$ has $(n - 2)$ polygon edges of $\{A_i\}$ excluding the edge $A_jA_{j+1}$.

Let $H$ be the point in the interior of the polygon $\{A_i\}$  such that $HA_jA_{j+1}$ form an equilateral triangle. As $n > 6$, the common centre $O$ of $\{A_i\}$ and $\{B_i\}$ does not lie inside the triangle $HA_jA_{j+1}$. Now, we modify $\mathcal T_0$ as follows:

\begin{enumerate}
    \item Remove edges $A_jS_1$, $S_1A_{j+1}$ and add edge $S_1H$ to get the forest $\mathcal{T}_1$. We know from \cite{hwang1992steiner} that $S_1$, $S_2$ and $H$ are collinear and this transformation does not change the total length. Therefore $|\mathcal{T}_0| = |\mathcal{T}_1|$. Here, $|\mathcal{T}_1|$ denotes the sum of the lengths of edges present in $\mathcal{T}_1$.
    \item Add edge $HO$ and remove all polygon edges of $\{A_i\}$ to get $\mathcal T_2$. Therefore $|\mathcal T_2| = |\mathcal T_1| + \overline{HO} - (n - 2) = |\mathcal T_0| + \overline{HO} - (n - 2)$. We observe that $\mathcal T_2$ is a tree connecting  the points in $\{B_i\} \cup \{O\}$.
    \item Let $S_0$ be the Torricelli point of the triangle $OB_jB_{j+1}$. Let $\mathcal T_3$ be the Steiner tree of $\{B_i\} \cup \{O\}$ with edges $S_0O$, $S_0B_j$, $S_0B_{j+1}$ and other points in $\{B_i\}$ connected through $(n - 2)$ polygon edges of the polygon $\{B_i\}$. From \cite{weng1995steiner}, we know that $\mathcal T_3$ is the SMT of $\{B_i\} \cup \{O\}$. Therefore $|\mathcal T_3| \le |\mathcal T_2| = |\mathcal T_0| + \overline{HO} - (n - 2)$. Further we know that $H$ lies on the edge $OS_0$ (as $O$, $S_0$ and $H$ lie on the perpendicular bisector of $B_j$ and $B_{j + 1}$).
    \item Remove edge $S_0O$ and add edge $S_0H$ to get $\mathcal T_4$. As $H$ lies on the edge $OS_0$, we have $|\mathcal T_4| = |\mathcal T_3| - \overline{OH} \le |\mathcal T_0| - (n - 2)$.
    \item Let $S_3$ be the intersection of the circumcircle of triangle $A_jHA_{j + 1}$ (from Lemma \ref{lambda_v} the intersection exists as $\lambda_1 \ge \lambda_v$ for $n \ge 13$). Remove the edge $S_3H$ and add the edges $S_3A_j$ and $S_3A_{j + 1}$ to get $\mathcal T_5$. Again, from \cite{hwang1992steiner} we know that this transformation does not change the total length. Hence $|\mathcal T_5| = |\mathcal T_5| \le |\mathcal T_0| - (n - 2)$. Moreover, as $\lambda > \lambda_v$, we observe that $\{A_j, B_j, A_{j+1}, B_{j+1}, S_3, S_0\}$ form the vertices of the vertical gadget and points $O$, $H$, $S_3$, $S_0$ appear in that order on the perpendicular bisector of $B_j$ and $B_{j+1}$.
    \item Add back the $(n - 2)$ polygon edges of $\{A_i\}$ which were removed in the second step to get $\mathcal T_6$. Therefore $|\mathcal T_4| = |\mathcal T_5| + (n - 2) \le |\mathcal T_0|$. We further observe that $\mathcal T_6$ is a Steiner tree connecting the points $\{A_i\} \cup \{B_i\}$ with a singly connected topology.
\end{enumerate}

Therefore we started with an arbitrary SMT $\mathcal T_0$ and transformed it into a Steiner tree $\mathcal T_6$ with a singly connected topology (where $\{A_j, B_j, A_{j+1}, B_{j+1}, S_3, S_0\}$ form the vertices of the vertical gadget) which has a total length not worse than $\mathcal T_0$. Hence $\mathcal T_6$ must be an SMT of $\{A_i\} \cup \{B_i\}$. This proves the theorem.
\end{proof}

\begin{remark}
    \cref{final_proof} determines the exact structure of the \smtpoly. Further from~\cref{trapezoids} we determine the exact method to construct the two additional Steiner points in $\mathcal O(1)$ steps - note that this construction time is independent of the integer $n$ or the real number $\lambda$. Therefore, \smtpoly for $n \ge 13$ and $\lambda \ge \lambda_1$ is solvable in polynomial time.
\end{remark}


Note that the total length of any \smtpoly, when $n \ge 13$ and $\lambda \ge \lambda_{1}$, is
\begin{align*}
    & |\mathcal T_6| = |\text{vertical gadget}| + |(n - 2) \text{ edges of $\{B_i\}$}| + |(n - 2) \text{ edges of $\{A_i\}$}| \\
    \implies & |\mathcal T_6| = \bigg(\dfrac{(\lambda - 1)}{2 \tan \frac \pi n} + \dfrac {\sqrt 3 (\lambda + 1)} {2}\bigg) + (n - 2) \cdot \lambda + (n - 2)\\
    \implies & {|\mathcal T_6| = \dfrac{(\lambda - 1)}{2 \tan \frac \pi n} + \bigg(n - 2 + \frac{\sqrt{3}}{2}\bigg) (\lambda + 1)}\\
\end{align*}
\noindent Further, $\lambda_1$ converges to 1 very quickly with increasing $n$ (plotted in Figure \ref{lambda_1_plot}):

    \vspace{0.3cm}
    \begin{center}
    \begin{tabular}{| c | c | c | c | c | c |}
    \hline
    $n$ & 13 & 20 & 40 & 100 & 500 \\
    \hline
    $\lambda_1$ & 23.3987 & 2.6719 & 1.4574 & 1.1437 & 1.0258\\
    \hline
    \end{tabular}
    \end{center}
    \vspace{0.3cm}

\begin{figure}[h]
\centering
\includegraphics[width=15cm]{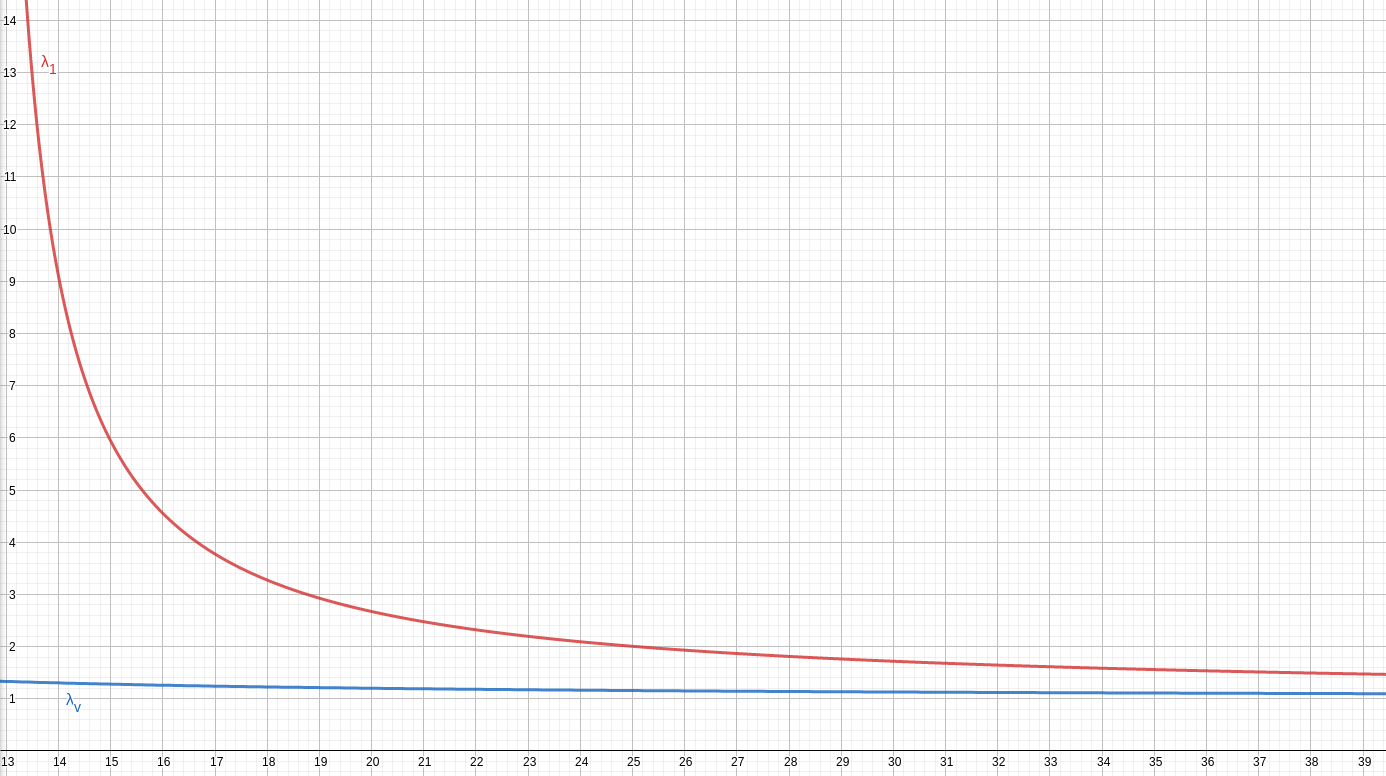}
\caption{Plot of $\lambda_1$ \& $\lambda_v$ against $n$}
\label{lambda_1_plot}
\end{figure}

This means for large sized $n$ and for ratios that are not too small, the SMT will follow a \textit{singly connected topology}.

\section{\ESMT on $f(n)$-Almost Convex Point Sets}\label{sec:exact_algo}


In this section, we design an exact algorithm for \ESMT on $f(n)$-Almost Convex Point Sets running in time $2^{\OO(f(n)\log n)}$. Note that $f(n) \leq n$ is always true. Therefore, we are given as input a set $\mathcal{P}$ of $n$ points in the Euclidean Plane such that $\mathcal P$ can be partitioned as $\mathcal P = \mathcal P_1 \uplus \mathcal P_2$, where $\mathcal P_1$ is the convex hull of $\mathcal P$ and $|\mathcal P_2| = f(n)$.

First, we look into some mathematical results and computational results to finally arrive at the algorithm for solving \ESMT on $f(n)$-Almost Convex Point Sets. 

We know that the SMT of $\mathcal P$ can be decomposed uniquely into one or more full Steiner subtrees, such that two full Steiner subtrees share at most one node~\cite{hwang1992steiner}. In the following lemma, we further characterize one full Steiner subtree.

\begin{lemma}
\label{existence_of_leaf}
Let $\mathbb{F}$ be the full Steiner decomposition of an SMT of $\mathcal{P}$.  Then there exists a full Steiner subtree $\mathcal F \in \mathbb{F}$ such that $\mathcal F$ has at most one common node with at most one other full Steiner subtree in $\mathbb{F}$.
\end{lemma}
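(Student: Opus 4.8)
The plan is to view the full Steiner decomposition $\mathbb{F}$ as a graph: build an auxiliary graph $G_{\mathbb{F}}$ whose vertices are the full Steiner subtrees in $\mathbb{F}$, with an edge between $\mathcal{F}$ and $\mathcal{F}'$ whenever they share a (necessarily terminal) node. Since the SMT is a tree and the decomposition is the canonical one of \cite{hwang1992steiner}, $G_{\mathbb{F}}$ is itself a tree (any cycle among the components would close a cycle in the SMT, as the shared nodes are distinct terminals). If $\mathbb{F}$ consists of a single component the statement is trivial, so assume $|\mathbb{F}| \ge 2$; then $G_{\mathbb{F}}$ is a tree on at least two vertices and hence has a leaf. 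First I would take such a leaf $\mathcal{F}$: by definition of $G_{\mathbb{F}}$, $\mathcal{F}$ shares a node with exactly one other full Steiner subtree, and by the decomposition property of \cite{hwang1992steiner} that shared intersection is a single node. This is exactly the claim.

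The only point requiring care is justifying that $G_{\mathbb{F}}$ really is a tree, and in particular that it is connected and acyclic. Connectivity follows because the SMT is connected and the components of $\mathbb{F}$ cover all its edges; acyclicity is the crux. Here I would argue that if components $\mathcal{F}_1, \dots, \mathcal{F}_m$ ($m \ge 2$) formed a cycle in $G_{\mathbb{F}}$, then picking the shared terminals $t_i \in \mathcal{F}_i \cap \mathcal{F}_{i+1}$ (indices mod $m$) and a path inside each $\mathcal{F}_i$ from $t_{i-1}$ to $t_i$ would produce a closed walk in the SMT through distinct terminals, hence a cycle, contradicting that the SMT is a tree. One must check the $t_i$ are genuinely distinct and that the within-component paths do not collapse the walk — this follows since each $\mathcal{F}_i$ contributes at least one edge not in any other component (full Steiner subtrees overlap only in single nodes, never in edges), so the walk has positive length in each step and is a genuine cycle.

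The main obstacle, then, is not the graph-theoretic skeleton but pinning down the precise overlap structure guaranteed by the uniqueness of the full Steiner decomposition: namely that two distinct components meet in at most one node and never share an edge, and that every node shared by two components is a terminal (a Steiner point has degree $3$ with $120^\circ$ angles by Proposition~\ref{smt-prop}, and is interior to its component, so it cannot be a junction between components). Once these facts are invoked from \cite{hwang1992steiner} and Proposition~\ref{smt-prop}, the leaf-of-a-tree argument closes the proof immediately. I would also note the degenerate case $|\mathbb{F}| = 1$ explicitly, where the unique component vacuously has a common node with at most one other (indeed zero) component.
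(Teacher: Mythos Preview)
Your proposal is correct and takes essentially the same approach as the paper: both arguments exploit that the adjacency structure among full Steiner components is itself tree-like, so some component must be a leaf. The paper phrases this as a direct contradiction (assume every component has two neighbouring components, build an infinite non-backtracking walk through shared terminals, force a repeated component and hence a cycle in the SMT), whereas you make the auxiliary graph $G_{\mathbb{F}}$ explicit and pick a leaf of it; the underlying cycle-in-$G_{\mathbb{F}}$-implies-cycle-in-SMT step is identical in both. One small wrinkle: your justification that the $t_i$ are distinct (``each $\mathcal{F}_i$ contributes at least one edge'') does not quite rule out $t_{i-1}=t_i$, since in that case the within-$\mathcal{F}_i$ path has length zero regardless of $\mathcal{F}_i$'s edges; the clean fix is to take a shortest cycle in $G_{\mathbb{F}}$, so that no terminal is shared by three consecutive components.
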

\begin{proof}
If the SMT of $\mathcal P$ is a full Steiner tree, then the statement is trivially true.

Otherwise, we assume that the SMT of $\mathcal P$ has full Steiner subtrees, $\mathbb{F} = \{\mathcal F_1, \mathcal F_2, \ldots, \mathcal F_m\}$, $m \ge 2$. Now, for the sake of contradiction, we assume that for each full Steiner subtree $\mathcal F_j$ there are atleast two other full Steiner subtrees ${\sf Neitree}(\mathcal F_j)^1,{\sf Neitree}(\mathcal F_j)^2 \in \mathbb{F}$ and two terminals $P_1(\mathcal F_j),P_2(\mathcal F_j) \in \mathcal{P}$ such that $P_i(\mathcal F_j) \in V(\mathcal F_j) \cap V({\sf Neitree}(\mathcal F_j)^i)$, $i \in \{1,2\}$. Now, let us construct a walk $W$ in the SMT of $\mathcal{P}$. Starting from $P_1(\mathcal F_1)$ of the full Steiner subtree $\mathcal F_1 \in \mathbb{F}$, we include the path in $\mathcal F_1$ connecting to $P_2(\mathcal F_1)$. Note that $P_2(\mathcal F_1)$ is also contained in ${\sf Neitree}(\mathcal F_1)^2$. Let ${\sf Neitree}(\mathcal F_1)^2 = \mathcal F_{w_1}$ for some $w_1\in [m], w_1\neq 1$. Also let $P_2(\mathcal F_1) = P_1(\mathcal F_{w_1})$.
Then, we know that there is a $P_2(\mathcal F_{w_1})$. In $W$, we include the path in $\mathcal F_{w_1}$ connecting $P_1(\mathcal F_{w_1})$ to $P_2(\mathcal F_{w_1})$. In general, suppose the $i^{th}$ full Steiner subtree to be considered in building the walk is $\mathcal F_{w_{i-1}}$ which was reached via point $P_1(\mathcal F_{w_{i-1}})$. Then we include in $W$ the path in $\mathcal F_{w_{i-1}}$ connecting $P_1(\mathcal F_{w_{i-1}})$ and $P_2(\mathcal F_{w_{i-1}})$. Thus, we can indefinitely keep constructing the walk $W$ as for each $\mathcal F_{w_{i-1}}$ both $P_1(\mathcal F_{w_{i-1}}), P_2(\mathcal F_{w_{i-1}})$ always exist. However, since there are $m$ full Steiner subtrees this means that there is an $\mathcal F_k \in \mathbb{F}$ and two indices $i\neq j$ such that $\mathcal F_k = \mathcal F_{w_i} = \mathcal F_{w_j}$. Thus, there exists a cycle in $W$, which implies that there is a cycle in the SMT of $\mathcal{P}$ (contradiction). Therefore, there must be at least one full Steiner subtree that has at most one common terminal with at most one other full Steiner subtree.
\end{proof}
\begin{remark}
A full Steiner subtree of the SMT of $\mathcal P$ has the topology of a tree. Thus, from Lemma~\ref{existence_of_leaf}, we conclude that a full Steiner subtree, that has at most one common terminal with at most one other full Steiner subtree, has at least one leaf of the SMT.
\end{remark}

\begin{definition}
Let the full Steiner subtrees, that have at most one terminal shared with at most one other full Steiner subtree, be called \textbf{leaf full Steiner subtrees}. Let the terminal which is shared be called the \textbf{pivot} of the leaf full Steiner subtree.
\end{definition}

\begin{figure}[h]
\centering
\subfloat[\centering Full Steiner Subtrees of Figure 1(a)] {\includegraphics[width=6cm]{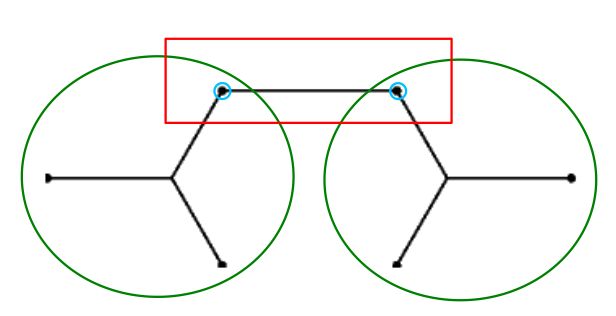}}
    \qquad
\subfloat[\centering \centering Full Steiner Subtrees of Figure 1(b)] {\includegraphics[width=6cm]{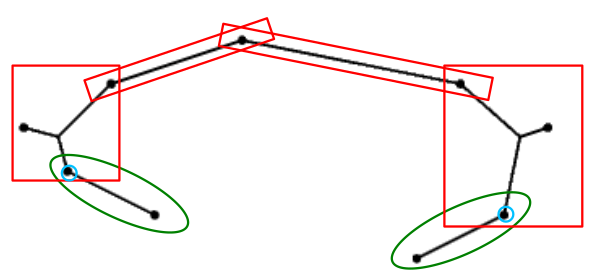}}
\caption{Leaf full Steiner subtrees enclosed in ellipses, other full Steiner subtrees enclosed in rectangles, pivots of leaf full Steiner subtrees encircled}
\end{figure}

\begin{lemma}
\label{leaf_non_leaf_structure}
Let $\mathcal F$ be a leaf full Steiner Subtree of the SMT of $\mathcal P$, with terminal points $\mathcal P_{\mathcal{\mathcal F}} \subseteq \mathcal{P}$ and having pivot $P_{\mathcal F}$. Deleting $\mathcal F\setminus \{P_{\mathcal F}\}$ from the SMT of $\mathcal P$ gives us an SMT of the terminal points $((\mathcal P - \mathcal{P}_{\mathcal F}) \cup \{P_{\mathcal F}\})$.
\end{lemma}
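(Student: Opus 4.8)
The plan is to argue by a standard exchange/optimality argument. Let $\mathcal{T}$ be the SMT of $\mathcal{P}$, and let $\mathcal{F}$ be a leaf full Steiner subtree with terminal set $\mathcal{P}_{\mathcal{F}} \subseteq \mathcal{P}$ and pivot $P_{\mathcal{F}}$. First I would observe that, since $\mathcal{F}$ shares at most one node (namely $P_{\mathcal{F}}$) with at most one other full Steiner subtree, deleting $\mathcal{F} \setminus \{P_{\mathcal{F}}\}$ from $\mathcal{T}$ leaves a connected tree $\mathcal{T}'$ whose terminal set is exactly $(\mathcal{P} \setminus \mathcal{P}_{\mathcal{F}}) \cup \{P_{\mathcal{F}}\}$: every terminal of $\mathcal{P}_{\mathcal{F}}$ other than $P_{\mathcal{F}}$ is a leaf of $\mathcal{F}$ internal to $\mathcal{P}$ and appears in no other full component (otherwise $\mathcal{F}$ would share more than one node, or share a node with a second component), so removing them—together with the Steiner points of $\mathcal{F}$—does not disconnect the rest, and $P_{\mathcal{F}}$ itself is retained. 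The length of $\mathcal{T}'$ is $|\mathcal{T}| - |\mathcal{F}|$.

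Next I would show $\mathcal{T}'$ is actually optimal for $\mathcal{P}^- := (\mathcal{P} \setminus \mathcal{P}_{\mathcal{F}}) \cup \{P_{\mathcal{F}}\}$. Suppose not: let $\mathcal{T}^-$ be an SMT of $\mathcal{P}^-$ with $|\mathcal{T}^-| < |\mathcal{T}'|$. Then $\mathcal{T}^- \cup \mathcal{F}$ is a connected graph (both contain the point $P_{\mathcal{F}}$, so gluing along it keeps connectivity) spanning $\mathcal{P}^- \cup \mathcal{P}_{\mathcal{F}} = \mathcal{P}$, with total length at most $|\mathcal{T}^-| + |\mathcal{F}| < |\mathcal{T}'| + |\mathcal{F}| = |\mathcal{T}|$. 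Taking a spanning tree of this graph if necessary only shortens it, contradicting that $\mathcal{T}$ is an SMT of $\mathcal{P}$. Hence $|\mathcal{T}'| \le |\mathcal{T}^-|$, so $\mathcal{T}'$ is an SMT of $\mathcal{P}^-$, which is exactly the claim.

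I expect the main obstacle to be making the first step fully rigorous: carefully justifying that the terminals of $\mathcal{P}_{\mathcal{F}} \setminus \{P_{\mathcal{F}}\}$ appear in \emph{no} full Steiner subtree other than $\mathcal{F}$, so that their removal from $\mathcal{T}$ does not delete part of another component and does not disconnect any remaining terminal. This follows from the definition of a leaf full Steiner subtree (at most one shared terminal, with at most one other component) combined with the fact that in the unique full Steiner decomposition, two full components meet only at terminals and every terminal lies in at least one component; I would spell this out using the remark following Lemma~\ref{existence_of_leaf} that $\mathcal{F}$ contains at least one genuine leaf of the SMT, and more generally that all non-pivot terminals of $\mathcal{F}$ are leaves of the SMT. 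A minor subtlety to address is the degenerate case where the SMT of $\mathcal{P}$ is itself a single full Steiner tree: then $\mathbb{F} = \{\mathcal{F}\}$, there is no pivot in the usual sense, and the statement degenerates; I would handle this by the convention that $P_{\mathcal{F}}$ can be taken to be any terminal (or note that the lemma's hypothesis implicitly places us in the case $|\mathbb{F}| \ge 2$, consistent with how it is invoked later).
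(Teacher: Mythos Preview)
Your proposal is correct and follows essentially the same exchange argument as the paper: observe that deleting $\mathcal{F}\setminus\{P_{\mathcal{F}}\}$ leaves a tree $\mathcal{Y}$ spanning $(\mathcal{P}\setminus\mathcal{P}_{\mathcal{F}})\cup\{P_{\mathcal{F}}\}$, then derive a contradiction by gluing a hypothetically shorter SMT of that terminal set back to $\mathcal{F}$ at $P_{\mathcal{F}}$. You are actually more careful than the paper in justifying the first step and in flagging the degenerate single-component case; the paper simply asserts that the deletion yields a tree ``as $\mathcal{F}$ is a leaf full Steiner subtree'' without elaboration.
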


\begin{proof}
Firstly, we observe that deleting $\mathcal F\setminus \{P_{\mathcal F}\}$ from the SMT of $\mathcal P$ will indeed give us a tree, as $\mathcal F$ is a leaf full Steiner subtree. Let us call this tree $\mathcal Y$.

Now for the sake of contradiction, we assume that the total length of $\mathcal Y$ is strictly larger than the SMT $\mathcal F'$ of $((\mathcal P - \mathcal{P}_{\mathcal F}) \cup \{P_{\mathcal F}\})$. However, this means, the total length of $\mathcal F' \cup \mathcal F$ is strictly smaller than that of the SMT of $\mathcal P$. As $\mathcal F' \cup \mathcal F$ is also a Steiner tree of $\mathcal P$,  this contradicts the minimality of the initial SMT of $\mathcal P$.
\end{proof}

Now we are ready to describe the algorithm. Recall that $\mathcal P$ is partitioned as $\mathcal P = \mathcal{P}_1 \uplus \mathcal{P}_2$, where $\mathcal{P}_1$ is the convex hull of $\mathcal P$ and $\mathcal{P}_2$ is the set of $f(n)$ points lying in the interior of $\mathcal P_1$. For the sake of brevity of notations let $|\mathcal{P}_2| = k$.

\begin{lemma}
\label{four_power_n}
Let $\mathcal P$ be a $k$-Almost Convex Point Set. A minimum FST of a subset $\mathcal S$ of $\mathcal P$ can be found in $\mathcal O(4^{|{\mathcal S}|} \cdot |\mathcal S| ^ k)$ time.
\end{lemma}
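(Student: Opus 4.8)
The plan is to bound the number of candidate full Steiner topologies on $\mathcal S$ and, for each one, to invoke a linear-time full-Steiner-tree construction (Hwang's algorithm, \cite{hwang1986linear}) to compute the minimum-length realization. Recall that a full Steiner topology on a set of $t$ terminals has exactly $t-2$ Steiner points, all of degree $3$, all terminals being leaves; such a topology is a cubic tree whose $t$ leaves are labelled by the terminals of $\mathcal S$. The number of unlabelled cubic trees on $t$ leaves is $(2t-5)!! = 1\cdot 3 \cdots (2t-5)$, which is $2^{\OO(t\log t)}$ and hence $\Oh(t^{t})$; including the labelling of leaves by terminals multiplies this by $t!$, still $2^{\OO(t\log t)}$. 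A direct enumeration would therefore give $2^{\OO(|\mathcal S|\log|\mathcal S|)}$ topologies rather than the claimed $\Oh(4^{|\mathcal S|}\cdot|\mathcal S|^{k})$, so the first thing I would do is explain why we do not need all labelled cubic trees.

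The key observation exploiting near-convexity is the following: for the SMT of $\mathcal P$ (by~\Cref{convex-steiner} and~\Cref{smt-prop}), and hence for any full Steiner subtree appearing in its decomposition, the cyclic order in which the convex-hull terminals $\mathcal S \cap \mathcal P_1$ appear as leaves of the full Steiner topology is forced to agree with their cyclic order on $\CH(\mathcal P)$ --- otherwise two edges of the tree would cross, contradicting part 1 of~\Cref{smt-prop}. So once we decide which of the $k' \le k$ interior points of $\mathcal S$ are present and fix the cyclic positions at which they are inserted among the convex points, the leaf-labelling is essentially determined. First I would set $s = |\mathcal S|$, write $s_1 = |\mathcal S\cap\mathcal P_1|$ and $s_2 = |\mathcal S\cap\mathcal P_2| \le k$, and argue that the number of distinct cyclic leaf-orderings consistent with convexity is at most $\binom{s}{s_2}\cdot s_2! \cdot (\text{something})$; more crudely, choosing where each of the $\le k$ interior points goes among $s$ slots costs a factor $s^{k}$, which is exactly the $|\mathcal S|^{k}$ in the statement. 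Then, for a fixed cyclic order of the $s$ leaves around the outer face, the number of full Steiner topologies realizing that boundary order is the number of triangulations-type structures, i.e. the Catalan-like count $C_{s-2} = \frac{1}{s-1}\binom{2s-4}{s-2} = \Oh(4^{s})$; this accounts for the $4^{|\mathcal S|}$ factor. Finally, for each of the $\Oh(4^{s}\cdot s^{k})$ candidate full Steiner topologies, Hwang's algorithm either returns a valid full Steiner tree realizing it or reports infeasibility in $\Oh(s)$ time, and we keep the minimum; the extra $\Oh(s)$ factor is absorbed, giving the claimed $\Oh(4^{|\mathcal S|}\cdot|\mathcal S|^{k})$ bound (possibly up to a polynomial factor that one can fold into the constants, or one states the bound as $\Oh(4^{s}\cdot s^{k}\cdot \mathrm{poly}(s))$).

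Concretely the steps are: (i) recall that an FST on $s$ terminals has a cubic topology with $s-2$ Steiner points and all terminals as leaves; (ii) use planarity of the SMT (\Cref{smt-prop}, part 1) together with convexity of $\mathcal P_1$ to show the convex-hull terminals of $\mathcal S$ occur in their convex-position cyclic order as leaves, so enumerating the leaf order on the outer boundary reduces to inserting the $\le k$ non-hull terminals into the fixed convex cyclic sequence --- at most $\Oh(s^{k})$ choices; (iii) for each such boundary order, enumerate the $\Oh(4^{s})$ planar cubic topologies with that boundary (equivalently, binary-tree bracketings of the cyclic leaf sequence), giving $\Oh(4^{s}s^{k})$ topologies; (iv) run Hwang's $\Oh(s)$-time construction on each and take the best feasible one. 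The main obstacle --- and the place where I would spend the most care --- is step (ii): one must make precise the claim that the leaf order of a full Steiner \emph{subtree} of the SMT restricted to the hull vertices of $\mathcal S$ matches the convex cyclic order, handling the pivot vertices shared between subtrees and ruling out crossings rigorously; a clean way is to observe that since the whole SMT is planar and contained in $\CH(\mathcal P)$, contracting each Steiner point to its "region" gives a planar embedding in which the hull vertices sit on the outer face in convex order, and a full subtree inherits this embedding. Everything else is routine counting (Catalan numbers, stars-and-bars for the $s^{k}$ factor) plus a black-box call to \cite{hwang1986linear}.
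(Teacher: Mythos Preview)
Your approach is essentially the paper's: the paper quotes the count $\dfrac{|\mathcal S|!}{|\mathrm{CH}(\mathcal S)|!}\cdot\dfrac{1}{|\mathcal S|-1}\binom{2|\mathcal S|-4}{|\mathcal S|-2}$ of full Steiner topologies from \cite{hwang1992steiner}, bounds the first factor by $|\mathcal S|^{k}$ and the second (Catalan) factor by $4^{|\mathcal S|}/|\mathcal S|$, then runs Melzak--Hwang in linear time on each --- exactly your steps (ii)--(iv), with the formula cited rather than rederived. One minor fix: in step (ii) argue planarity directly for the minimum FST of $\mathcal S$ (it is relatively minimal, hence crossing-free), not via full subtrees of the SMT of $\mathcal P$, since $\mathcal F_{\mathcal S}$ is defined as a standalone object and need not a priori sit inside $\mathcal T_{\mathcal P}$.
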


\begin{proof}
We observe that for any $\mathcal S \subseteq \mathcal P$, $\mathcal S$ forms a convex polygon with at most $k$ points lying in the interior. For $|{\mathcal S}| \le 2$, the statement of the lemma is trivially true. Hence we assume that $|{\mathcal S}| > 2$.

From \cite{hwang1992steiner}, the number of full Steiner topologies of $\mathcal{S}$ is 

$$\frac{|\mathcal S|!}{|\mathrm{CH}(\mathcal{S})|!} \cdot \frac {\binom{2|{\mathcal S}| - 4}{|{\mathcal S}| - 2}}{|{\mathcal S}| - 1}$$

However, we know that:

$$\frac {\binom{2|{\mathcal S}| - 4}{|{\mathcal S}| - 2}}{|{\mathcal S}| - 1} < \frac{\binom{2|{\mathcal S}|}{|{\mathcal S}|}}{|{\mathcal S}|} < \frac{\sum \limits_{r = 0}^{2|{\mathcal S}|} \binom{2|{\mathcal S}|}{k}}{|{\mathcal S}|} = \frac{2 ^{2|{\mathcal S}|}}{|{\mathcal S}|} = \frac{4^{|{\mathcal S}|}}{|{\mathcal S}|}$$

And,

$$\frac{|\mathcal S|!}{|\mathrm{CH}(\mathcal{S})|!} < \frac{|\mathcal S|!}{(|\mathcal S| - k)!} < |\mathcal S|^k$$

Therefore, the number of full Steiner topologies of $\mathcal{S}$ is at most $4^{|{\mathcal S}|} |\mathcal S| ^ {k - 1}$. Each of these topologies can be enumerated and using \emph{Melzak's FST Algorithm}, we can also find the SMT realizing each such full Steiner topology in linear time, as given in \cite{hwang1992steiner}. Therefore to iterate over all topologies and find a minimum takes at most time:

$$(4^{|{\mathcal S}|} \cdot |\mathcal S| ^ {k - 1}) \cdot \mathcal O(|\mathcal S|) = \mathcal \mathcal O(4^{|{\mathcal S}|} \cdot|\mathcal S| ^ k)$$

\end{proof}

Now, we find the time required for extending the results of Lemma~\ref{four_power_n} to all subsets of $\mathcal{P}$.
\begin{lemma}
\label{five_power_n}
Let $\mathcal P$ be a $k$-Almost Convex Set. Computing a minimum FST \textbf{for all} subsets $\mathcal S \subseteq \mathcal P$ can be done in $\mathcal O(n^k \cdot 5^{n})$ time.
\end{lemma}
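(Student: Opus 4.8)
The plan is to enumerate every subset $\mathcal S \subseteq \mathcal P$ grouped by cardinality, run the algorithm of Lemma~\ref{four_power_n} on each one, and then bound the total running time by a single binomial sum. The first step is to observe that Lemma~\ref{four_power_n} applies verbatim to \emph{every} subset $\mathcal S \subseteq \mathcal P$: as already noted in its proof, any such $\mathcal S$ forms a convex polygon with at most $k$ points in its interior (a point of $\mathcal P_1$, being extreme in some direction among $\mathcal P$, remains extreme within any subset containing it, hence stays on $\mathrm{CH}(\mathcal S)$), so a minimum FST of an $s$-element subset can be computed in $\OO(4^{s} \cdot s^{k})$ time. Then I would fix $s \in \{0,1,\ldots,n\}$, note that there are exactly $\binom{n}{s}$ subsets of size $s$, and add up the costs.

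Carrying out the summation,
\begin{align*}
\sum_{s=0}^{n} \binom{n}{s}\,\OO\!\left(4^{s} s^{k}\right)
&\;\le\; \OO\!\left(n^{k}\right)\sum_{s=0}^{n}\binom{n}{s} 4^{s}
\;=\; \OO\!\left(n^{k}\right)(1+4)^{n}
\;=\; \OO\!\left(n^{k}\cdot 5^{n}\right),
\end{align*}
where the inequality replaces $s^{k}$ by $n^{k}$ term-by-term and the middle equality is the binomial theorem. This is essentially the whole proof.

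There is no genuine mathematical obstacle here; the points needing care are purely a matter of bookkeeping. I must state the per-subset bound of Lemma~\ref{four_power_n} with the \emph{global} constant $k$ (the number of interior points of $\mathcal P$) rather than a subset-dependent quantity, so that the factor $\OO(n^{k})$ can legitimately be pulled out of the sum. I should also check that merely \emph{storing} the minimum FST of each of the $2^{n}$ subsets (each of description size $\OO(n)$) costs only $\OO(2^{n} n) = \OO(n^{k} 5^{n})$ and is therefore absorbed into the claimed bound. Everything else is an immediate consequence of Lemma~\ref{four_power_n} together with the binomial theorem.
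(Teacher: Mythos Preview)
Your proposal is correct and follows essentially the same approach as the paper: group subsets by size, apply Lemma~\ref{four_power_n} to each, bound $s^{k}\le n^{k}$, and sum via the binomial theorem to get $\OO(n^{k}\cdot 5^{n})$. Your additional remarks about why the global $k$ suffices and about storage cost are sound extra care, but the core argument is identical to the paper's.
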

\begin{proof}

Using Lemma \ref{four_power_n}, we can get a minimum FST for a single subset $\mathcal S \subseteq \mathcal P$ in $\mathcal O(4^{|{\mathcal S}|} |\mathcal S| ^ k)$ time. Moreover, we know that the number of subsets of $\mathcal P$ that are of size $r$ is $\binom n r$. This means that the total time to compute a minimum FST \textbf{for all} subsets $\mathcal S \subseteq \mathcal P$, time taken is:

$$\sum \limits_{r = 0} ^ {n} \binom{n}{r} \cdot \mathcal O(4^r \cdot r^k) = \sum \limits_{r = 0} ^ {n} \binom{n}{r} \cdot \mathcal O(n^k \cdot 4^r) = \mathcal O(n^k \cdot (1 + 4)^n) = \mathcal O(n^k \cdot 5^n)$$

\end{proof}

For each $\mathcal S \subseteq \mathcal P$, we denote by $\mathcal F_{\mathcal S} $ a minimum FST of $\mathcal S$ and by $\mathcal T_{\mathcal S} $ the SMT of $\mathcal S$.

\begin{lemma}
\label{single_subset_SMT}
The SMT of subset $\mathcal S \subseteq \mathcal P$, $\mathcal T_{\mathcal S}$, can be found in $\mathcal O(|{\mathcal S}| \cdot 2^{|{\mathcal S}|})$ time, given that we have pre-computed $\mathcal T_{\mathcal R}$ and $\mathcal F_{\mathcal R}$, $\forall \mathcal R \subseteq \mathcal S$.
\end{lemma}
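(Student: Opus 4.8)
The plan is to compute $\mathcal{T}_{\mathcal S}$ by a dynamic-programming recurrence over subsets, exploiting the leaf full Steiner subtree structure established in Lemmas~\ref{existence_of_leaf} and~\ref{leaf_non_leaf_structure}. The key observation is that by Lemma~\ref{existence_of_leaf}, the SMT of $\mathcal S$ decomposes into full Steiner subtrees among which there is a leaf full Steiner subtree $\mathcal F$ with a pivot $P_{\mathcal F}$ (or $\mathcal F$ is the whole SMT, sharing no node). If $\mathcal P_{\mathcal F} \subseteq \mathcal S$ is the terminal set of $\mathcal F$ and $P_{\mathcal F} \in \mathcal P_{\mathcal F}$ is its pivot, then Lemma~\ref{leaf_non_leaf_structure} tells us that deleting $\mathcal F \setminus \{P_{\mathcal F}\}$ leaves an SMT of $(\mathcal S \setminus \mathcal P_{\mathcal F}) \cup \{P_{\mathcal F}\}$. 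Hence
$$
|\mathcal T_{\mathcal S}| \;=\; \min_{\substack{\mathcal R \subsetneq \mathcal S,\ |\mathcal R| \ge 2 \\ P \in \mathcal R}} \Big( |\mathcal F_{\mathcal R}| + |\mathcal T_{(\mathcal S \setminus \mathcal R) \cup \{P\}}| \Big),
$$
together with the base case $|\mathcal T_{\mathcal S}| = |\mathcal F_{\mathcal S}|$ when the SMT of $\mathcal S$ is itself a single full Steiner tree; for $|\mathcal S| \le 2$ the SMT is the single edge (or point). Here $\mathcal R$ plays the role of $\mathcal P_{\mathcal F}$ and $P$ the role of the pivot $P_{\mathcal F}$; crucially, since $\mathcal R \subsetneq \mathcal S$ and $(\mathcal S \setminus \mathcal R) \cup \{P\}$ has size $|\mathcal S| - |\mathcal R| + 1 < |\mathcal S|$ (as $|\mathcal R| \ge 2$), both terms on the right refer to strictly smaller subsets, so the recurrence is well-founded and we may assume $\mathcal T_{\mathcal R}$ and $\mathcal F_{\mathcal R}$ are available for all $\mathcal R \subsetneq \mathcal S$.

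First I would argue correctness of the recurrence in both directions. For the upper bound ($\le$): any choice of $\mathcal R$ and $P$ yields a valid Steiner tree of $\mathcal S$ by gluing $\mathcal F_{\mathcal R}$ to $\mathcal T_{(\mathcal S \setminus \mathcal R) \cup \{P\}}$ at the shared terminal $P$ (the union is connected and spans $\mathcal S$), so $|\mathcal T_{\mathcal S}|$ is at most the minimum. For the lower bound ($\ge$): take an actual SMT of $\mathcal S$; if it is a full Steiner tree the base case matches, otherwise apply Lemma~\ref{existence_of_leaf} to extract a leaf full Steiner subtree $\mathcal F$ with terminal set $\mathcal P_{\mathcal F}$ and pivot $P_{\mathcal F}$, and apply Lemma~\ref{leaf_non_leaf_structure} to see that the rest is an SMT of $(\mathcal S \setminus \mathcal P_{\mathcal F}) \cup \{P_{\mathcal F}\}$; since $\mathcal F$ restricted to its terminals is at least as long as $\mathcal F_{\mathcal P_{\mathcal F}}$ (a minimum FST on that terminal set — I should double check that the leaf full Steiner subtree of an SMT is indeed a \emph{minimum} FST on its own terminals, which follows from optimality of the SMT), the pair $(\mathcal R, P) = (\mathcal P_{\mathcal F}, P_{\mathcal F})$ witnesses that the minimum is at most $|\mathcal T_{\mathcal S}|$.

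For the running time: the recurrence ranges over pairs $(\mathcal R, P)$ with $\mathcal R \subseteq \mathcal S$ and $P \in \mathcal R$. Summing $|\mathcal R|$ over all $\mathcal R \subseteq \mathcal S$ gives $\sum_{r=0}^{|\mathcal S|} r \binom{|\mathcal S|}{r} = |\mathcal S| \cdot 2^{|\mathcal S| - 1} = \mathcal O(|\mathcal S| \cdot 2^{|\mathcal S|})$ pairs to examine, and for each we do $\mathcal O(1)$ work (a lookup of the two precomputed/earlier-computed quantities and an addition), giving the claimed $\mathcal O(|\mathcal S| \cdot 2^{|\mathcal S|})$ bound. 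Note we reuse $\mathcal F_{\mathcal R}$ which is given as precomputed, and $\mathcal T_{(\mathcal S\setminus\mathcal R)\cup\{P\}}$ which involves a strictly smaller set and hence is already computed. The main obstacle I anticipate is the careful handling of the degenerate/boundary cases in the recurrence — ensuring $|\mathcal R| \ge 2$ so that the recursive term is on a strictly smaller set, correctly incorporating the "SMT is a single FST" base case (equivalently allowing $\mathcal R = \mathcal S$ with the second term being just the point $P$, of length $0$), and verifying that a leaf full Steiner subtree of an SMT is a minimum FST on its terminals so that replacing it by $\mathcal F_{\mathcal P_{\mathcal F}}$ never increases length — rather than in the arithmetic, which is routine.
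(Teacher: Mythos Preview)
Your proposal is correct and follows essentially the same approach as the paper: both use the recurrence that iterates over all pairs $(\mathcal R, P)$ with $\mathcal R \subsetneq \mathcal S$ and $P \in \mathcal R$, combining $\mathcal F_{\mathcal R}$ with $\mathcal T_{(\mathcal S \setminus \mathcal R)\cup\{P\}}$ and comparing against the base case $\mathcal F_{\mathcal S}$, and both count the $\mathcal O(|\mathcal S|\cdot 2^{|\mathcal S|})$ pairs. If anything, you are more careful than the paper --- you explicitly argue both directions of the recurrence and flag that one must check the leaf full Steiner subtree of an SMT is itself a \emph{minimum} FST on its terminals (which does follow from optimality of the SMT), a point the paper leaves implicit.
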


\begin{proof}

If $\mathcal T_{\mathcal S}$ was a full Steiner tree then it would be $\mathcal F_{\mathcal S}$. Otherwise, $\mathcal T_{\mathcal S}$ contains multiple full Steiner subtrees.

Let $\mathcal F$ be a leaf full Steiner subtree of $\mathcal T_{\mathcal{S}}$ with pivot $P_{\mathcal F}$. Therefore from Lemma \ref{leaf_non_leaf_structure} we have $\mathcal T_{\mathcal S} = \mathcal T_{((\mathcal S - V(\mathcal F)) \cup \{P_{\mathcal F}\})} \cup \mathcal F$. Therefore we can iterate over all subsets $\mathcal R \subset \mathcal S$ and all terminals $P \in \mathcal{R}$, and take the minimum-length tree among $\mathcal T_{((\mathcal S - \mathcal R)  \cup \{P\})} \cup \mathcal F_{\mathcal R}$. Since we are iterating over all $\mathcal R \subset \mathcal S$, and all $P \in \mathcal R$, we are guaranteed to get $\mathcal R = V(\mathcal F) \cap \mathcal{S}$ and $P = P_{\mathcal F}$ on one such iteration.

Now, as there are $\mathcal O(2^{|\mathcal S|})$ possibilities of $\mathcal R \subset \mathcal S$ and $\mathcal O(|\mathcal S|)$ possibilities of $P \in \mathcal R$, we have $\mathcal O(|S| \cdot 2^{|\mathcal S|})$ possibilities of the pair $(\mathcal R, P)$. Therefore the total time required for iterating is $\mathcal O(|{\mathcal S}| \cdot 2^{|{\mathcal S}|})$.

\end{proof}

\begin{lemma}
\label{multi_subset_SMT}
SMTs for all subsets $\mathcal S \subseteq \mathcal P$, $\mathcal T_{\mathcal S}$ can be found in $\mathcal O(n \cdot 3 ^n)$ time, given that we have precomputed a minimum FST $\mathcal F_{\mathcal S}$ $\forall \mathcal S \subseteq \mathcal P$.
\end{lemma}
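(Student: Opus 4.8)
The plan is a straightforward bottom-up dynamic program over subsets, invoking Lemma~\ref{single_subset_SMT} as a black box and then summing the per-subset costs with the binomial theorem. First I would fix an ordering of the subsets of $\mathcal P$ by non-decreasing cardinality, breaking ties arbitrarily; for subsets of size $0$ or $1$ there is nothing to compute (the SMT is empty or a single point). Processing the subsets $\mathcal S$ in this order guarantees that, at the time we handle $\mathcal S$, every proper subset $\mathcal R \subsetneq \mathcal S$ has strictly smaller cardinality and hence has already been processed, so $\mathcal T_{\mathcal R}$ is available; moreover $\mathcal F_{\mathcal R}$ is available by the hypothesis that all minimum FSTs have been precomputed. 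Thus the precondition of Lemma~\ref{single_subset_SMT} is met for every $\mathcal S$, and we may compute $\mathcal T_{\mathcal S}$ in $\mathcal O(|\mathcal S| \cdot 2^{|\mathcal S|})$ time.

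Next I would bound the total running time by grouping subsets by size: there are exactly $\binom{n}{r}$ subsets of size $r$, each costing $\mathcal O(r \cdot 2^r) \subseteq \mathcal O(n \cdot 2^r)$ time, so the overall cost is
\begin{align*}
\sum_{r=0}^{n} \binom{n}{r} \cdot \mathcal O(r \cdot 2^r)
&= \mathcal O\!\left( n \sum_{r=0}^{n} \binom{n}{r} 2^r \right)
= \mathcal O\!\left( n \, (1 + 2)^n \right)
= \mathcal O(n \cdot 3^n),
\end{align*}
where the middle equality is the binomial theorem. This is exactly the claimed bound, so the proof concludes there.

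I do not expect a genuine obstacle here, since the heavy lifting is done by Lemma~\ref{single_subset_SMT}; the only point requiring a little care is the correctness of the recurrence's dependency structure, i.e.\ arguing that every $\mathcal T_{\mathcal R}$ needed when processing $\mathcal S$ is indeed of strictly smaller size and therefore already computed. This is immediate from processing subsets in non-decreasing order of size, but it is worth stating explicitly so that the induction on $|\mathcal S|$ (with Lemma~\ref{single_subset_SMT} supplying the inductive step and the trivial $|\mathcal S| \le 1$ cases the base) is clearly valid. One may also note in passing that the dominant term in the full algorithm is this $\mathcal O(n \cdot 3^n)$ together with the $\mathcal O(n^k \cdot 5^n)$ of Lemma~\ref{five_power_n}, but that bookkeeping belongs to the theorem that assembles the final running time rather than to this lemma.
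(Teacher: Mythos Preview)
Your proposal is correct and follows essentially the same approach as the paper: process subsets in non-decreasing order of size so that Lemma~\ref{single_subset_SMT} applies, then sum the per-subset costs $\binom{n}{r}\cdot\mathcal O(r\cdot 2^r)$ via the binomial theorem to obtain $\mathcal O(n\cdot 3^n)$. The paper's argument is identical in structure, differing only in that you are slightly more explicit about the base cases and the dependency ordering.
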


\begin{proof}

Using Lemma \ref{single_subset_SMT}, we can get the SMT $\mathcal T_{\mathcal S}$, for a single subset $\mathcal S \subseteq \mathcal P$ in $\mathcal O(r \cdot 2^r)$ time, where $|{\mathcal S}| = r$. Moreover, we know that the number of subsets of $\mathcal P$ that are of size $r$ is $\binom n r$. This means that the total time to compute the SMT for all subsets $\mathcal S \subseteq \mathcal P$, time taken is: 

$$\sum \limits_{k = 0} ^ {n} \binom{n}{k} \cdot \mathcal O(k \cdot 2^k) = \mathcal O(n \cdot (1 + 2)^n) = \mathcal O(n \cdot 3^n)$$

However, to apply Lemma \ref{single_subset_SMT} on some subset $\mathcal S \subseteq \mathcal P$ for computing $\mathcal T_{\mathcal S}$, we must also have $\mathcal T_{\mathcal R}$ precomputed for all $\mathcal R \subseteq \mathcal S$. This can be guaranteed by computing $\mathcal T_{\mathcal S}$ and $\mathcal{F}_{\mathcal{S}}$, for all subsets $\mathcal S \subseteq \mathcal P$ in an increasing order of $|\mathcal S|$ (or any order which guarantees that the subsets of $\mathcal S$ are processed before $\mathcal S$).

\end{proof}

Finally, we state our algorithm.

\begin{theorem}
\label{final_theorem}

An SMT $\mathcal T_{\mathcal P}$ of a $k$-Almost Convex Set $\mathcal P$ of terminals can be computed in $\mathcal O(n^k \cdot 5^n)$ time.

\end{theorem}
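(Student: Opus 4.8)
The plan is to combine the preprocessing lemmas already established with a final dynamic-programming pass, and then bound the total running time. First I would invoke \Cref{five_power_n} to compute a minimum FST $\mathcal F_{\mathcal S}$ for every subset $\mathcal S \subseteq \mathcal P$; this costs $\mathcal O(n^k \cdot 5^n)$ time. Next, having all the $\mathcal F_{\mathcal S}$ in hand, I would apply \Cref{multi_subset_SMT} to compute the SMT $\mathcal T_{\mathcal S}$ for every subset $\mathcal S \subseteq \mathcal P$ in $\mathcal O(n \cdot 3^n)$ additional time, processing subsets in increasing order of size so that the precondition of \Cref{single_subset_SMT} (namely, that $\mathcal T_{\mathcal R}$ is available for all $\mathcal R \subseteq \mathcal S$) is always met. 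The desired tree $\mathcal T_{\mathcal P}$ is simply the entry computed for $\mathcal S = \mathcal P$, so correctness follows directly from the correctness of \Cref{five_power_n} and \Cref{multi_subset_SMT}.

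It remains to add the two running times and simplify. The first phase dominates: $\mathcal O(n^k \cdot 5^n) + \mathcal O(n \cdot 3^n) = \mathcal O(n^k \cdot 5^n)$, since $n \cdot 3^n = \mathcal O(n^k \cdot 5^n)$ for all $k \ge 0$. Hence the overall bound is $\mathcal O(n^k \cdot 5^n)$, as claimed. I would also remark that substituting $k = f(n)$ gives running time $\mathcal O(n^{f(n)} \cdot 5^n) = 2^{\mathcal O(f(n)\log n + n)} = 2^{\mathcal O(f(n)\log n)}$ whenever $f(n) = \Omega(n/\log n)$, and more generally $2^{\mathcal O(f(n)\log n)}$ in the regime of interest, recovering the bound advertised in the introduction; but for the theorem statement itself only the clean $\mathcal O(n^k \cdot 5^n)$ expression is needed.

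There is essentially no hard step left: the theorem is a bookkeeping corollary of the preceding four lemmas. The only mild subtlety — and the place where I would be careful — is to make sure the dependency order is honoured, i.e. that $\mathcal F_{\mathcal S}$ for \emph{all} subsets is genuinely computed before the SMT phase begins (so that \Cref{multi_subset_SMT}'s hypothesis is satisfied globally), and that within the SMT phase the subsets are enumerated in an order extending the subset partial order. If I wanted the proof to be fully self-contained I might fold the two phases into a single loop over subsets in increasing size order, computing $\mathcal F_{\mathcal S}$ and then $\mathcal T_{\mathcal S}$ for each $\mathcal S$ in turn; but appealing to the two lemmas as black boxes is cleaner and is what I would write.
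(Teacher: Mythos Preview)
Your proposal is correct and matches the paper's proof essentially line for line: first invoke \Cref{five_power_n} to precompute all $\mathcal F_{\mathcal S}$ in $\mathcal O(n^k \cdot 5^n)$ time, then invoke \Cref{multi_subset_SMT} to compute all $\mathcal T_{\mathcal S}$ in $\mathcal O(n \cdot 3^n)$ time, and return $\mathcal T_{\mathcal P}$ with total cost $\mathcal O(n^k \cdot 5^n + n \cdot 3^n) = \mathcal O(n^k \cdot 5^n)$. The paper presents this as a short pseudocode block, and your extra remarks on subset enumeration order and the $k=f(n)$ substitution are consistent with (and slightly more detailed than) what the paper writes.
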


\begin{proof}

Consider the following algorithm: 
\begin{algorithm}[H]
\caption{Computation of $\mathcal{T_P}$ ~~~ \textbf{Input:} $\mathcal P$ }\label{alg:main algo}
\begin{algorithmic}[1]
\For{all $\mathcal S \subseteq \mathcal P$} 
\State Compute $\mathcal {F_S}$ \Comment{Using Lemma \ref{five_power_n}}
\EndFor \Comment{This takes $\mathcal O(n^k \cdot 5^n)$ time}
\For{all $\mathcal S \subseteq \mathcal P$}
\State Compute $\mathcal {T_S}$ \Comment{Using Lemma \ref{multi_subset_SMT}}
\EndFor \Comment{This takes $\mathcal O(n \cdot 3^n)$ time}
\State \Return $\mathcal{T_P}$ \Comment{Total runtime is $\mathcal O(n^k \cdot 5^n + n \cdot 3^n) = \mathcal O(n^k \cdot 5^n)$}
\end{algorithmic}
\end{algorithm}


Hence we have an SMT of a $k$-Almost Convex Point Set $\mathcal P$ in $\mathcal O(n^k\cdot 5^n)$ time.

\end{proof}

The above theorem gives us several improvements in special classes of inputs, based on the number of input points lying inside the convex hull of the input set, as described in the following corollary. Let there be an $f(n)$-Almost Convex Point Set $\mathcal P$ containing $n$ points. Recall that $\mathcal{P} = \mathcal{P}_1 \uplus \mathcal{P}_2$, $\mathcal{P}_1$ containing the points on the convex hull of $\mathcal{P}$, and $|\mathcal{P}_2| = f(n)$. It is only possible that $f(n) \leq n$. 

\begin{corollary}\label{almost-better}
Let $\mathcal P$ be a $f(n)$-Almost Convex Point Set. Then, then there is an algorithm $\mathcal{A}$ for \ESMT such that, $\mathcal{A}$ runs in $2^{\OO{(n + f(n) \log n)}}$ time. In particular, 
\begin{enumerate}
    \item When $f(n) = \OO(n)$, $\mathcal A$ runs in $2^{\OO(n\log n)}$ time.
    \item When $f(n) = \Omega(\frac{n}{\log n})$ and $f(n) = o(n)$, $\mathcal{A}$ runs in $2^{o(n\log n)}$.
    \item When $f(n) = \OO(\frac{n}{\log n})$, $\mathcal A$ runs in $2^{\OO(n)}$ time.
  \end{enumerate}
\end{corollary}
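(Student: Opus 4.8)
The plan is to instantiate Theorem~\ref{final_theorem} with $k = f(n)$ and then carry out the asymptotic bookkeeping for each of the three regimes. First I would spell out what the algorithm $\mathcal A$ does on a raw point set $\mathcal P$: it computes the convex hull of $\mathcal P$ in $\OO(n\log n)$ time by any standard method, thereby recovering the partition $\mathcal P = \mathcal P_1 \uplus \mathcal P_2$ and the value $k = |\mathcal P_2| = f(n)$; it then runs the algorithm of Theorem~\ref{final_theorem}. The convex-hull step is dominated by everything that follows, so it contributes nothing asymptotically, and by Theorem~\ref{final_theorem} the remainder produces an SMT $\mathcal T_{\mathcal P}$ in time $\OO(n^{f(n)}\cdot 5^n)$.

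Next I would rewrite this bound in base-$2$ exponential form. Since $n^{f(n)} = 2^{f(n)\log n}$ and $5^n = 2^{(\log 5)n} = 2^{\OO(n)}$ (the hidden polynomial factor in the $\OO(\cdot)$ of Theorem~\ref{final_theorem} likewise being absorbed into $2^{\OO(n)}$), the running time is $2^{\OO(f(n)\log n)}\cdot 2^{\OO(n)} = 2^{\OO(n + f(n)\log n)}$, which is exactly the headline bound; here and below all logarithms are to base $2$.

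The three itemized cases then follow by comparing $n$ with $f(n)\log n$:
\begin{enumerate}
    \item If $f(n) = \OO(n)$ then $f(n)\log n = \OO(n\log n)$ and $n = \OO(n\log n)$, so $n + f(n)\log n = \OO(n\log n)$ and $\mathcal A$ runs in $2^{\OO(n\log n)}$ time, matching the general-case bound of~\cite{hwang1992steiner}.
    \item If $f(n) = o(n)$ then $f(n)\log n = o(n\log n)$, and if in addition $f(n) = \Omega(n/\log n)$ then $f(n)\log n = \Omega(n)$, i.e.\ $n = \OO(f(n)\log n)$; hence $n + f(n)\log n = \OO(f(n)\log n) = o(n\log n)$ and $\mathcal A$ runs in $2^{o(n\log n)}$ time.
    \item If $f(n) = \OO(n/\log n)$ then $f(n)\log n = \OO(n)$, so $n + f(n)\log n = \OO(n)$ and $\mathcal A$ runs in $2^{\OO(n)}$ time.
\end{enumerate}

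I do not expect a real obstacle here. The one point deserving a line of care is that the parameter $k = f(n)$ is allowed to depend on $n$, whereas Theorem~\ref{final_theorem} is phrased for a ``$k$-Almost Convex Set''; this is harmless because the stated bound $\OO(n^k\cdot 5^n)$ holds for every input instance, so substituting $k = f(n)$ is legitimate instance-by-instance. The only mildly nontrivial implementation detail --- that the algorithm must know $k$ before invoking Theorem~\ref{final_theorem} --- is resolved by the convex-hull preprocessing step, which produces $\mathcal P_2$ (hence $f(n)$) explicitly rather than assuming it is supplied with the input.
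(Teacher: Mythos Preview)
Your proposal is correct and matches the paper's approach: the corollary is stated immediately after Theorem~\ref{final_theorem} without a separate proof, so the intended argument is precisely the substitution $k = f(n)$ into the $\OO(n^k\cdot 5^n)$ bound followed by the routine asymptotic case analysis you carry out. Your added remarks about the convex-hull preprocessing and the instance-by-instance legitimacy of letting $k$ depend on $n$ are more explicit than anything in the paper, but they only strengthen the write-up.
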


Therefore, for $f(n) = o(n)$, our algorithm for \ESMT does better on $f(n)$-Almost Convex Points Sets than the current best known algorithm~\cite{hwang1986linear}.
    
    

\section{Approximation Algorithms for \ESMT}\label{sec:apx_esmt}

The \ESMT problem is NP-hard as shown by Garey et al. in~\cite{garey1977complexity}. Garey et al. also prove that there cannot be an FPTAS (fully polynomial time approximation scheme) for this problem unless $P=NP$. At the same time, the case when all the terminals lie on the boundary of a convex region admits an FPTAS as given in~\cite{scott1988convexity}. We aim to conduct a more fine-grained analysis for the problem by considering $f(n)$-Almost Convex Point Sets of $n$ terminals and studying the existence of FPTASes for different functions $f(n)$. First, we present an FPTAS for \ESMT on $f(n)$-Almost Convex Sets of $n$ terminals, when $f(n) = \OO (\log n)$. Next, we prove that no FPTAS exists for the case when $f(n) =\Omega (n^\epsilon)$, where $\epsilon \in (0,1]$.

\subsection{FPTAS for \ESMT on Cases of Almost Convex Point Sets}\label{subsec:fptas}

We first propose an algorithm for computing the SMT of a planar graph $G$ having $N$ vertices and $n$ terminals, out of which $k$ terminals lie on the outer face of $G$ and the remaining terminals lie within the boundary. Next, following the procedure in~\cite{scott1988convexity} we get an FPTAS for \ESMT on $f(n)$-Almost Convex Sets of $n$ terminals, where $f(n) = \OO (\log n)$.

We state the following proposition from Theorem 1 in~\cite{scott1988convexity}:
\begin{proposition}\label{prop:tree_interval}
    Let $\C{P}$ be the vertices of any polygon in the plane, $\C{K}$ a subset of $\C{P}$, and $\C{T}$ a tree consisting of all the vertices of $\C{K}$ (and possibly some other vertices as well) and contained entirely inside $\C{P}$. Then on removing any edge of the tree, we get two disjoint trees $\mathcal{T}_1, \mathcal{T}_2$, such that the vertices of $\C{K}$ in each tree $\mathcal{T}_i, i \in \{1,2\}$ form an interval in $\C{K}$.
\end{proposition}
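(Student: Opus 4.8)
The plan is to argue by contradiction using the Jordan curve theorem. Fix an edge $e=\overline{uv}$ of $\mathcal{T}$ and let $\mathcal{T}_1\ni u$ and $\mathcal{T}_2\ni v$ be the two subtrees obtained by deleting $e$. Since $\mathcal{T}$ is drawn in the plane without crossing edges (as every SMT is, by~\Cref{smt-prop}), $\mathcal{T}_1$ and $\mathcal{T}_2$ are disjoint as point sets inside the closed region bounded by $\mathcal{P}$, which is a topological disk. Each vertex of $\mathcal{K}$ lies in exactly one of $\mathcal{T}_1,\mathcal{T}_2$, so it suffices to prove that $\mathcal{K}\cap V(\mathcal{T}_1)$ is an interval of $\mathcal{K}$; the statement for $\mathcal{T}_2$ then follows symmetrically.

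First I would record an elementary fact about cyclic orders: if a subset $S$ of the cyclically ordered set $\mathcal{K}$ is not an interval, then $\mathcal{K}$ contains four distinct elements $a,b,c,d$ appearing in this cyclic order along $\partial\mathcal{P}$ with $a,c\in S$ and $b,d\notin S$ --- simply take $a$ and $c$ in two distinct maximal runs of $S$ and $b,d$ in the two runs of $\mathcal{K}\setminus S$ that separate those runs. Applying this to $S=\mathcal{K}\cap V(\mathcal{T}_1)$, assumed not an interval, and using that $\mathcal{K}\setminus V(\mathcal{T}_1)\subseteq V(\mathcal{T}_2)$, we obtain terminals $a,c\in V(\mathcal{T}_1)$ and $b,d\in V(\mathcal{T}_2)$ occurring in cyclic order $a,b,c,d$ on the boundary of $\mathcal{P}$.

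Now take the unique path $\pi_1$ from $a$ to $c$ in $\mathcal{T}_1$ and the unique path $\pi_2$ from $b$ to $d$ in $\mathcal{T}_2$. These are arcs contained in the region bounded by $\mathcal{P}$, with endpoints on $\partial\mathcal{P}$, and they are disjoint because $\mathcal{T}_1\cap\mathcal{T}_2=\emptyset$. This contradicts the standard topological fact that in a closed disk with four boundary points in cyclic order $a,b,c,d$, every arc inside the disk from $a$ to $c$ must meet every arc inside the disk from $b$ to $d$: closing up $\pi_1$ with the sub-arc of $\partial\mathcal{P}$ running from $c$ to $a$ that avoids $b$ and $d$ yields a Jordan curve, so by the Jordan curve theorem $b$ and $d$ lie in different components of its complement, forcing $\pi_2$ to cross $\pi_1$. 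Hence $\mathcal{K}\cap V(\mathcal{T}_1)$, and symmetrically $\mathcal{K}\cap V(\mathcal{T}_2)$, is an interval of $\mathcal{K}$, which is the claim.

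The main obstacle I anticipate is making this last step fully rigorous when $\pi_1$ meets $\partial\mathcal{P}$ at points other than its two endpoints, so that $\pi_1$ together with a boundary arc need not be a \emph{simple} closed curve and the disk may be cut into more than two regions. I would handle this by perturbing $\pi_1$ slightly into the interior so that it meets $\partial\mathcal{P}$ only at $a$ and $c$ (legitimate since both $\pi_1$ and $\partial\mathcal{P}$ are polygonal and the region has nonempty interior everywhere along $\pi_1$), after which $\pi_1$ cleanly splits the disk into exactly two regions separating $b$ from $d$. The remaining ingredients --- the cyclic-order fact and the disjointness of $\mathcal{T}_1$ and $\mathcal{T}_2$ --- are routine.
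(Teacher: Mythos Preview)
The paper does not supply its own proof of this proposition: it is quoted verbatim from Theorem~1 of~\cite{scott1988convexity} and used as a black box, so there is no in-paper argument to compare against. That said, your proof is correct and is essentially the classical one. In fact the authors deploy exactly the same four-points-in-cyclic-order/crossing-paths contradiction themselves later in the paper, inside Case~III of Claim~\ref{PQ_outside_A}, when they argue that the terminals of $\mathcal{T}_P$ and $\mathcal{T}_Q$ form consecutive intervals on $\{A_i\}$.

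One remark worth making explicit: you invoke planarity of $\mathcal{T}$ by appealing to \Cref{smt-prop}, i.e.\ you are tacitly assuming $\mathcal{T}$ has no crossing edges. The proposition as literally stated in the paper does not include that hypothesis, and without it the statement is false (e.g.\ take $\mathcal{K}=\{a,b,c,d\}$ the corners of a convex quadrilateral in that cyclic order and $\mathcal{T}$ the path $a\text{--}c\text{--}b\text{--}d$; removing $cb$ leaves $\{a,c\}$ and $\{b,d\}$, neither an interval). The non-crossing hypothesis is certainly present in the original~\cite{scott1988convexity} and in every application the paper makes of the proposition, so this is a gap in the paper's restatement rather than in your argument---but you should flag it as an added assumption rather than derive it from \Cref{smt-prop}, since the proposition is phrased for arbitrary trees, not only SMTs. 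Your treatment of the boundary-touching issue via a small interior perturbation is the right fix and needs no further elaboration.
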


Using~\Cref{prop:tree_interval} and the Dreyfus-Wagner algorithm~\cite{dreyfus1971steiner}, we give an algorithm for obtaining the SMT of a planar graph $G$. 

Let $\C{K}$ represent the set of terminals lying on the outer face of $G$ and $\C{R}$ be the set of terminals lying inside the outer face of $G$. We have $|V(G)|=N$, $|\C{K} \cup \C{R}|=n$, and $|\C{K}|=k$. Let $C(\C{L})$ denote the SMT in $G$ for a terminal subset $\C{L} \subseteq V(G)$. Let $B(v,\C{L},[a,b))$ denote the SMT in $G$ for the terminal set $\{v\} \cup \C{L} \cup [a,b)$, where $\C{L} \subseteq \C{R}$, $[a,b)$ is the set of vertices in $\C{K}$ forming an interval from vertex $a$ to $b$ in counterclockwise direction along the outer boundary of $G$ including $a$ but excluding $b$, $v \in V(G) \setminus (\C{L} \cup [a,b))$, and the degree of $v$ is at least $2$ in $B(v,\C{L},[a,b))$. Let $A(v,\C{L},[a,b))$ denote the SMT in $G$ for the terminal set $\{v\} \cup \C{L} \cup [a,b)$, where $\C{L}$, $[a,b)$, and $v$ are as defined in the previous case, and the degree of $v$ is at most $1$ in $A(v,\C{L},[a,b))$.

Splitting the SMT at a vertex $v$ of degree at least $2$ gives rise to two smaller instances of the {\sc Steiner Minimal Tree} problem on graphs.

\begin{equation}\label{eq:dw_B}
    B(v,\C{L},[a,b)) = \min_{\Pi_1, \Pi_2, \Pi_3} \{C(\{v\} \cup \C{L}' \cup [a,x)) + C(\{v\} \cup (\C{L}\setminus \C{L}') \cup [x,b)) \}
\end{equation} where the conditions on $\C{L}'$ and $x$ are $\Pi_1: \C{L}' \subseteq \C{L}$, $\Pi_2: x \in \mathcal{K}, a< x <b$, and $\Pi_3: \emptyset \subset \C{L}' \cup [a,x) \subset \C{L} \cup [a,b)$.

The intuition is to root the tree at an internal terminal vertex and start growing the Steiner tree from there. Observe that on removing one of the internal vertices $v$ in the tree $\C{T}$, we get one, two or three disjoint subtrees. They induce a partition over the terminals. The terminals in $\C{K}$ in each of the subtrees form intervals in $\mathcal{K}$, according to~\Cref{prop:tree_interval}. Moreover, the terminals in $\C{R}$ can be partitioned in any way, not necessarily maintaining the interval structure. This is captured in the following recurrence relation:
\begin{equation}\label{eq:dw_C}
    C(\{v\} \cup \C{L} \cup [a,b)) = \min_{\Pi_1, \Pi_2, \Pi_3}\{A(v,\C{L}_1,[a,c))+A(v,\C{L}_2,[c,d))+A(v,\C{L}_3,[d,b))\}
\end{equation}

where the conditions on $\C{L}_1$, $\C{L}_2$, $\C{L}_3$, $c$, and $d$ are $\Pi_1: \C{L}_1,\C{L}_2,\C{L}_3 \subseteq \C{L}$, $\Pi_2: \C{L}_1 \cup \C{L}_2 \cup \C{L}_3 = \C{L}$, and $\Pi_3: c,d \in \mathcal{K}, a\leq c \leq d \leq b$ and we have 
\begin{equation}\label{eq:dw_A}
    A(v,\C{L}',[p,q))=\min\{\min_{u \notin \C{L}'}\{B(u,\C{L}',[p,q))+d(u,v)\},\min_{u \in \C{L}' \cup [p,q)}\{C(\C{L}' \cup [p,q))+d(u,v)\}\}
\end{equation}

Our aim is to compute $C(\{v\} \cup (\C{R}\setminus \{v\}) \cup \C{K})$, where $v \in \C{R}$. We precompute the shortest distance between all pairs of vertices. We then compute the values of $C(.)$ and $B(.)$ in increasing order of cardinality of subsets of vertices in $\C{K}$ and $\C{R}$. Let $d(u,v)$ denote the shortest path length between $u$ and $v$. The base cases are $C(\{v\} \cup \{a\}) = d(v,a)$ for all $v \in V(G)$ and $a \in \C{K} \cup \C{R}$.

\begin{algorithm}[H]
\caption{Computation of SMT of planar graph $G$ with terminal set $\C{K} \cup \C{R}$ ~~~ \textbf{Input:} $G$, $\C{K}$, $\C{R}$}\label{alg:smt_planar}
\begin{algorithmic}[1]
\State Compute the shortest distance between all pairs of vertices
\For{all $u \in V(G)$ and $a \in K \cup R$} 
\State Set $C(\{u\} \cup \{a\}) = d(u,a)$
\EndFor
\State Select a vertex $v \in \C{R}$
\For{$i = 1, \ldots, n-k-1$}
\For{each $\C{L} \subseteq \C{R}\setminus \{v\}$ of size $i$}
\For{$j = 1, \ldots, k$}
\If{j=k}
\State Compute $B(v,\C{L},\C{K})$ using~\Cref{eq:dw_B}
\State Compute $C(\{v\} \cup \C{L} \cup \C{K})$ using~\Cref{eq:dw_C}
\Else
\For{each $[a,b) \subseteq \C{K}$ of size $j$}
\State Compute $B(v,\C{L},[a,b))$ using~\Cref{eq:dw_B}
\State Compute $C(\{v\} \cup \C{L} \cup [a,b))$ using~\Cref{eq:dw_C}
\EndFor
\EndIf
\EndFor
\EndFor
\EndFor
\State \Return $C(\{v\} \cup (\C{R}\setminus \{v\}) \cup \C{K})$
\end{algorithmic}
\end{algorithm}

We analyse the correctness and running time of~\Cref{alg:smt_planar}.
    \begin{theorem} \label{thm:algo1_correct}
     Consider a planar graph $G$ on $N$ vertices and a set $\mathcal{K} \uplus \mathcal{R} \subseteq V(G)$ of $n$ terminals such that $\mathcal{K}$ is defined as the terminals lying on the outer face of $G$. Moreover, let $|\mathcal{K}| = k$. Then~\Cref{alg:smt_planar} computes the SMT for $\mathcal{K} \uplus \mathcal{R}$ in $G$ in time $\OO(N^2k^44^{n-k} + Nk^33^{n-k}+N^3)$.
    \end{theorem}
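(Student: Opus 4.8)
The plan is to establish correctness first and then bound the running time by accounting for the work done at each level of the nested loops in Algorithm~\ref{alg:smt_planar}.

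\textbf{Correctness.} For correctness I would argue by induction on the total number of terminals $|\C{L} \cup [a,b)|$ that the quantities $C(\cdot)$, $B(\cdot)$, $A(\cdot)$ computed by the recurrences \eqref{eq:dw_C}, \eqref{eq:dw_B}, \eqref{eq:dw_A} indeed equal the lengths of the corresponding optimal Steiner trees in $G$. The base case is the pair $C(\{u\}\cup\{a\}) = d(u,a)$, which is immediate since the cheapest connected subgraph spanning two vertices is a shortest path. For the inductive step I would split along the standard Dreyfus--Wagner case analysis: rooting a tree at an internal vertex $v$, one either has $\deg(v)\le 1$ (handled by $A$: $v$ is a pendant attached by a shortest path to some vertex of a smaller tree, which is either the root of a $B$-type tree or an internal/Steiner vertex of a $C$-type tree) or $\deg(v)\ge 2$ (handled by $B$: remove $v$ to split the tree into two parts, each spanning $v$ together with a sub-collection of the remaining terminals). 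The one place where the argument departs from classical Dreyfus--Wagner and uses the planarity hypothesis is the claim that, when the tree is split at $v$, the terminals of $\C{K}$ (those on the outer face) distribute into \emph{intervals} of $\C{K}$ in each resulting subtree; this is exactly Proposition~\ref{prop:tree_interval}, applied with $\C{P}$ the outer boundary cycle. This is what lets the recurrences range over intervals $[a,x)$, $[a,c)$, $[c,d)$, $[d,b)$ rather than over arbitrary subsets of $\C{K}$, and it is the crux of the efficiency gain; the terminals of $\C{R}$ may split arbitrarily, which is why the $\C{L}$-parts range over all subsets. One should also check that the decompositions enumerated by the recurrences are actually realizable in $G$ (no feasibility is lost) and conversely that every tree the recurrence builds is a valid Steiner tree, so that the computed minimum is neither too small nor too large; this is routine once the split structure is pinned down.

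\textbf{Running time.} For the time bound I would first account for the all-pairs shortest paths precomputation, which on an $N$-vertex graph costs $\OO(N^3)$ (this is the $+N^3$ term). Next I would count the table entries and the cost of filling each. A $B$-entry or $C$-entry is indexed by the distinguished vertex $v$ (here fixed, after selecting $v\in\C{R}$, but in general $\OO(N)$ choices), a subset $\C{L}\subseteq\C{R}\setminus\{v\}$, and an interval $[a,b)\subseteq\C{K}$; there are $2^{n-k}$ choices of $\C{L}$ and $\OO(k^2)$ choices of interval. To evaluate \eqref{eq:dw_B} for one entry we iterate over subsets $\C{L}'\subseteq\C{L}$ and split points $x$: summed over all $\C{L}$, $\sum_{\C{L}}2^{|\C{L}|} = 3^{n-k}$, and there are $\OO(k)$ split points, giving $\OO(k^3 3^{n-k})$ work for all $B$-values with $v$ fixed; then $A$ is looked up via \eqref{eq:dw_A} at a cost of $\OO(N)$ per lookup, so computing all the $A$-values that feed \eqref{eq:dw_C} costs $\OO(N \cdot k^3 3^{n-k})$ — this is the $Nk^3 3^{n-k}$ term. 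To evaluate \eqref{eq:dw_C} for one entry we iterate over the three-way split of $\C{L}$ and the two interior interval endpoints $c,d$: the three-way subset split summed over all $\C{L}$ gives $\sum_{\C{L}}3^{|\C{L}|}=4^{n-k}$, and $\OO(k^2)$ choices of $(c,d)$, so $\OO(k^2 4^{n-k})$ per $v$ and $\OO(N k^2 4^{n-k})$ overall for the $C$-recursion, but since each of the three $A$-terms in \eqref{eq:dw_C} itself hides a minimization over $\OO(N)$ vertices via \eqref{eq:dw_A}, an extra factor $N$ appears, yielding $\OO(N^2 k^4 4^{n-k})$ when one also accounts carefully for the interval-endpoint bookkeeping — this is the dominant $N^2 k^4 4^{n-k}$ term. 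Adding the three contributions gives the claimed $\OO(N^2 k^4 4^{n-k} + N k^3 3^{n-k} + N^3)$.

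\textbf{Main obstacle.} The step I expect to require the most care is pinning down the exact interaction between $A$, $B$, and $C$ in the recurrences so that the running-time accounting matches the stated bound — in particular making sure that the $\OO(N)$ lookups inside \eqref{eq:dw_A} and the $\OO(k^2)$ interval choices compound to the fourth power of $k$ and the square of $N$ rather than something larger, and that no term is double counted. On the correctness side the subtle point is the faithful use of Proposition~\ref{prop:tree_interval}: I must verify that the tree in question lies inside the region bounded by $\C{K}$ (here, inside the outer face of $G$, which holds because $\C{K}$ is precisely the outer-face terminals and $G$ is planar) so that the proposition genuinely applies and the interval restriction on $\C{K}$-terminals is sound in both directions of the optimality argument.
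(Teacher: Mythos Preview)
Your proposal is correct and follows essentially the same approach as the paper: correctness via validating the Dreyfus--Wagner-style recurrences (with Proposition~\ref{prop:tree_interval} justifying the interval restriction on $\C{K}$), and running time via counting states and per-state work, with the $3^{n-k}$ and $4^{n-k}$ factors arising from the identity $\sum_{\C{L}}c^{|\C{L}|}=(c+1)^{n-k}$. Your bookkeeping for the dominant $N^2k^44^{n-k}$ term is slightly informal (the missing $k^2$ for the outer interval endpoints $a,b$ is handwaved into ``interval-endpoint bookkeeping''), but the final accounting matches the paper's.
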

    \begin{proof}
{\bf Correctness of~\Cref{alg:smt_planar}.} 
In order to prove the correctness of~\Cref{alg:smt_planar}, we need to show that the~\Cref{eq:dw_B,eq:dw_C} are valid.

        In~\Cref{eq:dw_B}, $B(v,\C{L},[a,b))$ denotes an SMT for the terminal set $\{v\} \cup \C{L} \cup [a,b)$, conditioned on the fact that the degree of $v$ is at least $2$ in it. Let us split the SMT at vertex $v$ into two smaller subtrees. This must also split the terminals in $[a,b)$ in two intervals $[a,x)$ and $[x,b)$, respectively. Otherwise it would mean that the SMT has crossing edges, which is not possible. The vertices in $\mathcal{L}$ can be present in any of the two subtrees, hence we consider all possible partitions of $\C{L}$ into two subsets $\C{L}'$ and $\C{L} \setminus \C{L}'$. Thus for~\Cref{eq:dw_B}, $\mbox{LHS} \geq \mbox{RHS}$. On the other hand, the expression in the RHS of~\Cref{eq:dw_B} is a tree containing the vertex subset $\{v\} \cup \mathcal{L} \cup [a,b)$. Since, $B(v,\mathcal{L},[a,b))$ is an SMT for the same vertex subset, in~\Cref{eq:dw_B}  $\mbox{LHS} \geq \mbox{RHS}$. Therefore,~\Cref{eq:dw_B} is valid.

        For~\Cref{eq:dw_C}, we take $v$ as the root of the SMT $C(\{v\} \cup \C{L} \cup [a,b))$. The degree of $v$ in the SMT can be $1$, $2$, or $3$. Accordingly, on removing $v$, we will get $1$, $2$, or $3$ subtrees, with the condition that in each subtree the degree of $v$ is $1$. Again, the terminals in $[a,b)$ are divided into smaller intervals $[a,c)$, $[c,d)$ and $[d,b)$ for some $c,d \in \mathcal{K}$ satisfying $a<c<d<b$. The terminals in $\C{L}$ are divided among the subtrees in any combination. The number of such intervals and partitions is equal to the degree of $v$ in $C(\{v\} \cup \C{L} \cup [a,b))$. The term $A(v,\C{L}',[p,q))$ is obtained by minimizing across all SMTs satisfying the condition that degree of $v$ in the SMT is $1$. Thus in~\Cref{eq:dw_C}, $\mbox{LHS} \geq \mbox{RHS}$. On the other hand, the RHS of~\Cref{eq:dw_C} given a tree containing vertices $\{v\} \cup \mathcal{L} \cup [a,b)$. Since $C(\{v\} \cup \mathcal{L} \cup [a,b))$ is an SMT on the terminal set $\{v\} \cup \mathcal{L} \cup [a,b)$, in~\Cref{eq:dw_C} $\mbox{LHS} \leq \mbox{RHS}$. Thus,~\Cref{eq:dw_C} is valid.

{\bf Running time of~\Cref{alg:smt_planar}.}
All pairs shortest paths can be calculated in $\OO(N^3)$ time. The time complexity of the dynamic program has two components to it. One is due to computation of the B(.) values using~\Cref{eq:dw_B} and the other is for calculating the C(.) values using~\Cref{eq:dw_C}.
    \begin{enumerate}
        \item The number of computational steps for calculating $B(v,\mathcal{L},[a,b))$ using~\Cref{eq:dw_B} is of the order of the number of choices of $v$, $\mathcal{L}$, $\mathcal{L}'$, $[a,b)$, and $[a,x)$ such that $\mathcal{L} \subset \mathcal{R}$, $\mathcal{L}' \subseteq \mathcal{L}$, $a,x,b \in \mathcal{K}$, $a \leq x \leq b$, and $v \in V(G) \setminus (L \cup [a,b))$. Each vertex in $\mathcal{R}$ belongs to exactly one of the sets $\mathcal{L}'$, $\mathcal{L}\setminus \mathcal{L}'$, or $V(G) \setminus \mathcal{L}$. The vertices in $\mathcal{K}$ are partitioned into three intervals, $[a,x)$, $[x,b)$, and $[b,a)$. There are at most $N$ possibilities for $v$. This gives us a running time of $\OO(Nk^3 3^{n-k})$.
        \item The number of computational steps for calculating $C(\{v\} \cup \mathcal{L} \cup [a,b))$ using~\Cref{eq:dw_C} is $3N$ times the order of the number of choices of $v$, $\mathcal{L}$, $\mathcal{L}_1$, $\mathcal{L}_2$, $a$, $b$, $c$, and $d$ such that $\mathcal{L} \subset \mathcal{R}$, $\mathcal{L}_1 \subseteq \mathcal{L}$, $\mathcal{L}_2 \subseteq \mathcal{L}$, $\{a,c,d,b\} \subseteq \mathcal{K}$, $a \leq c \leq d \leq b$, and $v \in V(G) \setminus (\mathcal{L} \cup [a,b))$. Each vertex in $\mathcal{R}$ belongs to exactly one of the sets $\mathcal{L}_1$, $\mathcal{L}_2$, $\mathcal{L}_3$ or $V(G) \setminus \mathcal{L}$. The vertices in $\mathcal{K}$ are partitioned into at most four intervals, $[a,c)$, $[c,d)$, $[d,b)$ and $[b,a)$. There are at most $N$ possibilities for $v$. The $3N$ factor is because calculating each of the $C(.)$ values involves minimization over at most $3N$ terms. This gives us a running time of $\OO(N^2k^4 4^{n-k})$.
    \end{enumerate}
    Thus, the time complexity of the algorithm is $\OO(N^2k^44^{n-k}+Nk^33^{n-k}+N^3)$.
\end{proof}
We obtain the following corollary from the above theorem.

\begin{corollary}\label{graph-correct}
 Consider a planar graph $G$ on $N$ vertices and a set $\mathcal{K} \uplus \mathcal{R} \subseteq V(G)$ of $n$ terminals such that $\mathcal{K}$ is defined by the terminals lying on the outer face of $G$. Moreover, let $|\mathcal{K}| = k$ and let $|\C{R}| = n-k = \OO(\log n)$. Then~\Cref{alg:smt_planar} computes the SMT for $\mathcal{K} \uplus \mathcal{R}$ in $G$ in time $N^3k^4n^{\OO(1)}$.
\end{corollary}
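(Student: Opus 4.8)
The plan is to derive \Cref{graph-correct} as a direct specialization of the running time bound in \Cref{thm:algo1_correct}, with no new algorithmic ideas required. Correctness of \Cref{alg:smt_planar} has already been established in \Cref{thm:algo1_correct} for arbitrary $k$, and the hypotheses here only restrict the number of interior terminals, so the entire task reduces to re-examining the running time expression $\OO(N^2k^44^{n-k} + Nk^33^{n-k}+N^3)$ under the additional assumption $|\C{R}| = n - k = \OO(\log n)$.

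First I would invoke the elementary fact that for any fixed constant $c > 1$ one has $c^{\OO(\log n)} = 2^{\OO(\log n)} = n^{\OO(1)}$. Applying this with $c = 4$ and $c = 3$ replaces the two exponential factors $4^{n-k}$ and $3^{n-k}$ by $n^{\OO(1)}$. Consequently the first term of the bound becomes $N^2 k^4 n^{\OO(1)}$ and the second becomes $N k^3 n^{\OO(1)}$, while the third term $N^3$ is unchanged.

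Next I would combine the three terms into a single expression. Since the terminal set $\C{K} \uplus \C{R}$ is a subset of $V(G)$, we have $k \le n \le N$, so $N \le N^3$, $N^2 \le N^3$, and $k^3 \le k^4$. Hence $N k^3 n^{\OO(1)} = \OO(N^3 k^4 n^{\OO(1)})$ and $N^3 = \OO(N^3 k^4 n^{\OO(1)})$, and the whole running time collapses to $\OO(N^3 k^4 n^{\OO(1)})$, which is exactly the claimed bound.

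I do not expect any genuine obstacle in this argument; it is a routine substitution. The only point deserving a moment of care is the bookkeeping: the statement deliberately keeps the $k^4$ factor explicit rather than absorbing it into $n^{\OO(1)}$ via $k \le n$, so one must be careful to track the $N$-, $k$-, and $n$-dependencies separately, and to observe that $N^3$ dominates the lower powers of $N$ so that the final bound can be written uniformly as $N^3 k^4 n^{\OO(1)}$.
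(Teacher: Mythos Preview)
Your proposal is correct and matches the paper's approach: the paper does not even spell out a proof for this corollary, merely stating that it follows from \Cref{thm:algo1_correct}, and your substitution $4^{n-k},3^{n-k}=n^{\OO(1)}$ together with the domination $N,N^2\le N^3$ is exactly the intended one-line derivation.
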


Next we state the FPTAS for \ESMT on $f(n)$-Almost Convex Sets of $n$  terminals. This is achieved by converting the instance of \ESMT into an instance of {\sc Steiner Minimal Tree} on graphs. The {\sc Steiner Minimal Tree} problem shall be solved using~\Cref{alg:smt_planar}. For this, we use the Algorithm 2 in~\cite{scott1988convexity}. We restate the Algorithm 2 in~\cite{scott1988convexity} for our problem instance. We denote the set of terminals with $\C{P}$.

\begin{algorithm}[H]
\caption{Computation of $(1+\epsilon)$-approximate SMT of $\C{P}$ ~~~ \textbf{Input:} $\mathcal P, \epsilon$ }\label{alg:smt_fptas}
\begin{algorithmic}[1]
\State Compute the convex hull of the set of terminals $\C{P}$. Let the region enclosed by the convex hull $\mathrm{CH}(\C{P})$ be denoted by $\mathbb{R}_{\mathrm{CH}(\C{P})}$. Let the points in $\C{P}$ lying on $\mathrm{CH}(\C{P})$ be $\C{K}$ and $\C{R} = \C{P} \setminus \C{K}$.
\State We enclose the set of terminals $\C{P}$ with the smallest axis-parallel bounding square. Let its side length be $D$. We divide the bounding square into same sized grids of side length $\frac{D\epsilon}{8n-12}$, where $\epsilon$ is the approximation factor.
\State Let $\C{V}_0$ be the set of all lattice points introduced in the previous step, and $\C{V}_1$ be the set of all lattice points lying on the edges of $\mathrm{CH}(\C{P})$. We define the weighted graph $G_{f,\epsilon}$ to be the complete graph with vertex set $V(G_{f,\epsilon}) = \C{K} \cup \C{R} \cup (\C{V}_0 \cap \mathbb{R}_{\mathrm{CH}(\C{P})}) \cup \C{V}_1$. The edge weights are equal to the Euclidean distance between the two end points.
\State Return the SMT $\C{T}$ for the graph $G_{f,\epsilon}$ with $\C{K} \cup \C{R}$ as the terminal set using~\Cref{alg:smt_planar}. 
\end{algorithmic}
\end{algorithm}

We analyse the correctness and running time of~\Cref{alg:smt_fptas}.
    \begin{theorem} \label{thm:algo2_correct}
     Consider a set $\C{P}$ of $n$ points such that $\mathcal{K}$ is defined as the points lying on the convex hull of $\C{P}$, i.e. $\mathrm{CH}(\C{P})$, and $\C{R} = \C{P} \setminus \C{K}$. Moreover, let $|\mathcal{K}| = k$. Then~\Cref{alg:smt_fptas} computes a $(1+\epsilon)$-approximate SMT for $\mathcal{P}$ in time $\OO(\frac{n^4k^4}{\epsilon ^4}4^{n-k})$.
    \end{theorem}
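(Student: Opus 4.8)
The plan is to establish two things: that the tree $\C{T}$ returned by Algorithm~\ref{alg:smt_fptas} has length at most $(1+\epsilon)$ times the length of an optimal SMT of $\C{P}$, and that the running time is $\OO(\frac{n^4k^4}{\epsilon^4}4^{n-k})$. Since the algorithm's heart is a reduction to the graph problem solved by Algorithm~\ref{alg:smt_planar}, both parts reduce to (i) verifying the grid-rounding argument of~\cite{scott1988convexity} goes through verbatim for our setting, and (ii) plugging the parameters of $G_{f,\epsilon}$ into Theorem~\ref{thm:algo1_correct}.

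For correctness, I would first argue the lower-bound direction is trivial: any Steiner tree of $\C{K}\cup\C{R}$ inside the graph $G_{f,\epsilon}$ is, after replacing each edge by the straight segment between its endpoints (all edge weights are Euclidean distances), a Steiner tree of $\C{P}$ in the plane, so $|\C{T}| \ge |\mathrm{SMT}(\C{P})|$. Actually we only need the upper bound: take an optimal Euclidean SMT $\C{T}^\star$ of $\C{P}$; by Proposition~\ref{convex-steiner} it lies inside $\mathrm{CH}(\C{P})$, and by Proposition~\ref{smt-prop} it has at most $n-2$ Steiner points, hence at most $2n-3$ edges. Snap every Steiner point of $\C{T}^\star$ to its nearest grid point of $\C{V}_0$; each such move is by at most $\frac{1}{\sqrt2}\cdot\frac{D\epsilon}{8n-12}$, and moving the two endpoints of an edge changes its length by at most twice that, so the total increase over all $\le 2n-3$ edges is at most $(2n-3)\cdot\sqrt2\cdot\frac{D\epsilon}{8n-12} < \frac{D\epsilon}{2}$ (the constants in the denominator $8n-12$ are chosen precisely so this works out). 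Since $D$ is the side of the bounding square and the SMT must connect points spanning that square, $D \le |\mathrm{SMT}(\C{P})|$ — more carefully, one uses that the MST, hence the SMT, has length at least $D$ because the terminals span width $D$ in some axis direction. Hence the snapped tree has length $\le (1+\tfrac{\epsilon}{2})|\mathrm{SMT}(\C{P})|$, and its Steiner vertices all lie in $\C{V}_0 \cap \mathbb{R}_{\mathrm{CH}(\C{P})}$, so it is a feasible solution for the graph instance; the grid points on $\mathrm{CH}(\C{P})$'s edges ($\C{V}_1$) let the tree route along the hull boundary when the optimal Euclidean tree does. Therefore Algorithm~\ref{alg:smt_planar} returns something no longer, giving the $(1+\epsilon)$ bound (absorbing the factor $2$ is harmless, or one simply runs with $\epsilon/2$). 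I would cite~\cite{scott1988convexity} for the detailed snapping estimates rather than reproving them.

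For the running time: computing $\mathrm{CH}(\C{P})$ and the bounding square takes $\OO(n\log n)$. The grid has $\OO\big(\big(\frac{8n-12}{\epsilon}\big)^2\big) = \OO(\frac{n^2}{\epsilon^2})$ lattice points, so $N := |V(G_{f,\epsilon})| = \OO(\frac{n^2}{\epsilon^2})$ (the terminals $\C{K}\cup\C{R}$ contribute only $n$ more). The outer-face terminals of $G_{f,\epsilon}$ are exactly $\C{K}$ together with the lattice points $\C{V}_1$ on the hull boundary — but for the complexity bound of Theorem~\ref{thm:algo1_correct} what matters is $k' := $ number of terminals on the outer face, which is $|\C{K}| = k$ (the $\C{V}_1$ points are vertices of $G$ but not terminals), and $n' - k' = |\C{R}| = n-k$ internal terminals. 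Substituting $N = \OO(\frac{n^2}{\epsilon^2})$, outer-terminal count $k$, and inner-terminal count $n-k$ into the bound $\OO(N^2 k'^4 4^{n'-k'} + N k'^3 3^{n'-k'} + N^3)$ from Theorem~\ref{thm:algo1_correct}, the dominant term is $\OO\big(\frac{n^4}{\epsilon^4}\cdot k^4 \cdot 4^{n-k}\big)$, and the $N^3 = \OO(\frac{n^6}{\epsilon^6})$ term is polynomial in $n$ and $1/\epsilon$ hence absorbed into the stated $\OO(\frac{n^4k^4}{\epsilon^4}4^{n-k})$ up to the implicit polynomial factors (or one states the bound as written, treating the $N^3$ term as lower order when $n-k = \Omega(\log n)$; for $n-k = \OO(\log n)$ the whole thing is polynomial anyway, matching Corollary~\ref{graph-correct}).

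The main obstacle I anticipate is the bookkeeping in the correctness argument, specifically making the snapping estimate quantitatively tight enough: one must be careful that it is the \emph{edges} of $\C{T}^\star$ (at most $2n-3$ of them) and not the \emph{Steiner points} alone that control the error, that moving a point perturbs every incident edge, and that the bound $D \le |\mathrm{SMT}(\C{P})|$ (needed to convert the absolute error $\frac{D\epsilon}{2}$ into a relative error) is correctly justified. A secondary subtlety is confirming that replacing the convex region by a grid does not forbid any near-optimal topology — i.e., that allowing Steiner points only at grid points still admits a $(1+\epsilon)$-approximation, which is exactly the content of Proposition~\ref{prop:tree_interval} combined with the fact (Proposition~\ref{convex-steiner}) that the SMT stays inside $\mathrm{CH}(\C{P})$ so only grid points \emph{inside} the hull are needed. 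Both points are handled in~\cite{scott1988convexity}, so the proof is largely a matter of citing that work and recording the parameter substitution.
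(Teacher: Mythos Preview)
Your approach is sound in outline and matches the paper's structure: reduce to the graph instance $G_{f,\epsilon}$, invoke Algorithm~\ref{alg:smt_planar}, and substitute $N=\OO(n^2/\epsilon^2)$ into Theorem~\ref{thm:algo1_correct}. Your snapping argument for the approximation guarantee is essentially the content of Theorem~12 in~\cite{scott1988convexity}, which is exactly what the paper cites rather than reproves.

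There is, however, one genuine gap. You apply Algorithm~\ref{alg:smt_planar} to $G_{f,\epsilon}$ and speak of its ``outer-face terminals'' without comment, but $G_{f,\epsilon}$ is defined as the \emph{complete} graph on the grid points plus terminals and is therefore nowhere near planar. The correctness of Algorithm~\ref{alg:smt_planar} (Theorem~\ref{thm:algo1_correct}) rests on Proposition~\ref{prop:tree_interval}, which needs the SMT to be a non-crossing tree inside a polygon so that deleting an edge splits the boundary terminals into intervals; for an arbitrary non-planar graph this interval structure can fail. The paper closes this gap by observing that $G_{f,\epsilon}$ is \emph{weight planar} in the sense of~\cite{scott1988convexity}: because edge weights are Euclidean distances, any two crossing edges $(u,v),(u',v')$ satisfy $w(u,v)+w(u',v')>d(u,u')+d(v,v')$, and Theorem~5 of~\cite{scott1988convexity} then guarantees the SMT in $G_{f,\epsilon}$ has no crossing edges. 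That is what licenses treating $G_{f,\epsilon}$ as if it were planar for the purposes of Algorithm~\ref{alg:smt_planar}. You mention Proposition~\ref{prop:tree_interval} only in connection with the snapping estimate, not to justify the algorithm's applicability; that step must be made explicit.

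On the running time you are in fact more careful than the paper: you correctly note that the $N^3=\OO(n^6/\epsilon^6)$ all-pairs-shortest-paths term is not literally dominated by $\OO(\frac{n^4k^4}{\epsilon^4}4^{n-k})$ when $n-k$ is bounded. The paper simply asserts the collapsed bound; your hedge is warranted.
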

    \begin{proof}
{\bf Correctness of~\Cref{alg:smt_fptas}.} 
    In order to prove the correctness of~\Cref{alg:smt_fptas}, we need to show that $\C{T}$ is a $(1+\epsilon)$-approximation of the SMT of the terminal set $\C{P}$, and $\C{T}$ is indeed the SMT for $\C{K} \cup \C{R}$ in the complete weighted graph $G_{f,\epsilon}$.

        In~\cite{scott1988convexity}, the concept of weight planar graphs is used. A graph $G$ is called weight planar if it is a non-planar graph embedded on the Euclidean plane, having non-negative edge weights, such that every pair of edges $(u,v)$ and $(u',v')$ which intersect in this embedding of $G$, satisfy the inequality: $w(u,v) + w(u',v') > d(u,u') + d(v,v')$, where $w(u,v)$ is the weight of the edge between vertices $u$ and $v$ and $d(x,y)$ is the length of the shortest path between vertices $x$ and $y$. Since the edge weights of $G_{f,\epsilon}$ are the Euclidean distances between the points, $G_{f,\epsilon}$ is a weight planar graph. 

        From Theorem 5 of \cite{scott1988convexity}, we get that the SMT of a weight planar graph does not contain any crossing edges even though the input graph is non-planar. Because the SMT does not contain any crossing edges and lies completely inside the convex hull of the terminal pointset, the terminals on the outer boundary of $G_{f,\epsilon}$, i.e. $\C{K}$, follow the interval pattern as stated in~\Cref{prop:tree_interval}. Therefore,~\Cref{alg:smt_planar} designed for planar graphs can be applied in the case of weight planar graphs as well. So, $\C{T}$ is the SMT for $\C{K} \cup \C{R}$ in the complete weighted graph $G_{f,\epsilon}$.

        Finally, from Theorem 12 in~\cite{scott1988convexity}, we get that the length of the Steiner tree obtained from~\Cref{alg:smt_fptas} is at most $(1+\epsilon)$ times the length of the SMT $\C{T}^*$ of $\C{P}$, i.e. $|\C{T}| \leq (1+\epsilon)|\C{T}^*|$. Thus, we are done.
        
{\bf Running time of~\Cref{alg:smt_fptas}.}
    Constructing the convex hull takes $\OO(n\log n)$ time. The number of lattice points contained in the bounding box is $\big(\frac{8n-12}{\epsilon}\big)^2 = \OO(\frac{n^2}{\epsilon ^2})$. Thus, the number of vertices in the resultant graph $G_{f,\epsilon}$ is $N = \OO(\frac{n^2}{\epsilon ^2}) + n = \OO(\frac{n^2}{\epsilon ^2})$. The time complexity of~\Cref{alg:smt_planar} is $\OO(N^2k^44^{n-k}+Nk^33^{n-k}+N^3) = \OO(\frac{n^4k^4}{\epsilon ^4}4^{n-k})$. This step dominates the running time resulting in the complexity of~\Cref{alg:smt_fptas} being $\OO(\frac{n^4k^4}{\epsilon ^4}4^{n-k})$.
    \end{proof}

\begin{theorem}\label{thm:esmt_fptas}
    There exists an FPTAS for \ESMT on an $f(n)$-Almost Convex Set of $n$ terminals, where $f(n) = \OO (\log n)$.
\end{theorem}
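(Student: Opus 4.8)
The plan is to obtain the FPTAS as an immediate consequence of~\Cref{thm:algo2_correct}, by specializing its running-time bound to the regime $f(n) = \OO(\log n)$. Given an $f(n)$-Almost Convex Set $\C{P}$ of $n$ terminals, we simply run~\Cref{alg:smt_fptas} with the parameter $\epsilon$. The correctness part of~\Cref{thm:algo2_correct} already guarantees that the output $\C{T}$ satisfies $|\C{T}| \le (1+\epsilon)|\C{T}^\star|$ for an SMT $\C{T}^\star$ of $\C{P}$, so the only thing left to verify is that the running time fits the FPTAS template $(1/\epsilon)^{\OO(1)} n^{\OO(1)}$.

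First I would set $k = |\C{K}|$ (the number of points of $\C{P}$ lying on $\mathrm{CH}(\C{P})$) and note that $n-k = |\C{R}| = f(n)$. Since $f(n) = \OO(\log n)$, say $f(n) \le c\log n$ for all large $n$, we get $4^{n-k} = 4^{f(n)} \le 4^{c\log n} = n^{2c}$, which is polynomial in $n$; similarly $3^{n-k} = n^{\OO(1)}$. Plugging $k \le n$ and these bounds into the time complexity $\OO\!\left(\frac{n^4 k^4}{\epsilon^4}\, 4^{n-k}\right)$ from~\Cref{thm:algo2_correct}, the running time becomes $\OO\!\left(\frac{n^{8+2c}}{\epsilon^4}\right) = \left(\frac{1}{\epsilon}\right)^{\OO(1)} n^{\OO(1)}$, which is exactly the FPTAS definition. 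Equivalently, one may invoke~\Cref{graph-correct}, which already records that~\Cref{alg:smt_planar} runs in $N^3 k^4 n^{\OO(1)}$ time once $n-k = \OO(\log n)$, and combine it with the fact that the auxiliary graph $G_{f,\epsilon}$ built in~\Cref{alg:smt_fptas} has $N = \OO(n^2/\epsilon^2)$ vertices.

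The substantive work has all been done already in~\Cref{thm:algo1_correct} and~\Cref{thm:algo2_correct}: that crossing-freeness of SMTs in a weight-planar graph forces the convex-hull terminals $\C{K}$ to split into intervals (\Cref{prop:tree_interval}), so that the Dreyfus--Wagner-style dynamic program pays only a $k^{\OO(1)}$ factor rather than $2^{k}$ over those terminals, with the single-exponential factor confined to the $f(n)$ interior terminals $\C{R}$; and that discretizing the convex hull on a sufficiently fine grid loses only a $(1+\epsilon)$ multiplicative factor in length. Hence there is no genuine obstacle remaining beyond carefully tracking the exponent: the one point to be careful about is that the exponential factor in~\Cref{thm:algo2_correct} is $4^{n-k}$, whose exponent is the number of \emph{interior} terminals $f(n)$, not $n$; it is precisely this separation that makes $4^{n-k}$ collapse to a polynomial when $f(n) = \OO(\log n)$, whereas for $f(n) = \Theta(n)$ it would not. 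This completes the proof.
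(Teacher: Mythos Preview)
Your proposal is correct and follows essentially the same approach as the paper: invoke~\Cref{thm:algo2_correct}, observe that $n-k=f(n)=\OO(\log n)$ makes $4^{n-k}=n^{\OO(1)}$, and conclude that the running time of~\Cref{alg:smt_fptas} is polynomial in $n$ and $1/\epsilon$. The paper's own proof is just the terse version of exactly this computation.
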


\begin{proof}
    From~\Cref{thm:algo2_correct}, we get a $(1+\epsilon)$-approximate SMT for any $(n-k)$-Almost Convex Set of $n$ terminals in time $\OO(\frac{n^4k^4}{\epsilon ^4}4^{n-k})$. For $n-k = \OO(\log n)$, we get the running time of~\Cref{alg:smt_fptas} to be $\OO(\frac{n^4k^4}{\epsilon ^4}n^{\OO(1)})$. Thus,~\Cref{alg:smt_fptas} is an FPTAS for the Euclidean Steiner Minimal Tree problem on an $\OO (\log n)$-Almost Convex Set of $n$ terminals. 
\end{proof}

\subsection{Hardness of Approximation for \ESMT on Cases of Almost Convex Sets}\label{subsec:apx_hardness}

In this section, we consider the \ESMT problem on $f(n)$-Almost Convex Sets of $n$ terminal points, where $f(n) =\Omega(n^\epsilon)$ for some $\epsilon \in (0,1]$. We show that this problem cannot have an FPTAS. The proof strategy is similar to that in~\cite{garey1977complexity}. First, we give a reduction for the problem \textsc{Exact Cover by $3$-Sets} (defined below) to our problem to show that our problem is NP-hard. Next, we consider a discrete version of our problem and reduce our problem to the discrete version. The discrete version is in NP. Therefore, this chain of reductions imply that the discrete version of our problem is Strongly NP-complete and therefore cannot have an FPTAS, following from~\cite{garey1977complexity}. Similar to the arguments in~\cite{garey1977complexity}, this also implies that our problem cannot have an FPTAS.

Before we describe our reductions, we take a look at the NP-hardness reduction of the \ESMT problem from the \textsc{Exact Cover by 3-Sets} (X3C) problem in~\cite{garey1977complexity}. In the X3C problem, we are given a universe of elements $U = \{1, 2, \ldots, 3n\}$ and a family $\mathbb{F}$ of $3$-element subsets $F_1, F_2, \ldots, F_t$ of the $3n$ elements. The objective is to decide if there exists a subcollection $\mathbb{F}' \subseteq \mathbb{F}$ such that: (i) the elements of $\mathbb{F}'$ are disjoint, and (ii) $\bigcup_{F' \in \mathbb{F}'} F' = U$. The X3C problem is NP-complete~\cite{garey1979computers}.

In~\cite{garey1977complexity}, various gadgets are constructed, i.e.~particular arrangements of a set of points. These are then arranged on the plane in a way corresponding to the given X3C instance.~\Cref{fig:redcn} shows the reduced ESMT instance obtained for $U = \{1,2,3,4,5,6\}$ and $\mathbb{F} = \{\{1, 2, 4\}, \{2, 3, 6\}, \{3, 5, 6\}\}$ (taken from~\cite{garey1977complexity}). The squares, hexagons (crossovers), shaded circles (terminators) and lines (rows) all represent specific arrangements of a subset of points. Let $X(\mathbb{F})$ denote the reduced instance. The number of points in $X(\mathbb{F})$ is bounded by a polynomial in $n$ and $t$. Let this polynomial be $\OO(t^\gamma)$, as we can assume $t \geq n$ since otherwise it trivially becomes a NO instance. Here $\gamma$ is some constant.

\begin{figure}[h]
\centering
\subfloat{\includegraphics[width=12cm]{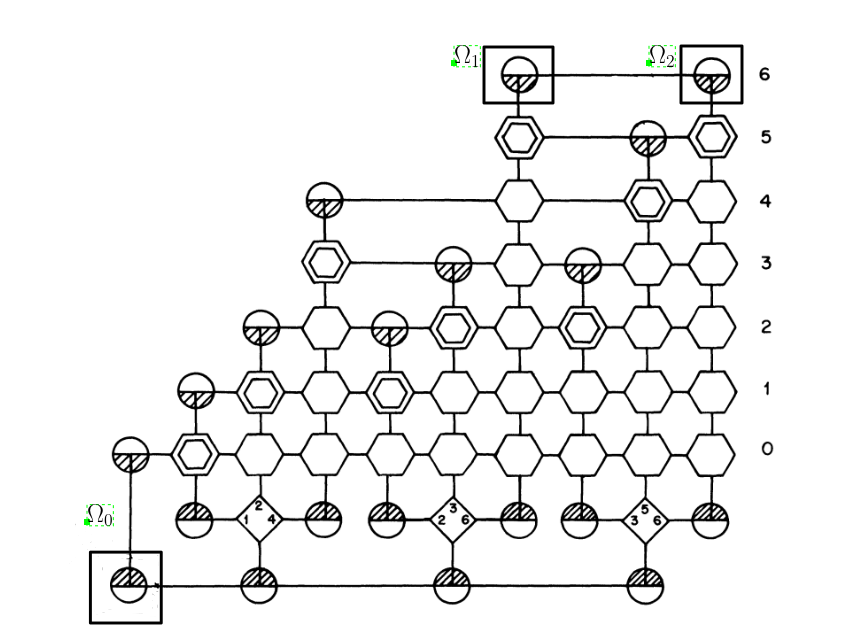}}
\caption{Reduced instance of ESMT from X3C (taken from~\cite{garey1977complexity})}
\label{fig:redcn}
\end{figure}

We restate Theorem 1 in~\cite{garey1977complexity}.
\begin{proposition}\label{thm:redcn}
    Let $\mathcal{S}^{*}$ denote an SMT of $X(\mathbb{F})$, the instance obtained by reducing the X3C instance $(n,\mathbb{F})$, and $|\mathcal{S}^{*}|$ denote its length. If $\mathbb{F}$ has an exact cover, then $|\mathcal{S}^{*}| \leq f(n,t,\hat{C})$, otherwise $|\mathcal{S}^{*}| \geq f(n,t,\hat{C}) + \frac{1}{200nt}$, where $t = |\mathbb{F}|$, $\hat{C}$ is the number of crossovers, i.e.~hexagonal gadgets, and $f$ is a positive real-valued function of $n,t,\hat{C}$.
\end{proposition}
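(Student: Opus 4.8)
The plan is to recover the reduction of Garey, Graham, and Johnson~\cite{garey1977complexity}, since Proposition~\ref{thm:redcn} is exactly Theorem~1 there; the whole content lies in the geometric gadgets and the length bookkeeping, so I only describe these. First I would fix three building blocks for $X(\mathbb{F})$: a \emph{row}, a long ladder-like cloud of points whose SMT is forced — up to an error made as small as desired by packing the points closely — to track the row, and whose optimal realisation occurs in one of two ``phases'' at its ends; a \emph{crossover} (the hexagonal gadget), which lets a horizontal and a vertical row occupy the same region while leaving each row independently free to sit in either phase; and a \emph{terminator} (the square / shaded-circle gadget), which caps a row and pins its phase. I would then place one horizontal row per element $i \in U$, one vertical row per set $F_j \in \mathbb{F}$, a crossover at the meeting of rows $i$ and $j$ exactly when $i \in F_j$, and terminators at the free ends. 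Counting points over all gadgets gives $|X(\mathbb{F})| = \OO(t^\gamma)$ for an absolute constant $\gamma$ (using $t \ge n$), and fixes $f(n,t,\hat{C})$ as the sum of the intended-phase SMT length of each gadget plus the backbone length joining them, which is explicit in the layout.

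Next I would prove the two directions. For completeness, an exact cover $\mathbb{F}' \subseteq \mathbb{F}$ induces a Steiner tree of length at most $f(n,t,\hat{C})$: put each vertical row of $\mathbb{F}'$ in the ``connected'' phase and all other rows in the ``disconnected'' phase; because $\mathbb{F}'$ covers each element exactly once, the phase choices forced along each horizontal row through its crossovers are mutually consistent, and the whole configuration is one connected tree of the claimed length. For soundness I would argue the contrapositive: if some Steiner tree has length below $f(n,t,\hat{C}) + \frac{1}{200nt}$, then inside every gadget it must be within $\frac{1}{200nt}$ of one of that gadget's two canonical phases (else that single gadget already overshoots $f$), and a consistency/parity argument on the crossovers — each horizontal row propagates a phase bit, each vertical row's phase records whether its set is ``selected'' — shows the selected sets form an exact cover.

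The main obstacle is the quantitative geometry hidden in ``forced up to a small error'': one must design the gadgets and tune the inter-point spacings so that (i) any Steiner tree not $\delta$-close to a canonical phase of a gadget is longer by a fixed positive amount, and (ii) the sum of these unavoidable slacks over all $\OO(t^\gamma)$ gadgets stays strictly below $\frac{1}{200nt}$. This is precisely where the $120^\circ$ structure of SMTs, the lune property (Proposition~\ref{lune}), and the wedge property (Proposition~\ref{wedge}) are invoked to lower-bound the cost of any deviating subtree; making all the constants line up so that a single clean function $f$ separates the ``yes'' case ($\le f$) from the ``no'' case ($\ge f + \frac{1}{200nt}$) is the delicate part, and it is carried out in full in~\cite{garey1977complexity}.
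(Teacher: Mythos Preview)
The paper does not give its own proof of this proposition: it is introduced with ``We restate Theorem~1 in~\cite{garey1977complexity}'' and is used as a black box. Your proposal correctly recognises this and sketches the Garey--Graham--Johnson argument, which is the only thing one \emph{can} do here; at that level your plan is fine and matches the cited source.

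A couple of minor inaccuracies in your sketch relative to the actual construction in~\cite{garey1977complexity}: the squares and the shaded circles are \emph{distinct} gadgets (the squares are the ``choice'' components attached to each $F_j$, the shaded circles are the terminators capping rows), not a single ``square/shaded-circle'' gadget; and the layout is not literally one horizontal row per element and one vertical row per set --- each set $F_j$ gets its own block of rows, with crossovers sitting where rows for different elements must pass over one another inside that block (this is why the number of crossovers $\hat C$ enters $f$ as an independent parameter). These do not affect the logic of your completeness/soundness outline, but if you were to write the proof out you would need the actual gadget inventory to get the bookkeeping for $f(n,t,\hat C)$ and the $\frac{1}{200nt}$ gap right.
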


We extend this construction to prove NP-hardness for instances of \ESMT where the terminal set $\mathcal{P}$ has $\Omega(n^\epsilon)$ points inside $\mathrm{CH}(\mathcal{P})$. Here, $\epsilon \in (0,1]$ and $n$ is the number of terminals.

Let us call the \emph{length} of a gadget to be the maximum horizontal distance between any two points in that gadget. Similarly, we define the \emph{breadth} of a gadget to be the maximum vertical distance between any two points in that gadget.




\begin{figure}[h] 
\centering
\subfloat[\centering The Upward Terminator symbol]{\includegraphics[width=1.65cm]{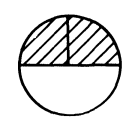}}   \qquad\qquad
\subfloat[\centering The Downward Terminator symbol]{\includegraphics[width=1.5cm]{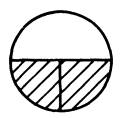}}
    \qquad\qquad
\subfloat[\centering The Terminator gadget] {\includegraphics[width=5cm]{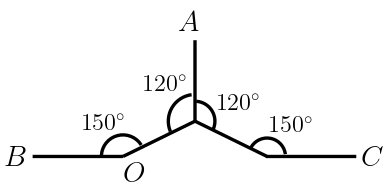} }
\caption{The Terminator gadget symbol and arrangement of points}  
\label{fig:terminator}
\end{figure}

The \emph{terminator} gadget used is shown in~\Cref{fig:terminator}. The straight lines represent a row of at least $1000$ points separated at distances of $1/10$ or $1/11$. The angles between them are as shown. The upward terminator has the point $A$ above the other points in the terminator, whereas the downward terminator has the point $A$ below the other points. Firstly, we adjust the number of points in the long rows, such that the length and breadth of the terminators is same as that of the hexagonal gadgets (crossovers). We can fix this length and breadth to be some constants, such that the number of points in each gadget is also bounded by some constant. In our construction, we modify the terminators $\Omega_0$, $\Omega_1$, and $\Omega_2$ as shown in~\Cref{fig:redcn} enclosed in squares. $\Omega_1$ is the terminator corresponding to the first occurrence of the element $3n\in U$ in some set in $\mathbb{F}$ and $\Omega_2$ is the terminator corresponding to the last occurrence of $3n$ in some set in $\mathbb{F}$ (if there are more than one occurrences of $3n$). If there are no occurrences of $3n$, then it is trivially a no-instance. The modified gadgets are shown in~\Cref{fig:terminator_new}. All the other gadgets remain unaltered.

\begin{figure}[h]
\centering
\includegraphics[width=8cm]{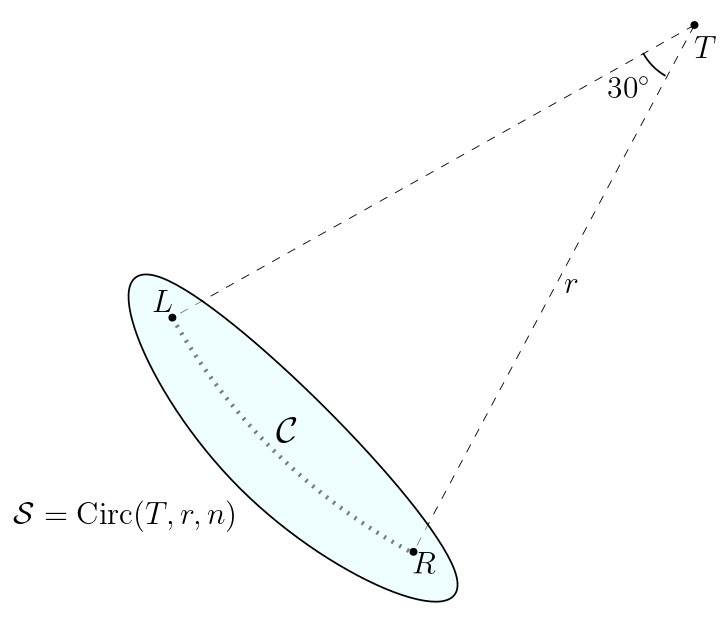}
\caption{Conic Set: $\mathrm{Cone}(T,r,n)$}
\label{fig:conic_set}
\end{figure}

We call a set of points arranged as shown in~\Cref{fig:conic_set}, as a Conic Set.
\begin{definition} \label{def:conic_set}
    A Conic Set is a set of points consisting of a point $T$, called the tip of the cone, and the remaining points denoted by $\mathcal{S}$. Let $\mathcal{C}$ be the circle with $T$ as centre and radius $r$. All the points in $\mathcal{S}$ lie on $\mathcal{C}$, such that the angle at the tip formed by the two extreme points $L,R \in \mathcal{S}$, i.e.~$\angle{LTR} = 30^\circ$ in the anticlockwise direction. So, we have $\overline{TL} = \overline{TR} = r$. The distance between any two consecutive points in $\mathcal{S}$ is the same, say $d$. Let the number of points in $\mathcal{S}$ be $n$. We denote the Conic Set as $\mathrm{Cone}(T,r,n)$ and $\mathcal{S}$ as $\mathrm{Circ}(T,r,n)$. We call $TL$ as the left slope of the Conic Set and $TR$ as the right slope of the Conic Set.
\end{definition}

\begin{figure}[h] 
\centering
\subfloat[\centering $\Omega'_0$]{\includegraphics[width=5cm]{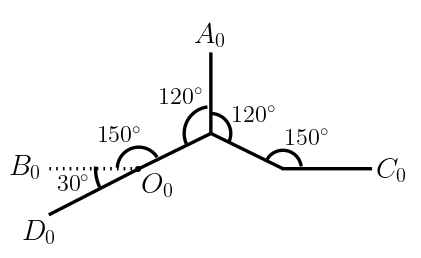}}   \qquad\qquad\qquad\qquad\qquad\qquad
\subfloat[\centering $\Omega'_1$]{\includegraphics[width=5cm]{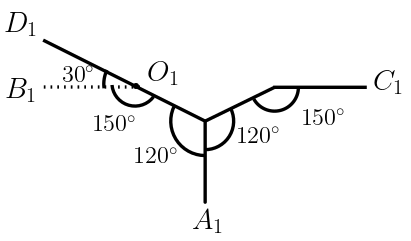}}
    \qquad\qquad
\subfloat[\centering $\Omega'_2$] {\includegraphics[width=5cm]{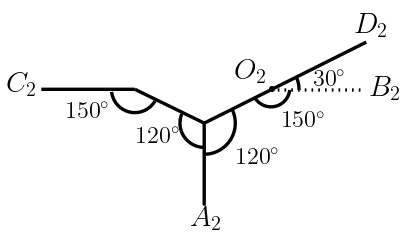} }
\caption{The modified terminator gadgets}  
\label{fig:terminator_new}
\end{figure}

We use the Conic Set in the reduction for our problem. Now, we state the reduction of an X3C instance $(n,\mathbb{F})$ to an instance $X'(\mathbb{F},\epsilon)$ of \ESMT. Later, we show that the instance will satisfy the desired properties on the number of terminals inside the convex hull of the terminal set.

\begin{description}
\item [Algorithm $\mathcal{A}$ for construction of an ESMT instance $X'(\mathbb{F},\epsilon)$ from an X3C instance $(n,\mathbb{F})$:]
\hspace{0.1cm}
\begin{itemize}
    \item Reduce the input X3C instance to the points configuration $X(\mathbb{F})$ according to the reduction given in~\cite{garey1977complexity}.
    \item Modify the terminators $\Omega_0$, $\Omega_1$, and $\Omega_2$ to as shown in~\Cref{fig:terminator_new} and call them $\Omega'_0$, $\Omega'_1$, and $\Omega'_2$. Let $DQCP$ be the smallest axis-parallel rectangle bounding $X(\mathbb{F})$ after modifying the terminators, where $D$ is the bottom leftmost point of $\Omega'_0$.
    \item Take $\alpha = \frac{1}{\epsilon}$. Define $r = ct^{\alpha} = \OO(t^{\alpha})$ and $n' = c't^{\gamma\alpha} = \OO(t^{\gamma\alpha})$, where $t=|\mathbb{F}|$ and $c$ and $c'$ are constants. Add the $\mathrm{Cone}(D,r,n')$, such that $D$ is the tip of the Conic Set, and the left slope $DE$ makes an angle of $120^\circ$ with $DP$. The right slope $DF$ also makes an angle of $120^\circ$ with $DQ$.
\end{itemize}
\end{description}

\begin{figure}[h]
\centering
\includegraphics[width=10cm]{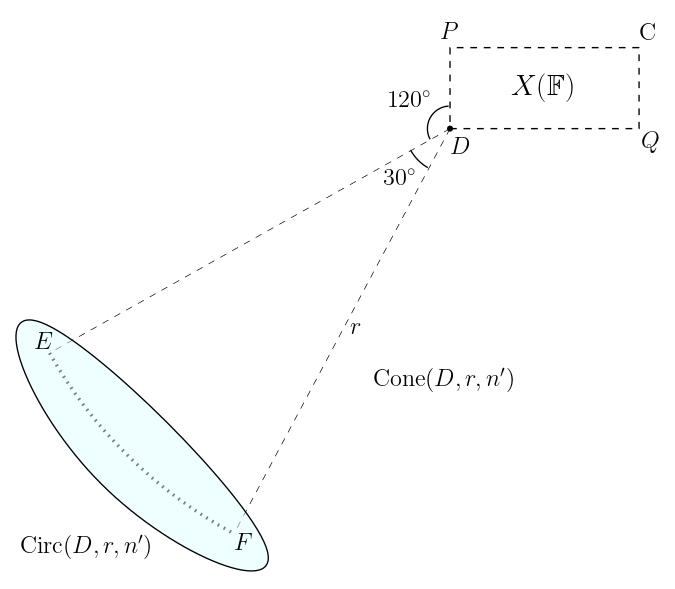}
\caption{The reduced instance $X'(\mathbb{F},\epsilon)$}
\label{fig:redn_instance}
\end{figure}


Now we prove a few properties of the constructed instance $X'(\mathbb{F},\epsilon)$.
\begin{lemma}\label{lem:convhull_pts}
    All the points in $\mathrm{Circ}(D,r,n')$ (according to~\Cref{def:conic_set}) lie on the convex hull of the reduced ESMT instance $X'(\mathbb{F},\epsilon)$ constructed by Algorithm $\mathcal{A}$, where $\epsilon \in (0,1]$.
\end{lemma}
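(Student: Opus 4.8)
The plan is to show that every point of $\mathrm{Circ}(D,r,n')$ is a vertex of the convex hull of $X'(\mathbb{F},\epsilon)$ by exhibiting, for each such point, a supporting line that touches $X'(\mathbb{F},\epsilon)$ only at that point (or more carefully, a closed half-plane containing all of $X'(\mathbb{F},\epsilon)$ whose boundary line meets the instance exactly there). The two natural objects to control are: (i) the circular arc $\mathcal C$ on which the points of $\mathrm{Circ}(D,r,n')$ lie, and (ii) the ``old'' part of the instance — the modified configuration $X(\mathbb{F})$ with the three altered terminators — which sits inside the bounding rectangle $DQCP$. First I would observe that the radius $r = ct^{\alpha}$ is chosen to be much larger than the length and breadth of $DQCP$, since the rectangle has size bounded by a polynomial in $t$ times the (constant) gadget dimensions, i.e.\ $\mathrm{diam}(DQCP) = \OO(t^{\gamma})$, whereas we may take $\alpha = 1/\epsilon$ large enough — or, more robustly, rescale $r$ — so that $r$ dominates. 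Actually the construction fixes $r = \OO(t^{\alpha})$ and $n' = \OO(t^{\gamma\alpha})$; the point is simply that $c$ can be taken large enough that $r > 2\,\mathrm{diam}(DQCP)$, so the whole rectangle $DQCP$ lies well inside the disk of radius $r$ about $D$.

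Next I would argue the two halves separately. \emph{The Conic Set side:} since all points of $\mathrm{Circ}(D,r,n')$ lie on the circle $\mathcal C$ of radius $r$ centred at $D$, each of them is an extreme point of the closed disk bounded by $\mathcal C$, hence each admits a supporting line (the tangent to $\mathcal C$ at that point) with the entire disk — in particular the tip $D$ and the rest of $\mathrm{Circ}(D,r,n')$ — on one side. Because the angular spread of $\mathrm{Circ}(D,r,n')$ is only $30^\circ$ and the slopes $DE$, $DF$ make $120^\circ$ angles with the sides $DP$, $DQ$ of the rectangle, the rectangle $DQCP$ lies in the $240^\circ$ wedge at $D$ \emph{opposite} to the cone; combined with $r$ exceeding the diameter of the rectangle, the rectangle is contained in the disk of radius $r$ about $D$ and on the far side of each tangent line. \emph{The old-instance side:} every point of $X(\mathbb{F})$ (modified) lies inside $DQCP$, hence inside the disk, hence strictly on the interior side of every tangent to $\mathcal C$. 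Therefore each point of $\mathrm{Circ}(D,r,n')$ has a supporting line of $X'(\mathbb{F},\epsilon)$ through it that contains no other point of the instance, so it is a convex-hull vertex.

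The main technical care is in the angle bookkeeping at the tip $D$: one must check that the $30^\circ$ arc of the cone, placed so that its bounding slopes $DE$ and $DF$ make $120^\circ$ with the two rectangle edges emanating from $D$, leaves the rectangle entirely within a half-plane bounded by the tangent line at each of the two extreme points $L$, $R$ of $\mathrm{Circ}(D,r,n')$, and that $D$ itself lies on the correct side — this uses that the tangent at a point of $\mathcal C$ separates that point from the centre $D$. Once the two extreme tangents are handled, the intermediate points of $\mathrm{Circ}(D,r,n')$ follow immediately by convexity of the disk. I expect this angular/containment verification to be the only real obstacle; everything else is a size comparison ensuring $r$ dwarfs the rectangle, which follows directly from the choice $r = \OO(t^{\alpha})$ with a sufficiently large constant and the polynomial bound $\OO(t^{\gamma})$ on the dimensions of $X(\mathbb{F})$.
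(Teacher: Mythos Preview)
Your tangent-as-supporting-line approach is exactly the paper's; the paper's proof is a two-line assertion that the tangent to $\mathcal C$ at each arc point leaves every other point of $X'(\mathbb{F},\epsilon)$ on one side, and you are supplying the geometry behind that assertion.

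One caution about your execution: you lean on a size comparison, wanting $r$ to exceed $\mathrm{diam}(DQCP)$, and you estimate the latter as $\OO(t^{\gamma})$. But $\gamma$ is the exponent for the \emph{number of points} in $X(\mathbb{F})$, not its spatial extent; and with $\epsilon=1$ (so $\alpha=1$) and $\gamma>1$, no constant $c$ gives $ct^{\alpha}\geq t^{\gamma}$ for large $t$. The actual diameter of the gadget grid is only $\OO(t)$, so the comparison can be repaired---but more to the point, it is unnecessary. The angle bookkeeping you already mention is decisive on its own: with $D$ at the origin and the rectangle in the first quadrant, the $30^{\circ}$ arc sits at angles in $[210^{\circ},240^{\circ}]$, so at every arc point the outward normal $(\cos\theta,\sin\theta)$ has both components negative, and every rectangle point $(a,b)$ with $a,b\geq 0$ satisfies $a\cos\theta+b\sin\theta\leq 0<r$ regardless of how small $r$ is. Run that angular argument directly and drop the disk-containment step; then your plan matches the paper and goes through cleanly.
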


\begin{proof}
    By the construction of $\mathrm{Cone}(D,r,n')$ in Algorithm $\mathcal{A}$, let $\mathcal{C}$ be the circle on which all the points in $\mathrm{Circ}(D,r,n')$ lie. If we draw a tangent to $\mathcal{C}$ at any of the points in $\mathrm{Circ}(D,r,n')$, then all the remaining points in the configuration $X'(\mathbb{F},\epsilon)$ lie towards one side of the tangent. We know that if we can find a line passing through a point such that all the other points in the plane lie on one side of the line, then the point lies on the convex hull of the points in the plane. Therefore, all the points in $\mathrm{Circ}(D,r,n')$ lie on the convex hull of the reduced instance $X'(\mathbb{F},\epsilon)$.
\end{proof}

Let us denote the convex hull of $X'(\mathbb{F},\epsilon)$ by $\mathrm{CH}(X'(\mathbb{F},\epsilon))$ and that of the points lying inside or on the bounding rectangle $\mathrm{PDQC}$, i.e.~$X'(\mathbb{F},\epsilon)\setminus \mathrm{Circ}(D,r,n')$ by $\mathrm{CH}(X'(\mathbb{F},\epsilon)\setminus \mathrm{Circ}(D,r,n'))$.

\begin{lemma}\label{lem:redcn_size}
    The reduced ESMT instance $X'(\mathbb{F},\epsilon)$ constructed by Algorithm $\mathcal{A}$ has $\Omega (N^\epsilon)$ points inside the convex hull, where $\epsilon \in (0,1]$ and $N$ is the total number of terminals in $X'(\mathbb{F},\epsilon)$. 
\end{lemma}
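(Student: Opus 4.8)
The plan is to show two facts and combine them: first, that the number of points of $X'(\mathbb{F},\epsilon)$ lying strictly inside $\mathrm{CH}(X'(\mathbb{F},\epsilon))$ is at least $n' = \OO(t^{\gamma\alpha})$; and second, that the total number of terminals $N$ in $X'(\mathbb{F},\epsilon)$ is $\OO(t^{\gamma\alpha})$ as well, so that the former quantity is $\Omega(N^\epsilon)$. For the first fact, I would observe that by~\Cref{lem:convhull_pts} all $n'$ points of $\mathrm{Circ}(D,r,n')$ lie \emph{on} $\mathrm{CH}(X'(\mathbb{F},\epsilon))$, so the convex hull is a large ``fan'' anchored at these arc points. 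The key geometric claim is that every point of the original configuration $X(\mathbb{F})$ (with the modified terminators) — i.e.~every point inside or on the bounding rectangle $\mathrm{PDQC}$ — lies strictly inside this convex hull. This should follow from the construction: $r = ct^\alpha$ is chosen to grow polynomially, hence can be taken far larger than the (constant-bounded) dimensions of the rectangle $\mathrm{PDQC}$, and the slopes $DE$, $DF$ of the cone make $120^\circ$ angles with $DP$, $DQ$, so the rectangle $\mathrm{PDQC}$ sits well inside the cone $\angle EDF$ and below the arc. Thus the $\Theta(t^\gamma)$ points of the modified $X(\mathbb{F})$ are all interior points of $\mathrm{CH}(X'(\mathbb{F},\epsilon))$, and there are $\Omega(t^\gamma)$ of them — in fact $\Omega(t^{\gamma\alpha})$ is \emph{not} what we get from these, so the relevant interior points to count are actually these $\Theta(t^\gamma)$ points, and we must check this is $\Omega(N^\epsilon)$.

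Let me reorganize: the total terminal count is $N = n' + |X(\mathbb{F})| = \OO(t^{\gamma\alpha}) + \OO(t^\gamma) = \OO(t^{\gamma\alpha})$ since $\alpha = 1/\epsilon \ge 1$. The number of interior points is at least $|X(\mathbb{F})| - |\partial\mathrm{CH}(X'(\mathbb{F},\epsilon)) \cap X(\mathbb{F})|$; since essentially all of $X(\mathbb{F})$ is interior, this is $\Theta(t^\gamma)$. We need $\Theta(t^\gamma) = \Omega(N^\epsilon) = \Omega(t^{\gamma\alpha\epsilon}) = \Omega(t^\gamma)$, using $\alpha\epsilon = 1$. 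This is exactly tight, which is why the constants $c, c'$ and exponents were chosen as they were. So the steps are: (1) invoke~\Cref{lem:convhull_pts} to place $\mathrm{Circ}(D,r,n')$ on the hull; (2) prove the geometric containment lemma that all of the modified $X(\mathbb{F})$ lies strictly inside $\mathrm{CH}(X'(\mathbb{F},\epsilon))$, using that $r$ is chosen large enough relative to the constant-size bounding rectangle and the $120^\circ$ angle conditions; (3) count: interior points $\ge |X(\mathbb{F})| - \OO(1) = \Omega(t^\gamma)$, total $N = \OO(t^{\gamma\alpha})$; (4) conclude $\Omega(t^\gamma) = \Omega(N^{1/\alpha}) = \Omega(N^\epsilon)$.

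The main obstacle I anticipate is step (2): carefully verifying that the whole rectangle $\mathrm{PDQC}$ — and hence all of the modified $X(\mathbb{F})$ — lies strictly below the circular arc $\mathrm{Circ}(D,r,n')$ and strictly within the cone. One must check that $D$ itself (the tip, a point of $X(\mathbb{F})$) behaves correctly — it is the apex of the cone and a corner of the rectangle, so it lies \emph{on} the hull, not strictly inside, and similarly one must be careful about whether points along $DP$ or $DQ$ near $D$ might be on the hull edge. Since the slopes make exactly $120^\circ$ with $DP$ and $DQ$, the edges $DP$, $DQ$ point strictly into the interior of the cone, so points of $X(\mathbb{F})$ on or near these segments are strictly interior except possibly $D$; one spare point on the hull does not affect the asymptotic count. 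The rest is routine: bounding the constant-size geometry of $\mathrm{PDQC}$ against a radius $r$ that can be taken as large as we like by choice of $c$, and a direct arithmetic check that $\gamma = \gamma\alpha\epsilon$.
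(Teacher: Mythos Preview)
Your counting framework in steps (3)–(4) matches the paper's: interior points $\Omega(t^\gamma)$, total $N = \OO(t^{\gamma\alpha})$, hence interior $= \Omega(N^{1/\alpha}) = \Omega(N^\epsilon)$. The problem is step (2), which rests on a mistaken picture of the construction.

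The rectangle $PDQC$ does \emph{not} sit inside the cone $\angle EDF$. From the angle specifications $\angle EDP = 120^\circ$, $\angle FDQ = 120^\circ$, $\angle PDQ = 90^\circ$, and $\angle EDF = 30^\circ$, these four angles sum to $360^\circ$ around $D$: the rectangle occupies the $90^\circ$ wedge $\angle PDQ$ and the cone occupies the diametrically opposite $30^\circ$ wedge $\angle EDF$, with two $120^\circ$ gaps between them. The two regions meet only at $D$ (this is exactly what the paper exploits later in \Cref{thm:steiner_hull_regions,thm:steiner_tree}). Consequently $\mathrm{CH}(X'(\mathbb{F},\epsilon))$ is not just the ``fan'' over the arc; it also wraps around part of the boundary of the $X(\mathbb{F})$ configuration, so many more than $\OO(1)$ points of $X(\mathbb{F})$ lie on the full hull. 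Your claim that the rectangle has constant-bounded dimensions is also incorrect: the gadgets are constant-size but there are $\Theta(nt)$ of them laid out in a grid, so $PDQC$ grows with $n,t$.

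What the paper does instead for this step is bound the number of $X(\mathbb{F})$ points that can lie on $\mathrm{CH}(X'(\mathbb{F},\epsilon))$ by the number of vertices of the \emph{inner} hull $\mathrm{CH}(X'(\mathbb{F},\epsilon)\setminus \mathrm{Circ}(D,r,n'))$, and then argues from the gadget structure (only terminators and the outermost crossovers can contribute, each a constant number of points) that this inner hull has $\OO(t)$ vertices. Since $|X(\mathbb{F})| = \Omega(t^\gamma)$ with $\gamma > 1$, this still leaves $\Omega(t^\gamma)$ interior points, and your steps (3)–(4) then go through. So the fix is to replace your geometric containment claim with this gadget-level analysis of $\mathrm{CH}(X(\mathbb{F}))$.
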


\begin{proof}
    $\mathrm{CH}(X'(\mathbb{F},\epsilon))$ contains all the points in $\mathrm{Circ}(D,r,n')$ by~\Cref{lem:convhull_pts}. $\mathrm{Circ}(D,r,n')$ contains $n' = \OO(t^{\gamma\alpha})$ points.

    Now we need to analyze the number of points on $\mathrm{CH}(X'(\mathbb{F},\epsilon)\setminus \mathrm{Circ}(D,r,n'))$. The remaining points in $X(\mathbb{F})$, i.e.~$X(\mathbb{F})\setminus \mathrm{CH}(X'(\mathbb{F},\epsilon)\setminus \mathrm{Circ}(D,r,n'))$ must lie within the convex hull of the entire construction, i.e.~$X'(\mathbb{F},\epsilon)$. From the construction in Algorithm $\mathcal{A}$, no point on the connecting rows can be a part of $\mathrm{CH}(X'(\mathbb{F},\epsilon)\setminus \mathrm{Circ}(D,r,n'))$ as there is no line passing through it, which contains all terminals on one side of it. The same thing holds for the square and hexagonal gadgets as well, except the hexagonal gadgets corresponding to the last element of the last set in the family, i.e.~$F_t$. Thus, only the terminators and the hexagonal gadgets corresponding to the last element of $F_t$ contribute points to $\mathrm{CH}(X'(\mathbb{F},\epsilon)\setminus \mathrm{Circ}(D,r,n'))$.

    If we look at the arrangement of points in the terminators (modified as well as those left unchanged) and the hexagonal gadgets as shown in~\Cref{fig:terminator,fig:hexagon_points}, the convex hull of each of these gadgets consists of constantly many points. Therefore, the number of points each of these gadgets contribute to $\mathrm{CH}(X'(\mathbb{F},\epsilon)\setminus \mathrm{Circ}(D,r,n'))$ is bounded by some constant. The number of terminators is $6t+2$ and the number of hexagonal gadgets corresponding to the last element of $F_t$ is at most $3n$. Therefore, the number of points on $\mathrm{CH}(X'(\mathbb{F},\epsilon)\setminus \mathrm{Circ}(D,r,n'))$ is $\OO(t+n) = \OO(t)$ as $n \leq t$.
    
    The instance $X(\mathbb{F})$ obtained via reduction from X3C has $6t+2$ terminators, $t$ squares, at most $9nt$ crossovers (hexagonal gadgets), and $\OO(nt)$ connecting rows of points. The number of gadgets is $\OO(nt)$. Therefore, the total number of points in $X(\mathbb{F})$ is $\Omega(nt) = \omega(t)$. The modified terminators result in a constantly many increase in the number of points. So, we have $\gamma > 1$.

    Thus, the number of points inside the convex hull is $\Omega(t^\gamma)$ and those on the convex hull is $\OO(t^{\gamma\alpha})$. So, the total number of terminals is $N = \OO(t^{\gamma\alpha}) + \OO(t^{\gamma}) = \OO(t^{\gamma\alpha})$, and those inside the convex hull is $\Omega(t^\gamma) = \Omega(N^{1/\alpha}) = \Omega(N^\epsilon)$ as $\alpha = \frac{1}{\epsilon}$.
\end{proof}

\begin{figure}[h]
\centering
\includegraphics[width=5cm]{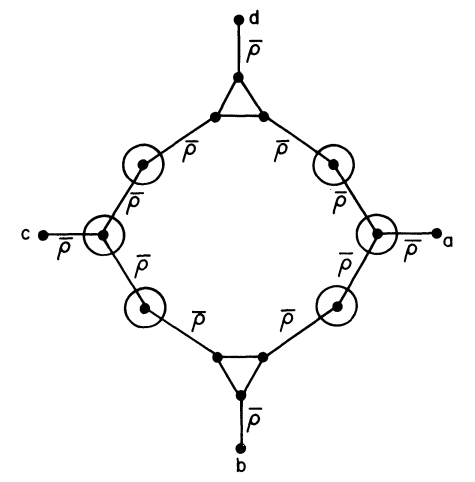}
\caption{The hexagonal gadget (crossover), the convex hull of the gadget is the quadrilateral $\rm abcd$ (taken from~\cite{garey1977complexity})}
\label{fig:hexagon_points}
\end{figure}


    

We further prove structural properties of SMTs of the reduced instance $X'(\mathbb{F},\epsilon)$ when considering the modified gadgets $\Omega'_0$, $\Omega'_1$, and $\Omega'_2$.
\begin{lemma}\label{lem:modified_smt}
    Consider an SMT $\mathcal{S}^*$ of the ESMT instance $X(\mathbb{F})$ obtained via reduction from the X3C instance $(n,\mathbb{F})$ as per~\cite{garey1977complexity}. Consider a tree $\mathcal{S'}^{*}$ on the terminal set of $X'(\mathbb{F},\epsilon)$ obtained from $\mathcal{S}^*$ as follows: Consider the modified terminator gadgets $\Omega'_i,~i \in \{0,1,2\}$ as in Algorithm $\mathcal{A}$. For each $i \in \{0,1,2\}$, the edge $B_iO_i$ is excluded from $\mathcal{S}^*$ and the edge $D_iO_i$ is included to form $\mathcal{S'}^{*}$. $\mathcal{S'}^{*}$ is an SMT for the terminal set of $X'(\mathbb{F},\epsilon)$.
\end{lemma}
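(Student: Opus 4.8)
The plan is to verify two things about $\mathcal{S'}^{*}$: that it is a feasible Steiner tree for the terminal set of $X'(\mathbb{F},\epsilon)$, and that it is minimum among all such trees. Feasibility is the easy direction. Applying the three swaps (replacing $B_iO_i$ by $D_iO_i$) inside the modified terminators $\Omega'_i$ keeps the $X(\mathbb{F})$-part connected and now also spans the newly introduced points $D_i$; moreover the swap at $\Omega'_0$ supplies the edge $D_0O_0$ which, together with an SMT of $\mathrm{Circ}(D,r,n')\cup\{D\}$, attaches the cone to the rest of the tree through its tip $D$ (identifying $D$ with the point $D_0$ of $\Omega'_0$). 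Hence every terminal of $X'(\mathbb{F},\epsilon)$ is spanned. I would also record, once and for all, that $|\mathcal{S'}^{*}| = |\mathcal{S}^{*}| + 3\delta + |\mathrm{SMT}(\mathrm{Circ}(D,r,n')\cup\{D\})|$, where $\delta$ is the fixed change in length caused by one swap inside a terminator, computed from the explicit coordinates in Figures~\ref{fig:terminator} and~\ref{fig:terminator_new}; this number is independent of $\mathbb{F}$.

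Optimality is the substantive part, and I would obtain it from a decomposition at the tip $D$. By Algorithm~$\mathcal{A}$, the two cone slopes out of $D$ make angles of $120^\circ$ with the sides $DP$ and $DQ$ of the bounding rectangle, so the directions at $D$ occupied by terminals are exactly the $90^\circ$ sector pointing into the rectangle together with the $30^\circ$ sector pointing into the cone, and these two sectors are separated on both sides by an open wedge of angle $120^\circ$ containing no terminal of $X'(\mathbb{F},\epsilon)$. By the Wedge property (Proposition~\ref{wedge}), no Steiner point of any SMT of $X'(\mathbb{F},\epsilon)$ lies in either of these two wedges; combining this with the fact that every SMT lies inside $\mathrm{CH}(X'(\mathbb{F},\epsilon))$ (Proposition~\ref{convex-steiner}) and with the Lune property (Proposition~\ref{lune}), I would argue — in the spirit of the exchange arguments in Lemmas~\ref{mincut_1} and~\ref{steiner_path_from_A_to_A} — that any SMT $\mathcal{T}$ of $X'(\mathbb{F},\epsilon)$ may be taken to meet the cone region and the rectangle region only at $D$: any edge bridging the two regions crosses one of the empty $120^\circ$ wedges and can be rerouted through $D$ without increasing the total length, exploiting that all cone points are at distance exactly $r$ from $D$ while $r$ dwarfs the diameter of the rectangle region. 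Consequently $\mathcal{T}$ splits into a Steiner tree of $\mathrm{Circ}(D,r,n')\cup\{D\}$ and a Steiner tree of $(\text{terminals of the modified }X(\mathbb{F}))\cup\{D\}$, so $|\mathcal{T}|$ is at least the sum of the lengths of the two corresponding SMTs.

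It then remains to identify these two SMTs with the two parts of $\mathcal{S'}^{*}$. For the cone, the points of $\mathrm{Circ}(D,r,n')$ lie on a circular arc with centre $D$, a rigid configuration whose SMT I expect to be exactly the cone part of $\mathcal{S'}^{*}$, to be shown either directly or by the same circle-based reasoning used for regular polygons in~\cite{du1987steiner}. For the modified $X(\mathbb{F})$-part, I would invoke the gadget analysis of~\cite{garey1977complexity}, which shows that an SMT of $X(\mathbb{F})$ uses the edge $B_iO_i$ inside each terminator $\Omega_i$ and coincides with $\mathcal{S}^{*}$ elsewhere; the gadgets $\Omega'_i$ are built precisely so that re-running this local argument (the Lune property together with the $120^\circ$ angle and degree conditions of Proposition~\ref{smt-prop}) forces the edge $D_iO_i$ in place of $B_iO_i$, with the remainder of $\mathcal{S}^{*}$ unaffected. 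Hence an SMT of the modified part is $\mathcal{S}^{*}$ with the three swaps applied, which is the non-cone part of $\mathcal{S'}^{*}$. Summing the two identifications gives $|\mathcal{T}| \ge |\mathcal{S'}^{*}|$, so $\mathcal{S'}^{*}$ is an SMT of $X'(\mathbb{F},\epsilon)$.

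The main obstacle I anticipate is making the decomposition at $D$ watertight — in particular ruling out an SMT of $X'(\mathbb{F},\epsilon)$ that connects the cone to the $X(\mathbb{F})$-part through several separate bridging edges rather than cleanly through $D$, which needs the Wedge property combined with a careful exchange argument rather than a bare distance comparison. The local re-analysis of the three modified terminators is the other place where work is needed, though it is essentially a rerun of~\cite{garey1977complexity}; pinning down the cone's own optimal topology is a careful but routine computation.
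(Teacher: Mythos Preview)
Your proposal proves a stronger statement than the paper does here, because you read the lemma's wording literally. Despite saying ``terminal set of $X'(\mathbb{F},\epsilon)$'', the paper's intent---made explicit in the sentence immediately following the proof---is that $\mathcal{S'}^{*}$ is an SMT only of the points inside the bounding rectangle $PDQC$, i.e.\ of $X'(\mathbb{F},\epsilon)\setminus\mathrm{Circ}(D,r,n')$. The cone is handled separately in Lemmas~\ref{thm:steiner_hull_regions}, \ref{thm:steiner_tree}, and~\ref{thm:steiner_tree_part}. So your entire decomposition-at-$D$ machinery (the Wedge argument, the bridging-edge exchange, the separate identification of the cone SMT) is not needed for this lemma; it is essentially a reinvention of those later lemmas, and the paper obtains the decomposition more directly via the Steiner-hull reduction of Proposition~\ref{thm:steiner_hull} rather than the Wedge property plus an exchange argument.

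For the part that actually is this lemma---the local analysis of the three modified terminators---the paper's proof is much shorter than yours. It simply invokes Lemma~5 of~\cite{garey1977complexity}: any two terminals at distance at most $1/10$ are joined by an edge in every SMT. This forces the row $B_iO_i$ to be a chain of edges in $\mathcal{S}^{*}$, and the same lemma applied to the modified instance forces the row $D_iO_i$ to be a chain of edges in any SMT of the modified point set; hence the swap produces an SMT. Your plan to ``re-run the local argument'' via the Lune property and the $120^\circ$ degree conditions is in the right spirit but less direct; the single citation to Lemma~5 of~\cite{garey1977complexity} does all the work, and you should use it rather than reconstructing the gadget analysis from Propositions~\ref{lune} and~\ref{smt-prop}.
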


\begin{proof}
    Consider $\mathcal{S}^*$. Due to Lemma 4 of~\cite{garey1977complexity}, Steiner points of $\mathcal{S}^*$ can only be connected to points in the triangular and square gadgets. Lemma 5 of~\cite{garey1977complexity} states that if there are two terminals $x,y \in X(\mathbb{F})$ and the distance between $x$ and $y$ does not exceed $\frac{1}{10}$, then $(x,y)$ is an edge of $\mathcal{S}^{*}$. So in $\mathcal{S}^{*}$, for each $i \in \{0,1,2\}$ all the points on $B_iO_i$ are joined together along $B_iO_i$. Lemma 5 of~\cite{garey1977complexity} also holds true on modifying the terminators to $\Omega'_i,~i \in \{0,1,2\}$. Now, we join all the points on $D_iO_i$ along $D_iO_i$. This gives us the SMT $\mathcal{S'}^*$ for the terminal set of $X'(\mathbb{F},\epsilon)$.
\end{proof}


Now we focus on the structure of the SMT of $X'(\C{F},\epsilon)$. The SMT is basically the union of the SMT $\C{S'}^{*}$ of the points in the bounding rectangle $PDQC$ as stated in~\Cref{lem:modified_smt} and the SMT of the set of points $\mathrm{Cone}(D,r,n')$.

$\mathrm{CH}(X'(\mathbb{F},\epsilon)\setminus \mathrm{Circ}(D,r,n'))$ is enclosed by the bounding rectangle $PDQC$ and $D$ must lie on $\mathrm{CH}(X'(\mathbb{F},\epsilon)\setminus \mathrm{Circ}(D,r,n'))$. We label the vertices of $\mathrm{CH}(X'(\mathbb{F},\epsilon)\setminus \mathrm{Circ}(D,r,n'))$ as $D,P_1,P_2,\ldots,P_k$ in the counter-clockwise order. Let $\mathrm{CH}(X'(\mathbb{F},\epsilon))$ be the convex hull of all the points. By~\Cref{lem:convhull_pts}, all the points in $\mathrm{Circ}(D,r,n')$ lie on $\mathrm{CH}(X'(\mathbb{F},\epsilon))$. Let $EP_i$ and $FP_j$ be edges in $\mathrm{CH}(X'(\mathbb{F},\epsilon))$, such that $P_i,P_j \notin \mathrm{Circ}(D,r,n')$.

The SMT of $X'(\mathbb{F},\epsilon)$ clearly lies inside its convex hull, $\mathrm{CH}(X'(\mathbb{F},\epsilon))$. We show that the Steiner hull can be further restricted to the bounding rectangle $PDQC$ and the convex polygon formed by the points in $\mathrm{Cone}(D,r,n')$. For this we use Theorem 1.5 in~\cite{hwang1992steiner}, as stated below.

\begin{proposition}~\cite{hwang1992steiner}\label{thm:steiner_hull}
    Let $H$ be a Steiner hull of $N$. By sequentially removing wedges $\rm abc$ from the remaining region, where $\rm a$, $\rm b$, $\rm c$ are terminals but $\triangle{\rm abc}$ contains no other terminal, $a$ and $c$ are on the boundary and $\angle{\rm abc} \geq 120^\circ$, a Steiner hull $H'$ invariant to the sequence of removal is obtained.
\end{proposition}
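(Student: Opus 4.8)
This is Theorem~1.5 of~\cite{hwang1992steiner}, so the plan is to reproduce the shape of that argument. Two things have to be established: (i) a single wedge removal turns a Steiner hull into a Steiner hull, and (ii) the region left once no admissible wedge remains does not depend on the order in which the wedges were removed.

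For (i), I would argue that an SMT of $N$ need never enter the interior of an admissible wedge $\triangle abc$. The removed triangle lies inside the open $120^\circ$-or-wider wedge at the terminal $b$ bounded by the rays through $a$ and through $c$; since $a$ and $c$ are boundary points of the current region and this wedge meets the region only in $\triangle abc$ (which by hypothesis contains no terminal other than $a,b,c$, and these sit on the apex or the bounding rays), the wedge is terminal-free, so by the Wedge property (\Cref{wedge}) it contains no Steiner point of any SMT. Combining this with planarity of the SMT (no two edges cross, \Cref{smt-prop}) and the fact that $b$ is itself a terminal, which forces the edges incident to $b$ to leave through the complementary cone, one concludes that some SMT avoids $\mathrm{int}(\triangle abc)$; hence $H\setminus\mathrm{int}(\triangle abc)$ is again a Steiner hull, and by iteration so is every region produced by the procedure.

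For (ii), termination is immediate: each removal strictly shrinks the region, and since a triangle once removed cannot be removed again and admissible triangles have all three vertices in $N$, the process halts after at most $\binom{|N|}{3}$ steps at a region admitting no admissible wedge. To get uniqueness of this terminal region I would invoke Newman's Lemma: because the rewriting terminates, it suffices to prove local confluence — whenever two distinct admissible wedges $W_1$ and $W_2$ are available in the same region $H$, the regions $H\setminus\mathrm{int}(W_1)$ and $H\setminus\mathrm{int}(W_2)$ have a common descendant. I would case on how $W_1$ and $W_2$ meet the current boundary: when the two triangles are essentially disjoint the two removals commute verbatim; the genuine work is when the triangles share an edge or a vertex, because then removing one of them can pull the relevant interior angle below $120^\circ$ and disable the other. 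The crux is that the two $\ge 120^\circ$ hypotheses still constrain the geometry tightly enough (each one forces the remaining two angles of its triangle to sum to at most $60^\circ$, and so on) that $W_1\cup W_2$ can be swept away by a bounded number of admissible removals starting from either $H\setminus\mathrm{int}(W_1)$ or $H\setminus\mathrm{int}(W_2)$, landing in the same region. Newman's Lemma then produces a unique terminal region $H'$, which by (i) is a Steiner hull.

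I expect the overlapping-wedge case of the local-confluence step to be the main obstacle: the two removals do not commute on the nose, and re-establishing a common descendant requires identifying the right ``catch-up'' wedges, whose admissibility hinges precisely on the $120^\circ$ angle bounds. By comparison, part (i) and the termination count are routine.
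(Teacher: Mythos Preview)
The paper does not prove this proposition at all: it is quoted as Theorem~1.5 of~\cite{hwang1992steiner} and invoked as a black box in the proof of \Cref{thm:steiner_hull_regions}. There is therefore no in-paper argument to compare your proposal against.

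On the substance of your sketch: part~(i) is headed the right way---the Wedge property at $b$ kills Steiner points inside $\triangle abc$---but your dismissal of edges crossing the triangle is too quick. Absence of vertices in $\mathrm{int}(\triangle abc)$ does not by itself prevent an SMT edge with both endpoints outside from cutting across; you still need to argue (for instance) that such an edge must cross both $ab$ and $bc$, and then produce a shorter tree via a swap through $b$, or else appeal to the Lune property on the crossed sides. For part~(ii), framing order-invariance as local confluence plus termination and invoking Newman's Lemma is a clean abstract strategy, and you have honestly flagged the overlapping-wedge case as the real work. Just be aware that the treatment in~\cite{hwang1992steiner} is a direct geometric argument rather than a rewriting-theoretic one, so if you carry your plan through you will be giving a genuinely different proof; that is fine, but the ``catch-up'' wedges in your hard case will need to be exhibited explicitly, not just asserted to exist.
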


\begin{lemma}\label{thm:steiner_hull_regions}
    The region comprising of the bounding rectangle $PDQC$ according to Algorithm $\mathcal{A}$ and the convex polygon formed by the set of points $\mathrm{Cone}(D,r,n')$ is a Steiner hull of $X'(\mathbb{F},\epsilon)$.
\end{lemma}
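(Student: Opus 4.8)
The plan is to realise $H := PDQC \cup \mathrm{conv}(\mathrm{Cone}(D,r,n'))$ as a Steiner hull by shrinking the convex hull. By \Cref{convex-steiner}, $\mathrm{CH}(X'(\mathbb{F},\epsilon))$ is a Steiner hull, and any region containing a Steiner hull is itself a Steiner hull; hence it suffices to carve $\mathrm{CH}(X'(\mathbb{F},\epsilon))$ down, via a sequence of wedge removals licensed by \Cref{thm:steiner_hull}, to a region contained in $H$. A convenient intermediate target is $H_0 := \mathrm{CH}(X'(\mathbb{F},\epsilon)\setminus \mathrm{Circ}(D,r,n')) \cup \mathrm{conv}(\mathrm{Cone}(D,r,n'))$, which lies inside $H$ because $PDQC$ is by construction the bounding rectangle of $X'(\mathbb{F},\epsilon)\setminus \mathrm{Circ}(D,r,n')$ and hence contains $\mathrm{CH}(X'(\mathbb{F},\epsilon)\setminus \mathrm{Circ}(D,r,n'))$.

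First I would pin down the shape of $\mathrm{CH}(X'(\mathbb{F},\epsilon))\setminus H_0$. Since $PDQC$ contains every point of $X'(\mathbb{F},\epsilon)\setminus \mathrm{Circ}(D,r,n')$ and $\mathrm{conv}(\mathrm{Cone}(D,r,n'))$ contains every point of $\mathrm{Circ}(D,r,n')$, the region $H$ --- and therefore $H_0$ --- contains all terminals, so $\mathrm{CH}(X'(\mathbb{F},\epsilon))\setminus H_0$ is terminal-free. By \Cref{lem:convhull_pts} the boundary of $\mathrm{CH}(X'(\mathbb{F},\epsilon))$ consists of the arc edges of $\mathrm{conv}(\mathrm{Cone}(D,r,n'))$, some boundary edges of $\mathrm{CH}(X'(\mathbb{F},\epsilon)\setminus \mathrm{Circ}(D,r,n'))$, and the two ``bridge'' edges $EP_i$ and $FP_j$; consequently $\mathrm{CH}(X'(\mathbb{F},\epsilon))\setminus H_0$ splits into (at most) two terminal-free simple polygonal ``ears'': the ear $\mathcal{R}_1$ bounded by the bridge $EP_i$, the cone slope $\overline{ED}$, and the arc of $\partial\,\mathrm{CH}(X'(\mathbb{F},\epsilon)\setminus \mathrm{Circ}(D,r,n'))$ running from $D$ to $P_i$, together with the mirror-image ear $\mathcal{R}_2$ on the $F$-side. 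Every vertex of $\mathcal{R}_1$ and $\mathcal{R}_2$ is a terminal --- namely $E$ or $F$, the point $D$, and some of the $P_m$.

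The core of the argument is to peel $\mathcal{R}_1$ and $\mathcal{R}_2$ off by wedge removals, and the decisive input is the geometry at $D$: by the construction in Algorithm~$\mathcal{A}$ the slopes $DE$ and $DF$ make angles of exactly $120^\circ$ with the sides $DP$ and $DQ$ of the bounding rectangle, while $\angle EDF = 30^\circ$; and since every hull vertex $P_m$ of $\mathrm{CH}(X'(\mathbb{F},\epsilon)\setminus \mathrm{Circ}(D,r,n'))$ lies in $PDQC$, it is seen from $D$ inside the $90^\circ$ wedge $\angle PDQ$, so a short angle chase gives $\angle EDP_m \ge 120^\circ$ and $\angle FDP_m \ge 120^\circ$ for all $m$. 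Using this, the convexity of $\mathrm{CH}(X'(\mathbb{F},\epsilon)\setminus \mathrm{Circ}(D,r,n'))$, and the absence of terminals in the ears, I would peel each ear vertex by vertex: at every stage the current ear is a simple polygon on terminals, and one exhibits a legal wedge $\mathrm{a\,b\,c}$ --- apex $b$ a terminal with $\angle abc \ge 120^\circ$, $\triangle abc$ free of other terminals, and $a,c$ on the current boundary --- whose removal strips one vertex off the ear. Iterating eliminates $\mathcal{R}_1$ and $\mathcal{R}_2$ entirely; by \Cref{thm:steiner_hull} the region reached at the end, namely $H_0$, is a Steiner hull, and since $H_0 \subseteq H$ this proves the lemma.

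The main obstacle is the last step: guaranteeing a legal wedge at every stage of the peeling. This requires a case analysis of where the bridge endpoints $P_i,P_j$ and the intervening hull vertices of $\mathrm{CH}(X'(\mathbb{F},\epsilon)\setminus \mathrm{Circ}(D,r,n'))$ lie relative to the two cone slopes. The awkward situations are (i) when the bridge $EP_i$ re-enters $PDQC$ before meeting any hull vertex, so that the first wedge must be anchored at $D$ with arms $\overline{DE}$ and $\overline{DP_i}$ --- exactly where the $120^\circ$ condition is needed --- and (ii) when some $\angle EDP_m$ equals $180^\circ$ and the corresponding wedge degenerates, in which case one instead anchors the wedge at an adjacent vertex. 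I expect all cases to close precisely because the terminators $\Omega'_0,\Omega'_1,\Omega'_2$ and the cone were designed so that the relevant angles around $D$ meet the $120^\circ$ threshold of \Cref{thm:steiner_hull}.
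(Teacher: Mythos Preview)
Your proposal is correct and follows the same route as the paper: start from the convex hull (\Cref{convex-steiner}) and peel off the two terminal-free ears between the cone and the inner hull by iterated wedge removals licensed by \Cref{thm:steiner_hull}, using that the cone slopes were placed at $120^\circ$ to the rectangle sides so that $\angle EDP_m \ge 120^\circ$ for every inner-hull vertex $P_m$.

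Where you and the paper differ is in the bookkeeping of the peeling. You anticipate a case analysis, with awkward situations such as anchoring the first wedge at $D$; the paper sidesteps all of this by choosing the apices at the inner-hull vertices rather than at $D$. Concretely, it removes the wedges $EP_{i+1}P_i,\ EP_{i+2}P_{i+1},\ \ldots,\ EDP_k$ in order (and symmetrically on the $F$-side). At each step the two non-apex vertices are already on the current boundary, and the required angle $\angle EP_{m+1}P_m \ge 120^\circ$ is obtained in one line from your own inequality via the exterior-angle trick: $\angle EP_{m+1}P_m$ exceeds the exterior angle of $\angle EP_{m+1}D$, which in turn exceeds $\angle EDP_{m+1} \ge \angle EDP = 120^\circ$. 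With this ordering no degenerate cases arise and no separate treatment of ``the bridge re-entering $PDQC$'' is needed; your case~(i) is simply the final step $EDP_k$ of the same sequence.
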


\begin{proof}
    Firstly, let us consider the wedge $EP_{i+1}P_i$. All the points are terminals, $E$ and $P_i$ are boundary points, and $\triangle{EP_{i+1}P_i}$ contains no other terminal. Now, $\angle{EP_{i+1}P_i}$ is greater than the exterior angle of $\angle{EP_{i+1}D}$, which in turn is greater than $\angle{EDP_{i+1}}$. $\angle{EDP_{i+1}} \geq \angle{\rm EDP} = 120^\circ$, by the construction. Therefore, $\angle{EP_{i+1}P_i} \geq 120^\circ$. By applying~\Cref{thm:steiner_hull}, we can remove the wedge $EP_{i+1}P_i$ from the convex hull $\mathrm{CH}(X'(\mathbb{F},\epsilon))$ to get a smaller Steiner hull. This can be repeated for the wedges $EP_{i+2}P_{i+1}, EP_{i+3}P_{i+2}, \ldots, EDP_k$. The same argument can also be used to get rid of the wedges $FP_{j-1}P_j, FP_{j-2}P_{j-1}, \ldots, FDP_1$. So, we get the final Steiner hull $H'$ to be the union of the bounding rectangle $PDQC$ and the convex polygon formed by the points in $\mathrm{Cone}(D,r,n')$.
\end{proof}

Given the nature of the above Steiner hull, we show that we can treat $X(\mathbb{F})$ and $\mathrm{Cone}(D,r,n')$ separately. 
\begin{lemma}\label{thm:steiner_tree}
    There is an SMT of $X'(\mathbb{F},\epsilon)$ that is the union of an SMT of $X(\mathbb{F})$ and an SMT of the points in $\mathrm{Cone}(D,r,n')$, with $D$ being common to both of them.
\end{lemma}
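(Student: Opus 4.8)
The plan is to exploit the Steiner hull already obtained in~\Cref{thm:steiner_hull_regions}, which confines every SMT of $X'(\mathbb{F},\epsilon)$ to $H' = R_1 \cup R_2$, where $R_1$ is the bounding rectangle $PDQC$ and $R_2$ is the convex polygon $\mathrm{CH}(\mathrm{Cone}(D,r,n'))$. The first step is the elementary geometric fact that $R_1 \cap R_2 = \{D\}$. Indeed, at $D$ the rectangle occupies a $90^\circ$ wedge (its corner), while the cone occupies a $30^\circ$ wedge since $\angle LDR = 30^\circ$ in~\Cref{def:conic_set}; by the construction in Algorithm $\mathcal{A}$ the slopes $DE$ and $DF$ make angles of $120^\circ$ with $DP$ and $DQ$, so the two wedges are separated on both sides by a $120^\circ$ gap. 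Hence there is a line through $D$ with $R_1 \setminus \{D\}$ strictly on one side and $R_2 \setminus \{D\}$ strictly on the other, and as both regions are convex they meet only at $D$. Also record that an SMT of $X(\mathbb{F})$ lies in $R_1$ and an SMT of $\mathrm{Cone}(D,r,n')$ lies in $R_2$ by~\Cref{convex-steiner}, and $D$ is a terminal of each.

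Next I would show that every SMT $\mathcal{T}$ of $X'(\mathbb{F},\epsilon)$ splits at $D$. Since $\mathcal{T} \subseteq H'$ and no edge of $\mathcal{T}$ can contain the terminal $D$ in its interior (by the non-crossing property in~\Cref{smt-prop}, such an edge would be crossed at $D$ by any edge incident to $D$), each edge of $\mathcal{T}$ lies entirely in $R_1$ or entirely in $R_2$: a segment in $R_1 \cup R_2$ with one endpoint outside $R_2$ and the other outside $R_1$ would have to meet $R_1 \cap R_2 = \{D\}$ at an interior point. Let $\mathcal{T}_1$ and $\mathcal{T}_2$ be the subgraphs formed by the edges lying in $R_1$ and in $R_2$ respectively, so $E(\mathcal{T}) = E(\mathcal{T}_1) \uplus E(\mathcal{T}_2)$ and $V(\mathcal{T}_1) \cap V(\mathcal{T}_2) \subseteq R_1 \cap R_2 = \{D\}$. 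I claim $\mathcal{T}_1$ is connected: two vertices in distinct components of $\mathcal{T}_1$ would be joined in the tree $\mathcal{T}$ by a simple path that switches between $\mathcal{T}_1$-edges and $\mathcal{T}_2$-edges at least twice, yet every such switch occurs at a vertex of $R_1 \cap R_2 = \{D\}$, and a path in a tree visits $D$ at most once — a contradiction. Symmetrically $\mathcal{T}_2$ is connected; both are therefore trees sharing exactly the vertex $D$. Every terminal of $X(\mathbb{F})$ lies in $R_1$ and, not being an isolated vertex, is incident to an $R_1$-edge, hence lies in $\mathcal{T}_1$; likewise every point of $\mathrm{Cone}(D,r,n')$ lies in $\mathcal{T}_2$. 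Thus $\mathcal{T}_1$ is a Steiner tree of $X(\mathbb{F})$ and $\mathcal{T}_2$ a Steiner tree of $\mathrm{Cone}(D,r,n')$, so $|\mathcal{T}| = |\mathcal{T}_1| + |\mathcal{T}_2| \ge L_1 + L_2$, where $L_1$ and $L_2$ denote the lengths of an SMT of $X(\mathbb{F})$ and of $\mathrm{Cone}(D,r,n')$ respectively.

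For the reverse inequality I would glue optimal pieces. Take an SMT $\mathcal{T}_1^{*}$ of $X(\mathbb{F})$ and an SMT $\mathcal{T}_2^{*}$ of $\mathrm{Cone}(D,r,n')$; by the first paragraph, $\mathcal{T}_1^{*} \subseteq R_1$, $\mathcal{T}_2^{*} \subseteq R_2$, and each contains $D$. Then $\mathcal{T}_1^{*} \cup \mathcal{T}_2^{*}$ is connected, spans all terminals of $X'(\mathbb{F},\epsilon)$, and is acyclic (a cycle would use edges of both $\mathcal{T}_i^{*}$ and hence pass through $D = R_1 \cap R_2$ twice), so it is a Steiner tree of $X'(\mathbb{F},\epsilon)$ of length $L_1 + L_2$. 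Combining with the previous inequality yields that the length of an SMT of $X'(\mathbb{F},\epsilon)$ equals $L_1 + L_2$, and therefore $\mathcal{T}_1^{*} \cup \mathcal{T}_2^{*}$ is itself an SMT of $X'(\mathbb{F},\epsilon)$ of exactly the claimed form, with $D$ common to both parts.

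I expect the only delicate points to be the two claims that no SMT edge passes through the interior of $D$ and that each edge therefore lies wholly inside one of $R_1, R_2$; everything else is bookkeeping. Both follow cleanly from the convexity of $R_1$ and $R_2$ together with the non-crossing property in~\Cref{smt-prop}, so no genuinely hard estimate is involved.
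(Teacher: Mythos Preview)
Your argument is correct and follows the same approach as the paper: use the Steiner hull from~\Cref{thm:steiner_hull_regions}, observe that the two regions meet only at $D$, and conclude that the SMT splits there into optimal pieces. The paper's own proof is a three-sentence sketch that asserts $D$ is an articulation point and stops; you have simply filled in the details it omits (why no edge can cross $D$, why each side is connected, and the gluing direction for equality).
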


\begin{proof}
    According to~\Cref{thm:steiner_hull_regions}, there is an SMT of $X'(\mathbb{F},\epsilon)$ that lies completely inside the the bounding quadrilateral $PDQC$ and the convex polygon formed by $\mathrm{Cone}(D,r,n')$. These two regions have $D$ as the only common point. Therefore, $D$ is an articulation point in the tree and connects these two regions. So, we have this SMT of $X'(\mathbb{F},\epsilon)$ as the union of an SMT of $X(\mathbb{F})$ and an SMT of the points in $\mathrm{Cone}(D,r,n')$.
\end{proof}

We can identify a structure for an SMT of the points in $\mathrm{Cone}(D,r,n')$ using~\cite{weng1995steiner}.
\begin{lemma}\label{thm:steiner_tree_part}
    There is an SMT of the points in $\mathrm{Cone}(D,r,n')$ that is as shown in~\Cref{fig:steiner_new}. In the SMT, $D$ is connected to the two middle points in $\mathrm{Circ}(D,r,n')$ via a Steiner point $S^t$. The other points in $\mathrm{Circ}(D,r,n')$ are connected along the circumference.
\end{lemma}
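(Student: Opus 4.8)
The plan is to reduce the statement to the structural theorem of~\cite{weng1995steiner} on a regular polygon together with its centre. By~\Cref{def:conic_set}, the $n'$ points of $\mathrm{Circ}(D,r,n')$ are equally spaced on the circle $\mathcal{C}$ of radius $r$ centred at $D$ and span a $30^\circ$ arc at $D$; hence any two consecutive such points subtend the angle $\frac{\pi/6}{n'-1}=\frac{2\pi}{N}$ at $D$, where $N:=12(n'-1)$ is an integer. Let $B_1,B_2,\dots,B_N$ be the vertices, in circular order, of the regular $N$-gon inscribed in $\mathcal{C}$, labelled so that $B_1,\dots,B_{n'}$ are exactly the points of $\mathrm{Circ}(D,r,n')$. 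Then $\mathrm{Cone}(D,r,n')=\{D\}\cup\{B_1,\dots,B_{n'}\}$, while $\{D\}\cup\{B_1,\dots,B_N\}$ is a regular $N$-gon together with its centre $D$. Write $\ell:=2r\sin(\pi/N)$ for the common length of an $N$-gon edge (equivalently, of a chord $B_iB_{i+1}$), and let $F$ denote the length of the Steiner tree on $\{D,B_i,B_{i+1}\}$ through the Torricelli point of $\triangle DB_iB_{i+1}$, which is the same for every $i$ because all these triangles are congruent.

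First I would invoke~\cite{weng1995steiner}: as $N$ is large, an SMT $\mathcal{T}^{\mathrm{full}}$ of $\{D\}\cup\{B_1,\dots,B_N\}$ consists of a Steiner point joined to $D$ and to some pair of adjacent vertices $B_j,B_{j+1}$ (this Steiner point being the Torricelli point of $\triangle DB_jB_{j+1}$), together with $N-2$ of the $N$-gon edges --- exactly the tree used in step~3 of the proof of~\Cref{final_proof}. Using the $N$-fold rotational symmetry, I may assume $j=m:=\lfloor n'/2\rfloor$, so that $B_m,B_{m+1}$ is a central pair of the arc $B_1,\dots,B_{n'}$; then $|\mathcal{T}^{\mathrm{full}}|=(N-2)\ell+F$. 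Let $\mathcal{T}^{*}$ be the tree on $\mathrm{Cone}(D,r,n')$ formed by the two chord paths $B_1B_2\cdots B_m$ and $B_{m+1}B_{m+2}\cdots B_{n'}$ together with a Steiner point $S^t$ joined to $D$, $B_m$ and $B_{m+1}$ (the Torricelli point of $\triangle DB_mB_{m+1}$); this is exactly the tree depicted in~\Cref{fig:steiner_new}, and $|\mathcal{T}^{*}|=(n'-2)\ell+F$.

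It then remains to prove that $\mathcal{T}^{*}$ is an SMT of $\mathrm{Cone}(D,r,n')$; only the lower bound has content. Take \emph{any} tree $\mathcal{U}$ spanning $\mathrm{Cone}(D,r,n')$ and attach to it the pendant chord path $B_{n'}B_{n'+1}\cdots B_N$ on the vertices $B_i$ with $n'<i\le N$ (those missing from the cone); since this path shares only the vertex $B_{n'}$ with $\mathcal{U}$, the result is a tree spanning $\{D\}\cup\{B_1,\dots,B_N\}$ of length $|\mathcal{U}|+(N-n')\ell$. Minimality of $\mathcal{T}^{\mathrm{full}}$ gives $|\mathcal{U}|+(N-n')\ell\ge|\mathcal{T}^{\mathrm{full}}|=(N-2)\ell+F$, that is, $|\mathcal{U}|\ge(n'-2)\ell+F=|\mathcal{T}^{*}|$. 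As $\mathcal{T}^{*}$ is itself a spanning tree of $\mathrm{Cone}(D,r,n')$, it is an SMT, and it has the claimed shape, which proves the lemma. The one place I expect to need care is verifying that the hypotheses of the theorem of~\cite{weng1995steiner} are genuinely met for $N=12(n'-1)$ (so that $\mathcal{T}^{\mathrm{full}}$ really has the single-Steiner-point form with $D$ at a Torricelli point) and that the attached path contributes exactly $(N-n')\ell$, so that the two length identities cancel exactly; the rest is immediate from~\cite{weng1995steiner} and elementary triangle geometry.
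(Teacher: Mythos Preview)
Your proposal is correct and follows essentially the same approach as the paper: complete $\mathrm{Circ}(D,r,n')$ to a full regular $N$-gon inscribed in $\mathcal{C}$, invoke~\cite{weng1995steiner} for the SMT of the $N$-gon together with its centre $D$, and deduce optimality of the tree in~\Cref{fig:steiner_new} by a replacement argument. The paper phrases the last step as a contradiction (if $\mathcal{T}_2$ were shorter than $\mathcal{T}_1$, substitute it into the full SMT), whereas you phrase it as a direct lower bound via the length identity $|\mathcal{U}|+(N-n')\ell\ge|\mathcal{T}^{\mathrm{full}}|$; these are the same argument, with your version making the arithmetic explicit.
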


\begin{proof}
    The number of points in $\mathrm{Circ}(D,r,n')$ is $\OO(t^{\gamma\alpha})$. We can take the constant factor to be large enough so that $|\mathrm{Circ}(D,r,n')| >= 12$. If we complete the regular polygon on $\mathcal{C}$ having $\mathrm{Circ}(D,r,n')$ as a subset of its vertices, then it contains more than $12$ vertices and along with the centre $D$ has a SMT with structure given in~\cite{weng1995steiner}.

    Let the Steiner tree for $\mathrm{Cone}(D,r,n')$ as shown in~\Cref{fig:steiner_new} be denoted by $\mathcal{T}_1$. If this is not minimal, then there exists another Steiner tree $\mathcal{T}_2$ such that $|\mathcal{T}_2| < |\mathcal{T}_1|$. Then we can replace $\mathcal{T}_1$ by $\mathcal{T}_2$ in the SMT of the regular polygon and its centre to get a shorter Steiner tree. This contradicts the minimality of the structure given in~\cite{weng1995steiner}. Therefore, the SMT of $\mathrm{Cone}(D,r,n')$ follows the structure in~\Cref{fig:steiner_new}.
\end{proof}

\begin{figure}[h]
\centering
\includegraphics[width=8cm]{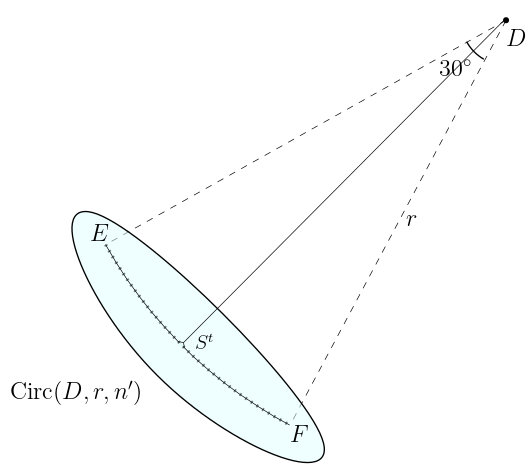}
\caption{SMT of $\mathrm{Cone}(D,r,n')$}
\label{fig:steiner_new}
\end{figure}
Finally, we prove the NP-hardness of \ESMT on $f(n)$-Almost Convex Sets of $n$ terminals, when $f(n) = \Omega(n^\epsilon)$ for some $\epsilon \in (0,1]$.
\begin{theorem}\label{thm:redn_final}
    Let $\C{S}^{*}_{\mathbb{F},\epsilon}$ denote an SMT of $X'(\mathbb{F},\epsilon)$ and $|\C{S}^*_{\mathbb{F},\epsilon}|$ denote its length. If $\mathbb{F}$ has an exact cover, then $|\C{S}^{*}_{\mathbb{F},\epsilon}| \leq f(n,t,\hat{C}) + |\mathcal{T}_1|$, otherwise $|\C{S}^{*}_{\mathbb{F},\epsilon}| \geq f(n,t,\hat{C}) + \frac{1}{200nt} + |\mathcal{T}_1|$, where $\hat{C}$ is the number of crossovers, i.e.~hexagonal gadgets, and $f$ is a positive real-valued function of $n,t,\hat{C}$ as stated in~\Cref{thm:redcn}.
\end{theorem}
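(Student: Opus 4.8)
The plan is to leverage the structural results already proved and reduce the statement to the gap bound of Garey et al.~\cite{garey1977complexity} (\Cref{thm:redcn}). By \Cref{thm:steiner_tree} there is an SMT of $X'(\mathbb{F},\epsilon)$ that is the union, glued at the single common vertex $D$, of an SMT of the part of $X'(\mathbb{F},\epsilon)$ lying inside the bounding rectangle $PDQC$ (this is exactly the instance $X(\mathbb{F})$ with the three terminators modified to $\Omega'_0,\Omega'_1,\Omega'_2$) and an SMT of $\mathrm{Cone}(D,r,n')$. Since all SMTs of $X'(\mathbb{F},\epsilon)$ have the same length, we may take $\C{S}^{*}_{\mathbb{F},\epsilon}$ to be this tree, and hence $|\C{S}^{*}_{\mathbb{F},\epsilon}|$ is the sum of the length of an SMT of the modified-$X(\mathbb{F})$ part and the length of an SMT of $\mathrm{Cone}(D,r,n')$.

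First I would dispose of the cone summand: by \Cref{thm:steiner_tree_part} an SMT of $\mathrm{Cone}(D,r,n')$ has the fixed structure $\mathcal{T}_1$, so that summand equals $|\mathcal{T}_1|$, which depends only on the fixed parameters $r,n'$ and in particular is the same whether or not $\mathbb{F}$ has an exact cover. For the rectangle summand I would invoke \Cref{lem:modified_smt}: an SMT of the modified instance is obtained from an SMT $\C{S}^{*}$ of the original $X(\mathbb{F})$ by deleting, for each $i\in\{0,1,2\}$, the edge $B_iO_i$ together with its chain of row points and inserting the analogous chain along $D_iO_i$. Because $\Omega'_i$ is built by relaying the row of $\Omega_i$ to the same length in a different direction, this transformation preserves total length (and in any case changes it by a constant that is oblivious to the existence of an exact cover and could be folded into $f$); moreover the number $\hat C$ of crossovers is untouched. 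Hence the rectangle summand equals $|\C{S}^{*}|$, and $|\C{S}^{*}_{\mathbb{F},\epsilon}| = |\C{S}^{*}| + |\mathcal{T}_1|$.

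It then remains to apply \Cref{thm:redcn} to $X(\mathbb{F})$: if $\mathbb{F}$ has an exact cover then $|\C{S}^{*}| \le f(n,t,\hat C)$, so $|\C{S}^{*}_{\mathbb{F},\epsilon}| \le f(n,t,\hat C) + |\mathcal{T}_1|$; otherwise $|\C{S}^{*}| \ge f(n,t,\hat C) + \tfrac{1}{200nt}$, so $|\C{S}^{*}_{\mathbb{F},\epsilon}| \ge f(n,t,\hat C) + \tfrac{1}{200nt} + |\mathcal{T}_1|$, which is exactly the claim.

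The step I expect to be the main obstacle is the very first one: certifying that some optimal tree decomposes cleanly at $D$ with the cone part contributing exactly $|\mathcal{T}_1|$. This needs the Steiner-hull restriction of \Cref{thm:steiner_hull_regions} (so the SMT is confined to the rectangle together with the polygon on $\mathrm{Cone}(D,r,n')$, two regions meeting only at $D$), the observation that $D$ is therefore a cut vertex, and the minimality arguments behind \Cref{thm:steiner_tree} and \Cref{thm:steiner_tree_part} ensuring that re-gluing independently optimal trees at the shared terminal $D$ cannot be improved. The secondary subtlety is checking that the terminator modification of \Cref{lem:modified_smt} is genuinely length-neutral; everything after that is bookkeeping.
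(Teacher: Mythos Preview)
Your proposal is correct and follows essentially the same route as the paper: decompose at $D$ via \Cref{thm:steiner_tree}, identify the cone summand as $|\mathcal{T}_1|$ via \Cref{thm:steiner_tree_part}, and then apply \Cref{thm:redcn} to the rectangle part. You are in fact more careful than the paper, which simply writes $|\C{S}^{*}_{\mathbb{F},\epsilon}| = |\C{S}^{*}| + |\mathcal{T}_1|$ without explicitly citing \Cref{lem:modified_smt} to justify that the modified-terminator instance has the same SMT length as $X(\mathbb{F})$; your discussion of this length-neutrality step (and of the cut-vertex argument at $D$) fills in exactly the points the paper leaves implicit.
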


\begin{proof}
    From~\Cref{thm:steiner_tree}, we have $|\C{S}^{*}_{\mathbb{F},\epsilon}| = |\C{S}^{*}| + |\mathcal{T}_1|$. From~\Cref{thm:steiner_tree_part}, we can compute the length of $\mathcal{T}_1$ as a function of $t$, $\alpha$, and $\gamma$. Finally, using~\Cref{thm:redcn} we get the required reduction.
\end{proof}




Since it is not known if the ESMT problem is in NP, Garey et al.~\cite{garey1977complexity} show the NP-completeness of a related problem called the {\sc Discrete Euclidean Steiner Minimal Tree} (DESMT) problem, which is in NP. We define the DESMT problem as given in~\cite{garey1977complexity}. The DESMT problem takes as input a set $\C{X}$ of integer-coordinate points in the plane and a positive integer $L$, and asks if there exists a set $\C{Y} \supseteq \C{X}$ of integer-coordinate points such that some spanning tree $\C{T}$ for $\C{Y}$ satisfies $|\C{T}|_d \leq L$, where $|\C{T}|_d = \Sigma_{e \in E(\mathcal T)} \lceil\overline{e}\rceil$, i.e.~we round up the length of each edge to the least integer not less than it.

In order to show that DESMT is NP-hard, the same reduction as that of the ESMT problem can be used, followed by scaling and rounding the coordinates of the points. Theorem 4 of~\cite{garey1977complexity} proves that the DESMT problem is NP-Complete. Moreover, since it is Strongly NP-Complete, the DESMT problem does not admit any FPTAS. Finally in Theorem 5 of~\cite{garey1977complexity}, Garey et al. show that as a consequence, the ESMT problem does not have any FPTAS as well.

Now we show that the DESMT problem is NP-hard even on $f(n)$-Almost Convex Sets of $n$ terminals, when $f(n) = \Omega (n^\epsilon)$ and where $\epsilon \in (0,1]$.

In Section 7 of~\cite{garey1977complexity}, the reduced instance $X(\mathbb{F})$ of ESMT is converted into an instance $X_{d}(\mathbb{F})$ of DESMT. The conversion is as follows:\\
$X_{d}(\mathbb{F}) = \{(\lceil 12M\cdot 200nt\cdot x_1\rceil, \lceil 12M\cdot 200nt\cdot x_2\rceil): x=(x_1,x_2)\in X(\mathbb{F})\}$, where $M = |X(\mathbb{F})|$.

We apply a similar conversion to the reduced ESMT instance $X'(\mathbb{F},\epsilon)$, to convert it into a DESMT instance of an $\Omega(n^\epsilon)$-Almost Convex Set. The conversion goes as follows:\\
$X'_{d}(\mathbb{F},\epsilon) = \{(\lceil 12N\cdot 200nt\cdot x_1\rceil, \lceil 12N\cdot 200nt\cdot x_2\rceil): x=(x_1,x_2)\in X'(\mathbb{F},\epsilon)\}$, where $N = |X'(\mathbb{F},\epsilon)|$.

The next two lemmas establish the validity of $X'_{d}(\mathbb{F},\epsilon)$ as an instance of DESMT and the upper bounds on the size of the constructed instance. Note that the reduction from X3C followed by the conversion can be done in polynomial time.
\begin{lemma}\label{lem:valid_desmt}
    The instance $X'_{d}(\mathbb{F},\epsilon)$ constructed above is a valid DESMT instance.
\end{lemma}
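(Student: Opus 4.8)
The plan is to verify the two properties that make $X'_d(\mathbb{F},\epsilon)$ a well-formed input to DESMT on an $\Omega(N^\epsilon)$-Almost Convex Set, where $N=|X'(\mathbb{F},\epsilon)|$: first, that every point has integer coordinates, and second, that the scaling-and-rounding step neither merges two distinct terminals nor destroys the almost-convex structure guaranteed by~\Cref{lem:redcn_size}. Integrality is immediate, since each coordinate of a point of $X'_d(\mathbb{F},\epsilon)$ has the form $\lceil 12N\cdot 200nt\cdot x_i\rceil$. The integer bound $L$ of the DESMT instance will be fixed afterwards exactly as in~\cite{garey1977complexity}: one scales the threshold $f(n,t,\hat C)+|\mathcal T_1|$ of~\Cref{thm:redn_final} by $s:=12N\cdot 200nt$ and rounds, the factor $200nt$ being present precisely so that the additive gap $\tfrac{1}{200nt}$ of~\Cref{thm:redn_final} becomes $12N$ after scaling, which dominates the $O(N)$ total error coming from lattice-rounding the points and from the ceilings in $|\cdot|_d$ over any relevant Steiner tree.

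For distinctness I would lower-bound the minimum pairwise distance $d_{\min}$ in $X'(\mathbb{F},\epsilon)$. Among the gadgets of~\cite{garey1977complexity} inside the bounding rectangle $PDQC$ (rows, squares, hexagons, and the modified terminators $\Omega'_0,\Omega'_1,\Omega'_2$) all inter-point distances are bounded below by an absolute constant such as $\tfrac1{11}$, and the distance from any such point to a point of $\mathrm{Circ}(D,r,n')$ is at least the constant governing the placement of the cone. Within $\mathrm{Circ}(D,r,n')$ consecutive points are a chord $d=2r\sin\big(\tfrac{\pi}{12(n'-1)}\big)=\Theta(r/n')$ apart; although $d=\Theta(t^{\alpha-\gamma\alpha})$ tends to $0$ with $t$ (as $\gamma>1$), multiplying by $s=\Theta(N\cdot nt)$ with $N=\Theta(t^{\gamma\alpha})$ yields $s\cdot d=\Theta(n\,t^{\alpha+1})$, and choosing the constant $c$ in $r=ct^\alpha$ large enough forces $s\cdot d_{\min}>\sqrt2$ for every instance. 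Since rounding a coordinate upward moves a point by strictly less than $1$, two scaled points more than $\sqrt2$ apart cannot land on a common lattice point, so $X'_d(\mathbb{F},\epsilon)$ consists of $N$ distinct integer points.

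It then remains to see that $X'_d(\mathbb{F},\epsilon)$ is still $\Omega(N^\epsilon)$-Almost Convex. After scaling, the image of $X(\mathbb{F})$ sits in a rectangle of diameter $\Theta(s)$, whereas $\mathrm{Circ}(D,r,n')$ reaches distance $\Theta(s\cdot t^\alpha)\gg s$ from $D$, so the convex hull of $X'_d(\mathbb{F},\epsilon)$ is determined by $D$ together with rounded cone points, and every point of $X(\mathbb{F})\setminus\{D\}$ lies strictly inside it — a conclusion stable under a perturbation of size $<\sqrt2$. By~\Cref{lem:redcn_size} there are $\Omega(N^\epsilon)$ such points, hence at least $\Omega(N^\epsilon)$ terminals of $X'_d(\mathbb{F},\epsilon)$ lie off its convex hull, as required.

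The step I expect to be the main obstacle is controlling the interaction between the shrinking cone spacing $d=\Theta(t^{(1-\gamma)\alpha})$ and the rounding: one must verify that the scaling factor $s$ — whose magnitude is itself dictated by $N$, dominated by the $n'$ cone points — still exceeds $\sqrt2/d_{\min}$ by the required margin, so that no two cone points coincide and the near-conic geometry (hence the applicability of~\Cref{thm:steiner_tree_part} in the subsequent correctness argument) survives. The remaining bookkeeping is a direct transcription of the scaling-and-rounding analysis of~\cite{garey1977complexity}.
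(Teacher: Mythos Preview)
Your argument is correct, but it proves far more than the lemma actually asserts. In the paper, ``valid DESMT instance'' means only that the point set consists of integer-coordinate points, which is the sole requirement in the definition of DESMT quoted just before the lemma. The paper's proof is accordingly a single sentence: each coordinate of $X'_d(\mathbb{F},\epsilon)$ is a ceiling, hence an integer. Your first paragraph already contains this observation; everything after it is surplus to the statement.

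The additional properties you establish --- that no two terminals coalesce under rounding, and that the $\Omega(N^\epsilon)$-Almost Convex structure survives --- are not part of \Cref{lem:valid_desmt} at all; they are precisely the content of the two subsequent lemmas in the paper, \Cref{lem:numpts_same} and \Cref{lem:desmt_redn_size}. So you have effectively merged three lemmas into one. One small inaccuracy in that extra material: your claim that ``every point of $X(\mathbb{F})\setminus\{D\}$ lies strictly inside'' the convex hull of $X'_d(\mathbb{F},\epsilon)$ is not quite right --- points from the outermost terminators and hexagonal gadgets of $X(\mathbb{F})$ can sit on the full hull (the cone subtends only $30^\circ$ at $D$, so the hull wraps around part of the bounding rectangle $PDQC$). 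The paper handles this in \Cref{lem:desmt_redn_size} by counting such boundary-gadget points as $\OO(t)$, which still leaves $\Omega(t^\gamma)=\Omega(N^\epsilon)$ interior terminals. This does not affect the present lemma, whose entire content is the integrality of the coordinates.
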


\begin{proof}
    All the points in $X'_{d}(\mathbb{F},\epsilon)$ have integer coordinates according to the conversion stated above. So, it is a DESMT instance.
\end{proof}

\begin{lemma}\label{lem:numpts_same}
    The reduced DESMT instance $X'_{d}(\mathbb{F},\epsilon)$ has $N$ distinct points, where $N = |X'(\mathbb{F},\epsilon)|$.
\end{lemma}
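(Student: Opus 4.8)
The plan is to verify that the integer instance $X'_d(\mathbb{F},\epsilon)$ really has the same cardinality as $X'(\mathbb{F},\epsilon)$, i.e.\ that the scaling-and-rounding map
\[
\phi\colon x=(x_1,x_2)\ \longmapsto\ \bigl(\lceil \sigma x_1\rceil,\ \lceil \sigma x_2\rceil\bigr),\qquad \sigma := 12N\cdot 200nt,
\]
is injective on $X'(\mathbb{F},\epsilon)$; since $X'_d(\mathbb{F},\epsilon)$ is by definition the image $\phi\bigl(X'(\mathbb{F},\epsilon)\bigr)$, injectivity immediately gives $|X'_d(\mathbb{F},\epsilon)|=|X'(\mathbb{F},\epsilon)|=N$. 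I would start from the elementary rounding fact that if $\phi(x)=\phi(y)$ then $|\sigma x_i-\sigma y_i|<1$ for each coordinate $i$ (because $\lceil a\rceil-a\in[0,1)$ for all $a$), so it suffices to show that for any two \emph{distinct} terminals $x,y$ of $X'(\mathbb{F},\epsilon)$ one has $\sigma\cdot{\sf dist}(x,y)\ge \sqrt2$, which implies $\sigma\|x-y\|_\infty\ge 1$ and hence $\phi(x)\neq\phi(y)$.

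I would then split the pairs $\{x,y\}$ into three types. (i) Both $x,y$ lie in the (modified) gadget configuration $X(\mathbb{F})$ bounded by the rectangle $PDQC$: here the minimum inter-terminal distance is a positive absolute constant $\delta_0$ coming from the construction of~\cite{garey1977complexity} (rows spaced at $\tfrac1{10}$ or $\tfrac1{11}$, and the modifications to $\Omega_0,\Omega_1,\Omega_2$ only add constantly many points at the same spacings). This is exactly the case analysed in Section~7 of~\cite{garey1977complexity}; since $N\ge|X(\mathbb{F})|$, our scaling $\sigma$ is at least the one used there, and a larger scaling only spreads a fixed finite point set further apart, so their estimate carries over. (ii) $x\in\mathrm{Circ}(D,r,n')$ and $y$ is a terminal of $X(\mathbb{F})$ (so $y$ lies in the rectangle $PDQC$): by the construction in Algorithm $\mathcal A$ the two slopes of the cone leave $D$ making $120^\circ$ angles with the sides $DP$ and $DQ$, so every ray from $D$ into the cone makes an obtuse angle with each of $DP,DQ$; hence the point of the rectangle closest to any point of $\mathrm{Circ}(D,r,n')$ is $D$ itself, giving ${\sf dist}(x,y)\ge r=ct^\alpha$, which is at least a constant and makes $\sigma\cdot{\sf dist}(x,y)$ enormous. (iii) Both $x,y\in\mathrm{Circ}(D,r,n')$: then ${\sf dist}(x,y)$ is at least the common spacing $d$ of the $n'$ points on the $30^\circ$ arc.

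The one delicate point is case (iii), because this spacing actually shrinks with $t$: the $n'=c't^{\gamma\alpha}$ points lie on an arc of radius $r=ct^\alpha$ of fixed angular extent $\tfrac{\pi}{6}$, so $d=2r\sin\!\bigl(\tfrac{\pi}{12(n'-1)}\bigr)=\Theta(r/n')=\Theta\!\bigl(t^{\alpha(1-\gamma)}\bigr)$, which tends to $0$ since $\gamma>1$ (established in~\Cref{lem:redcn_size}) and $\alpha=1/\epsilon\ge1$. Hence one must check that the scaling still dominates: using $N\ge n'=\Omega(t^{\gamma\alpha})$ and $n\ge1$ we get $\sigma=12N\cdot 200nt=\Omega\!\bigl(t^{\gamma\alpha+1}\bigr)$, so
\[
\sigma\cdot d\ =\ \Omega\!\bigl(t^{\gamma\alpha+1}\bigr)\cdot\Theta\!\bigl(t^{\alpha(1-\gamma)}\bigr)\ =\ \Omega\!\bigl(t^{\alpha+1}\bigr)\ \longrightarrow\ \infty\quad\text{as }t\to\infty.
\]
Thus $\sigma\cdot{\sf dist}(x,y)\ge\sigma d\ge\sqrt2$ once $t$ exceeds some absolute constant; the finitely many small-$t$ instances are trivial X3C instances (indeed already no-instances when $t<n$) and can be ignored, or absorbed by taking the constants $c,c'$ in Algorithm $\mathcal A$ large enough. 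Combining (i)--(iii), $\phi$ is injective on $X'(\mathbb{F},\epsilon)$, so $X'_d(\mathbb{F},\epsilon)$ has exactly $N$ distinct points, completing the proof. The main obstacle is therefore not any single hard idea but the bookkeeping in step (iii): one must keep track of the competing polynomial factors $r$, $n'$, $N$ and $nt$ and confirm that the net exponent $\alpha+1$ is positive.
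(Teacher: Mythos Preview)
Your proposal is correct and follows essentially the same approach as the paper: bound the minimum inter-terminal distance in $X'(\mathbb{F},\epsilon)$ from below, multiply by the scaling factor $\sigma$, and observe that the product grows like $\Omega(t^{\alpha+1})$, so rounding cannot collapse distinct points. The paper's proof is terser---it simply asserts that the minimum occurs between consecutive points of $\mathrm{Circ}(D,r,n')$ and jumps to the same $\Omega(nt^{\alpha+1})$ conclusion---whereas you make the three-case split explicit and compute the arc spacing $\Theta(t^{\alpha(1-\gamma)})$ more carefully.
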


\begin{proof}
    The minimum distance between any two points in $X'(\mathbb{F},\epsilon)$ is that between two consecutive points of $\mathrm{Circ}(D,r,n')$, which is $\OO(\frac{1}{t^\gamma})$. Recall~\Cref{lem:redcn_size} which establishes that $N = \OO(t^{\gamma\alpha})$. So, the minimum distance between any two points in $X'_{d}(\mathbb{F},\epsilon)$ is $\OO(N \cdot nt \cdot \frac{1}{t^\gamma}) = \OO(nt^{\alpha+1})$. Because of the substantial distance obtained between points after scaling, the rounding will not cause any distinct points of $X'_{d}(\mathbb{F},\epsilon)$ to coincide. Therefore, the number of points remains unchanged, i.e.~$|X'_{d}(\mathbb{F},\epsilon)| = |X'(\mathbb{F},\epsilon)| = N$.
\end{proof}

Now we present the following lemma for the constructed DESMT instance $X'_{d}(\mathbb{F},\epsilon)$ analogous to~\Cref{lem:redcn_size} for the ESMT instance $X'(\mathbb{F},\epsilon)$.

\begin{lemma}\label{lem:desmt_redn_size}
    The reduced DESMT instance $X'_{d}(\mathbb{F},\epsilon)$ constructed is an $\Omega(N^\epsilon)$-Almost Convex Set, where  $N = |X'_{d}(\mathbb{F},\epsilon)|$.
\end{lemma}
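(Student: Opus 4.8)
The plan is to regard the map $X'(\mathbb{F},\epsilon) \mapsto X'_d(\mathbb{F},\epsilon)$ as the homothety of factor $\sigma := 12N\cdot 200nt$ followed by coordinate-wise rounding up to the nearest integer. A homothety is an affine bijection, so $\sigma X'(\mathbb{F},\epsilon)$ has exactly the same convex-hull combinatorics as $X'(\mathbb{F},\epsilon)$; in particular, by~\Cref{lem:redcn_size}, it has $\Omega(N^\epsilon)$ points strictly inside its convex hull. By~\Cref{lem:numpts_same} the rounding step collapses no two points and $|X'_d(\mathbb{F},\epsilon)| = N$, so it suffices to show that every point strictly interior to $\mathrm{CH}(\sigma X'(\mathbb{F},\epsilon))$ is still strictly interior to $\mathrm{CH}(X'_d(\mathbb{F},\epsilon))$; that alone yields $\Omega(N^\epsilon)$ points of $X'_d(\mathbb{F},\epsilon)$ strictly inside its convex hull, which is the assertion.

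First I would bound from below the quantity $\delta := \min_p \mathrm{dist}\big(p,\ \partial\,\mathrm{CH}(X'(\mathbb{F},\epsilon))\big)$, where $p$ ranges over the points of $X'(\mathbb{F},\epsilon)$ that lie strictly inside $\mathrm{CH}(X'(\mathbb{F},\epsilon))$. All such points belong to the bounding rectangle $PDQC$, which has $\OO(1)$ diameter. Their distance to the two cone slopes $DE$, $DF$, and to the hull edges $EP_i,FP_j$ incident to the cone's extreme points, is $\Omega(1)$: the cone fans out with half-angle $15^\circ$ over radius $r = \Theta(t^{\alpha})$, which dwarfs $\mathrm{diam}(PDQC)$, so these edges are nearly parallel to the slopes, and, since $\angle EDP = \angle FDQ = 120^\circ$, the rectangle sits in a wedge at $D$ bounded away from the cone's wedge. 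The remaining $\OO(t)$ boundary edges of $\mathrm{CH}(X'(\mathbb{F},\epsilon))$ are contributed by the terminator and crossover gadgets of $X(\mathbb{F})$; here the distance from an interior gadget point to any such edge is bounded below by the constant-scale feature separation of the gadget construction of~\cite{garey1977complexity} (the gadgets have fixed bounded size and fixed angles, and $\epsilon$, hence $\alpha = 1/\epsilon$, is a constant). Hence $\delta \ge 1/\mathrm{poly}(t)$, with a polynomial of constant degree and no dependence on $n$ beyond $n \le t$.

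Next, in $\sigma X'(\mathbb{F},\epsilon)$ every interior point $p$ has the property that in every unit direction $u$ some point of the set has signed projection onto $u$ at least $\sigma\delta$ larger than that of $p$ (this is equivalent to $p$ being at distance $\ge \sigma\delta$ from the hull boundary). Rounding moves $p$ and every point of the set by less than $\sqrt 2$, so afterwards some point still has signed projection at least $\sigma\delta - 2\sqrt 2$ larger than that of the rounded copy of $p$, in every direction $u$; if $\sigma\delta > 2\sqrt 2$ this is positive for all $u$, i.e.\ the rounded copy of $p$ stays strictly inside $\mathrm{CH}(X'_d(\mathbb{F},\epsilon))$. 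It remains to guarantee $\sigma\delta > 2\sqrt 2$: since $N = |X'(\mathbb{F},\epsilon)| = \Theta(t^{\gamma\alpha})$ (the cone dominates because $\alpha = 1/\epsilon \ge 1$), we get $\sigma = 12N\cdot 200nt = \Theta(n\,t^{\gamma\alpha+1})$, whose degree in $t$ exceeds that of $1/\delta$, and the constant $c'$ defining $n' = c't^{\gamma\alpha}$ (hence $N$, hence $\sigma$) may be taken as large as needed. Therefore every interior point of $X'(\mathbb{F},\epsilon)$ maps to an interior point of $X'_d(\mathbb{F},\epsilon)$, and with~\Cref{lem:numpts_same} the lemma follows.

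The step that needs the most care is the lower bound $\delta \ge 1/\mathrm{poly}(t)$: it combines (i) importing the constant-scale separation guarantees implicit in the \textsc{DESMT} reduction of~\cite{garey1977complexity} for the gadget part, and (ii) verifying that the two long convex-hull edges $EP_i$ and $FP_j$ of $X'(\mathbb{F},\epsilon)$ cannot come near any gadget point, which follows from the cone radius $r = \Theta(t^{\alpha})$ being asymptotically far larger than the bounded gadget region. Everything after that is routine perturbation bookkeeping, together with~\Cref{lem:numpts_same}.
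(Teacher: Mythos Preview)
Your perturbation strategy is a reasonable alternative in spirit, but there is a genuine gap, and the paper takes a quite different route that avoids it.

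The gap is the claimed bound $\delta \ge 1/\mathrm{poly}(t)$. You assert this by appealing to ``constant-scale feature separation of the gadget construction of~\cite{garey1977complexity}'', but you do not actually prove it, and doing so would require a careful case analysis of every hull edge of $X'(\mathbb{F},\epsilon)$ against every interior gadget point (including points inside terminators that sit just below a hull edge joining two terminator peaks, and points near $D$ relative to the long edges $EP_i$, $FP_j$). Your argument also rests on the statement that the bounding rectangle $PDQC$ has $\OO(1)$ diameter; this is false in the Garey--Graham--Johnson construction, where the layout has dimensions polynomial in $n$ and $t$. Without a rigorous $\delta$ bound you cannot conclude $\sigma\delta > 2\sqrt 2$, and the appeal to ``take $c'$ as large as needed'' does not help, since $c'$ is a fixed constant of the reduction and cannot be chosen after seeing the (unproved) degree of $1/\delta$.

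The paper sidesteps all of this by exploiting the specific form of the map: it is coordinate-wise, so points sharing the same $y$-coordinate before rounding still share it afterwards (and likewise for $x$). Hence every horizontal (resp.\ vertical) connecting row remains a set of collinear points in $X'_d(\mathbb{F},\epsilon)$, and therefore no interior row point can become a hull vertex: there are always terminals strictly above, below, left, and right of it. The same ``surrounded on all four sides'' argument handles the interior square and crossover gadgets. What remains on the hull can only come from the $6t+2$ terminators and the $\OO(n)$ crossovers at row ends, each contributing $\OO(1)$ points, so the hull gains at most $\OO(t)$ points; together with the $\OO(t^{\gamma\alpha})$ cone points this still leaves $\Omega(t^{\gamma}) = \Omega(N^{\epsilon})$ points strictly inside. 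No distance bound like $\delta$ is ever needed. If you want to salvage your approach, you would need to replace the hand-wave about $\delta$ with an explicit constant lower bound derived from the fixed gadget geometry; but the paper's collinearity argument is both shorter and more robust.
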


\begin{proof}
    From~\Cref{lem:redcn_size}, we know that the reduced ESMT instance $X'(\mathbb{F},\epsilon)$ has $\Omega(N^\epsilon)$ points inside its convex hull $\mathrm{CH}(X'(\mathbb{F},\epsilon))$, and $N = |X'(\mathbb{F},\epsilon)|$. The number of points after conversion remains the same by~\Cref{lem:numpts_same}. We need to show that after conversion, except for the points of $\OO(t)$ gadgets, no other points inside the convex hull $\mathrm{CH}(X'(\mathbb{F},\epsilon))$ lie on the new convex hull $\mathrm{CH}(X'_{d}(\mathbb{F},\epsilon))$. The number of points in each of the $\OO(t)$ anomalous gadgets are constant in number, and hence not too many points from the interior of $\mathrm{CH}(X'(\mathbb{F},\epsilon))$ can lie on $\mathrm{CH}(X'_{d}(\mathbb{F},\epsilon))$. 
    
    After conversion, all the points on a horizontal connecting row have the same $y$-coordinate, as they initially had the same $y$-coordinate and therefore undergo the same transformation. Thus, all the points on a horizontal connecting row still lie on a horizontal line segment in $X'_{d}(\mathbb{F},\epsilon)$. Similarly, all the points on a vertical connecting row still lie on a vertical line segment in $X'_{d}(\mathbb{F},\epsilon)$. This implies that none of the points on the connecting rows can be a part of $\mathrm{CH}(X'_{d}(\mathbb{F},\epsilon))$ as there can be no line passing through them that also contains all terminal points on one side of it. 

    The same thing holds for the square and hexagonal gadgets (crossovers) as well, except the hexagonal gadgets placed at the beginning or end of any row. This is because all the points which are a part of these square and hexagon gadgets are surrounded by connecting row points all four sides, above, below, left and right. So again, only the terminators and the hexagonal gadgets appearing at the beginning or end of any row contribute to $\mathrm{CH}(X'_{d}(\mathbb{F},\epsilon))$.

    Now, since we had adjusted the number of points in the long rows of the terminators and hexagonal gadgets such that their lengths and breadths are some constants, the number of points in each of the terminators and hexagonal gadgets can be bounded by some constant as the minimum distance between any two consecutive points on the long rows or standard rows is at least $\frac{1}{11}$. Therefore, each of these gadgets contribute some constantly many points to $\mathrm{CH}(X'_{d}(\mathbb{F},\epsilon))$.
    
    As we have seen in~\Cref{lem:redcn_size}, the number of terminators is $6t+2$ and the number of hexagonal gadgets corresponding to the beginning or end of any row is at most $6n$. Therefore, the number of points contributed by the terminators and the hexagonal gadgets placed at the beginning or the end of any row, to $\mathrm{CH}(X'_{d}(\mathbb{F},\epsilon)$ is $\OO(t+n) = \OO(t)$, as $n \leq t$. Even if all the points in $\mathrm{Circ}(D,r,n')$ lie on the new convex hull $\mathrm{CH}(X'_{d}(\mathbb{F},\epsilon)$, we have $\Omega(t^\gamma) = \Omega(N^\epsilon)$ points inside it. Thus we are done.
\end{proof}

We get the following theorem from~\Cref{lem:valid_desmt,lem:numpts_same,lem:desmt_redn_size}.

\begin{theorem}\label{thm:desmt_redn}
    The instance $X'_{d}(\mathbb{F},\epsilon)$ constructed is a valid DESMT instance on an $\Omega(N^\epsilon)$-Almost Convex Set, where  $|X'_{d}(\mathbb{F},\epsilon)| = |X'(\mathbb{F},\epsilon)| = N$.
\end{theorem}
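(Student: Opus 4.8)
\textbf{Proof proposal for Theorem~\ref{thm:desmt_redn}.}
The plan is to simply assemble the three preceding lemmas, since each one isolates a single property that together constitute the statement. First I would invoke Lemma~\ref{lem:valid_desmt}: by construction every coordinate of every point of $X'_d(\mathbb{F},\epsilon)$ is obtained by applying $x \mapsto \lceil 12N\cdot 200nt\cdot x\rceil$ to a coordinate of $X'(\mathbb{F},\epsilon)$, hence is an integer, so $X'_d(\mathbb{F},\epsilon)$ is a legitimate input to DESMT. Second I would invoke Lemma~\ref{lem:numpts_same}: the minimum interpoint distance in $X'(\mathbb{F},\epsilon)$ is $\OO(1/t^\gamma)$ (realised between consecutive points of $\mathrm{Circ}(D,r,n')$), and after multiplying by the scale factor $12N\cdot 200nt = \OO(t^{\gamma\alpha}\cdot nt)$ the separation becomes $\omega(1)$, so the integer rounding cannot merge two distinct points; therefore $|X'_d(\mathbb{F},\epsilon)| = |X'(\mathbb{F},\epsilon)| = N$.

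Third, and this is where the real content sits, I would invoke Lemma~\ref{lem:desmt_redn_size}, which establishes that $X'_d(\mathbb{F},\epsilon)$ is an $\Omega(N^\epsilon)$-Almost Convex Set: combining the previous two lemmas with the fact that scaling and rounding preserves collinearity along each (horizontal or vertical) connecting row, one shows that only the $\OO(t)$ anomalous gadgets (the $6t+2$ terminators and the $\OO(n)$ hexagonal gadgets at row ends) can contribute to $\mathrm{CH}(X'_d(\mathbb{F},\epsilon))$, each contributing $\OO(1)$ points, so the convex hull of $X'_d(\mathbb{F},\epsilon)$ again contains $\OO(t^{\gamma\alpha})$ points while $\Omega(t^\gamma) = \Omega(N^{1/\alpha}) = \Omega(N^\epsilon)$ points lie strictly inside it, using $\alpha = 1/\epsilon$. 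Chaining these three facts yields precisely the assertion that $X'_d(\mathbb{F},\epsilon)$ is a valid DESMT instance on an $\Omega(N^\epsilon)$-Almost Convex Set with $|X'_d(\mathbb{F},\epsilon)| = |X'(\mathbb{F},\epsilon)| = N$, which completes the proof.

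There is no genuine obstacle in this last step; the deduction is purely a bookkeeping combination of Lemmas~\ref{lem:valid_desmt}, \ref{lem:numpts_same} and \ref{lem:desmt_redn_size}. If anything, the subtlety that needs care is making sure the quantitative bounds line up: the scale factor must be large enough (relative to $1/t^\gamma$) to guarantee Lemma~\ref{lem:numpts_same}, and the constants $c, c'$ hidden in $r = ct^\alpha$ and $n' = c't^{\gamma\alpha}$ must be chosen large enough that $|\mathrm{Circ}(D,r,n')| \geq 12$ (as used in Lemma~\ref{thm:steiner_tree_part}) while still keeping $N = \OO(t^{\gamma\alpha})$ so that the exponent relation $\Omega(t^\gamma) = \Omega(N^\epsilon)$ holds exactly. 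Since all of these have already been verified in the cited lemmas, the proof of the theorem itself is immediate.
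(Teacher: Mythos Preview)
Your proposal is correct and matches the paper's approach exactly: the paper's proof of Theorem~\ref{thm:desmt_redn} is nothing more than the single line ``We get the following theorem from Lemmas~\ref{lem:valid_desmt}, \ref{lem:numpts_same} and \ref{lem:desmt_redn_size},'' and your write-up simply unpacks that same combination. If anything, you have supplied more detail than the paper does.
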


Following Theorems 3 and 4 in~\cite{garey1977complexity}, we get that the DESMT problem is NP-Complete for $\Omega(N^\epsilon)$-Almost Convex Sets, where $N$ is the total number of terminals. Since we get the reduced instance $X'_{d}(\mathbb{F},\epsilon)$ from the X3C instance $(n,\mathbb{F})$, the DESMT problem is strongly NP-complete for $\Omega(N^\epsilon)$-Almost Convex Sets, and does not admit any FPTAS.

Using Theorem 5 of~\cite{garey1977complexity}, we get that if the ESMT problem has an FPTAS, then the X3C problem can be solved in polynomial time. The Theorem also applies for our case of $\Omega(N^\epsilon)$-Almost Convex Sets. Therefore, we get the following theorem,

\begin{theorem}\label{thm:no_fptas}
    There does not exist any FPTAS for the ESMT problem on $f(n)$-Almost Convex Sets of $n$ terminals, where $f(n) = \Omega(n^\epsilon)$ and $\epsilon \in (0,1]$, unless \pnp.
\end{theorem}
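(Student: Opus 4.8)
The plan is to assemble the machinery built up in this section into the three-link chain of reductions of Garey et al.~\cite{garey1977complexity}, while tracking throughout that every instance produced remains an $\Omega(N^\epsilon)$-Almost Convex Set.

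First I would invoke~\Cref{thm:redn_final}: the map $(n,\mathbb{F}) \mapsto X'(\mathbb{F},\epsilon)$ is a polynomial-time reduction from X3C to \ESMT with a quantified gap, namely the SMT length is at most $f(n,t,\hat C) + |\mathcal{T}_1|$ in the YES case and at least $f(n,t,\hat C) + \tfrac{1}{200nt} + |\mathcal{T}_1|$ in the NO case. By~\Cref{lem:redcn_size} the produced instance is an $\Omega(N^\epsilon)$-Almost Convex Set, so \ESMT restricted to such sets is already NP-hard. To land inside NP, I would pass to the discrete version: by~\Cref{thm:desmt_redn} the discretized instance $X'_{d}(\mathbb{F},\epsilon)$ is a valid DESMT instance on an $\Omega(N^\epsilon)$-Almost Convex Set with $|X'_{d}(\mathbb{F},\epsilon)| = N = \OO(t^{\gamma\alpha})$, and the scaling factor $12N\cdot 200nt$ is chosen precisely so that the additive gap $\tfrac{1}{200nt}$ survives coordinate rounding as a gap of at least $1$ in the rounded length $|\cdot|_d$. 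Thus, mirroring Theorems~3 and~4 of~\cite{garey1977complexity}, DESMT on $\Omega(N^\epsilon)$-Almost Convex Sets is NP-complete; and since the coordinates of $X'_{d}(\mathbb{F},\epsilon)$ are bounded by a polynomial in the size of the X3C instance (for each fixed $\epsilon$, $N = \OO(t^{\gamma/\epsilon})$ is polynomial in $t$), it is in fact \emph{strongly} NP-complete. A strongly NP-complete optimization problem admits no FPTAS unless \pnp, which gives the statement for DESMT.

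Finally I would transfer the inapproximability from DESMT back to \ESMT exactly as in Theorem~5 of~\cite{garey1977complexity}. Suppose an FPTAS existed for \ESMT on $\Omega(N^\epsilon)$-Almost Convex Sets. Given an X3C instance, build $X'(\mathbb{F},\epsilon)$; its optimal SMT length $L^{*} := f(n,t,\hat C) + |\mathcal{T}_1|$ is bounded above by some polynomial $\mathrm{poly}(N)$, using that $|\mathcal{T}_1|$ (the SMT length of $\mathrm{Cone}(D,r,n')$ from~\Cref{thm:steiner_tree_part}) is $\OO(r) = \mathrm{poly}(N)$ and that $f(n,t,\hat C)$ is polynomially bounded as in~\cite{garey1977complexity}. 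Running the FPTAS with $\epsilon' = \tfrac{1}{400\, nt\cdot \mathrm{poly}(N)}$ — which is $1/\mathrm{poly}(N)$, so the FPTAS still runs in time polynomial in the X3C input — returns a value within an additive $\epsilon' L^{*} < \tfrac{1}{400nt}$ of $L^{*}$, enough to decide on which side of the gap $\tfrac{1}{200nt}$ the true optimum lies, hence to decide X3C in polynomial time. This forces \pnp, completing the proof.

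The main obstacle I expect is precisely the bookkeeping in this last step: one must verify that both $L^{*}$ and the terminal count $N$ are polynomially bounded in the original X3C input size, so that the required parameter $\epsilon'$ is $1/\mathrm{poly}$ and the FPTAS indeed runs in polynomial time. This is where the specific choices $r = ct^{\alpha}$, $n' = c't^{\gamma\alpha}$ and the additive-gap form of~\Cref{thm:redn_final} are essential: the conic set inflates the instance but only polynomially, so the $\tfrac{1}{200nt}$ gap remains a $1/\mathrm{poly}$-fraction of the optimum and is therefore detectable by an FPTAS — and the impossibility of such detection is exactly what rules out the FPTAS.
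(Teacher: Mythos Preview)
Your proposal is correct and follows essentially the same route as the paper: establish strong NP-completeness of DESMT on $\Omega(N^\epsilon)$-Almost Convex Sets via~\Cref{thm:desmt_redn} together with Theorems~3 and~4 of~\cite{garey1977complexity}, then transfer the inapproximability to ESMT via Theorem~5 of~\cite{garey1977complexity}. The paper's own proof is terser, simply asserting that these theorems of Garey et al.\ carry over to the constructed almost-convex instances, whereas you helpfully unpack the content of Theorem~5 by exhibiting the explicit accuracy parameter $\epsilon'$ and verifying the polynomial bounds on $L^{*}$ and $N$ that make the argument go through.
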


\section{Conclusion}
 In this paper, we first study ESMT on vertices of $2$-CPR $n$-gons and design a polynomial time algorithm. It remains open to design a polynomial time algorithm for ESMT on $k$-CPR $n$-gons, or show NP-hardness for the problem. Next, we study the problem on $f(n)$-Almost Convex Sets of $n$ terminals. For this NP-hard problem, we obtain an algorithm that runs in $2^{\OO(f(n)\log n)}$ time. We also design an FPTAS when $f(n) = \OO(\log n)$. On the other hand, we show that there cannot be an FPTAS if $f(n) = \Omega(n^{\epsilon})$ for any $\epsilon \in (0,1]$, unless \pnp. The question of existence of FPTASes when $f(n)$ is a polylogarithmic function remains open.

\bibliography{refs}

\end{document}